\tikzset{bvert/.style={draw,circle,fill=black,minimum size=5pt,inner sep=0pt}  } 
\tikzset{wvert/.style={draw,circle,fill=white,minimum size=5pt,inner sep=0pt}  } 
\tikzset{ivert/.style={rectangle,minimum size=3pt,inner sep=1pt}  } 
\setlist[enumerate, 1]{label=(\roman*)}
\setlist[itemize]{leftmargin=1.5em}
\setlist[description]{leftmargin=1em}
\newcommand{\lat}{{\mathcal{L}}}
\newcommand{\hlat}{{\hat{{\mathcal{L}}}}}
\newcommand{\oct}{{\mathcal{O}}}
\newcommand{\tet}{{\mathcal{T}}}
\newcommand{\tetb}{{\mathcal{T}^\bullet}}
\newcommand{\tetw}{{\mathcal{T}^\circ}}
\newcommand{\ub}{p^\bullet}
\newcommand{\uw}{p^\circ}
\newcommand{\Xb}{{X^\bullet}}
\newcommand{\Xw}{{X^\circ}}
\newcommand{\Wb}{{W^\bullet}}
\newcommand{\Ww}{{W^\circ}}
\newcommand{\medminus}{\scalebox{0.6}[0.7]{\(-\)}}
\newcommand{\sm}{\!\medminus}
\newcommand{\shi}[1]{{\sigma_{#1}}}
\newcommand{\shim}[1]{{\sigma_{\sm#1}}}
\newcommand{\hC}{{\hat {\mathbb C}}}
\DeclareMathOperator{\mr}{\mathrm{mr}}
\DeclareMathOperator{\cro}{cr}
\newcommand{\N}{\mathbb{N}}
\newcommand{\Z}{\mathbb{Z}}
\newcommand{\R}{\mathbb{R}}
\newcommand{\C}{\mathbb{C}}
\newcommand{\CP}{\mathbb{C}\mathrm{P}}
\theoremstyle{plain}
\theoremstyle{definition}
\newtheorem{theorem}{Theorem}[section]
\newtheorem{corollary}[theorem]{Corollary}
\newtheorem{lemma}[theorem]{Lemma}
\newtheorem{definition}[theorem]{Definition}
\newtheorem{question}[theorem]{Question}
\theoremstyle{remark}
\newtheorem{remark}[theorem]{Remark}
\title{\vspace{-2.3cm}Möbius invariant Y-systems \\ (cluster structures) for Miquel dynamics}
\author{Niklas C. Affolter \bigskip\\
	Institut f\"ur Mathematik, TU Berlin, \\
	Straße des 17.\@ Juni 136, 10623 Berlin, Germany \bigskip\\	
	Institut für Diskrete Mathematik und Geometrie, TU Wien,  \\
	Wiedener Hauptstraße 8-10/104, 1040 Wien, Austria	
}
\date{October 23, 2024}
\begin{document}

\maketitle

\noindent
\textbf{Abstract.} Miquel dynamics is a discrete time dynamics for circle patterns, which relies on Miquel's six circle theorem. Previous work shows that the evolution of the circle centers satisfy the dSKP equation on the octahedral lattice $A_3$. As a consequence, Miquel dynamics is a discrete integrable system. Moreover, Miquel dynamics give rise to a real-valued cluster structure. The evolution of the cluster variables under Miquel dynamics is also called a Y-system in the discrete integrable systems community. If the Y-system is real positive-valued then the circle pattern is accompanied by an invariant dimer model, an exactly solvable model studied in statistical physics. However, while circle patterns are Möbius invariant, the circle centers and the Y-system are not Möbius invariant, which violates the so called transformation group principle. In this article we show that half the intersection points satisfy the dSKP equation as well, and we introduce two new real-valued Y-systems for Miquel dynamics that involve only the intersection points. Therefore, the new Y-systems are Möbius invariant, and thus satisfy the transformation group principle. We also show that the circle centers and intersection points combined satisfy the dSKP equation on the 4-dimensional octahedral lattice $A_4$. In addition, we present two more complex-valued Y-systems for Miquel dynamics, which are real-valued in and only in the case of integrable circle patterns. We also investigate the special cases of harmonic embeddings and s-embeddings, which relate to the spanning tree and Ising model respectively.

\tableofcontents

\newpage

\section{Introduction}

A \emph{circle pattern} is a map from $\Z^2$ to the complex plane, such that the image of each unit-square is inscribed in a circle. \emph{Miquel dynamics} is a discrete time dynamics on circle patterns. In other words, given an initial circle pattern, Miquel dynamics is a rule to produce a bi-infinite sequence of circle patterns from the initial circle pattern. In alternating fashion, Miquel dynamics replaces every second circle with the circle through the other four intersection points of the adjacent circles see Figure~\ref{fig:miquelcps}. The fact that such a circle always exists is a classic theorem in circle geometry called \emph{Miquel's theorem} \cite{amiquel}.
Miquel dynamics was invented by Richard Kenyon and formally introduced  by Ramassamy \cite{ramassamymiquel}. It was conjectured that Miquel dynamics is in some sense integrable. In the biperiodic $2\times 2$ case, a first integrability result was obtained by Glutsyuk and Ramassamy \cite{grmiquel}.

The first general integrability results were obtained by the author \cite{amiquel} and independently by Kenyon, Lam, Ramassamy and Russkikh \cite{klrr}. The first observation was that the circle centers constitute a \emph{dSKP lattice}, that is a solution of the \emph{dSKP equation}, short for \emph{discrete Schwarzian Kadomtsev–Petviashvili equation}. This equation is one of the five discrete integrable equations -- in the sense of multi-dimensional consistency -- on the octahedral lattice \cite{abs}. The second observation was that it is possible to associate certain $Y$-variables to a circle pattern, such that the $Y$-variables change under Miquel dynamics according to the rules of \emph{mutation} in a \emph{cluster algebra} \cite{fzclusteralgebra}. This is why the $Y$-variables are also called \emph{cluster variables}. Let us add that in the discrete integrable systems community, the equivalent observation is that the evolution of the $Y$-variables under Miquel dynamics constitutes a \emph{Y-system}. Moreover, in a general dSKP lattice the $Y$-variables are complex-valued. The third observation was that the $Y$-variables are \emph{real-valued} if and only if the dSKP lattice corresponds to the centers of a circle pattern. Moreover, with some additional embedding assumptions on the circle pattern, the $Y$-variables are actually \emph{positive real}. As a result, the $Y$-variables may be used to define a \emph{dimer model}, an exactly solvable model defined studied in probability theory \cite{kenyondimerintro}. The partition functions of this model turn out to be invariants of Miquel dynamics. The dimer models associated to circle patterns in this way have been the subject of further study \cite{clrlorentz, clrtembeddings, bnrtemb, craztec}. Let us add that Kenyon et al.\ \cite{klrr} have also provided existence and uniqueness results for circle patterns with given $Y$-variables.

Circle patterns are considered to be a discretization of conformal maps, going back to ideas of Koebe \cite{koebecp} and Thurston \cite{thurstoncp}. For example, one may consider two circle patterns to be discrete conformally equivalent if they have the same intersection angles, see \cite{bsvariationalcp}. Or one may consider circle patterns to be discrete conformally equivalent if they have the same length cross-ratios, see  \cite{bpsdiscretehyperbolic}. As is well known, conformal maps are invariant under Möbius transformations. The so called \emph{transformation group principle} for structure-preserving discretizations states that discretizations should exhibit the same transformation invariance as the smooth object \cite{bsorganizing}. And indeed, the two aforementioned discretizations of conformal maps are Möbius invariant. From that perspective, the current description of Miquel dynamics is insufficient: the dSKP lattice associated to Miquel dynamics is not Möbius invariant. And the $Y$-variables associated to a dSKP lattice as in \cite{amiquel,klrr} are also not Möbius invariant. This also implies that the associated dimer model is not Möbius invariant.

The main purpose of this paper is to present a dSKP lattice and a real-valued Y-system for Miquel dynamics which are both Möbius invariant and thus adhere to the transformation group principle. 
In the first step, we show that either half of the intersection points of the circle patterns under Miquel dynamics are also a dSKP lattice. Since the intersection points of a circle pattern are Möbius invariant, this dSKP lattice is Möbius invariant. In the second step, we introduce the \emph{$X$-variables}, which are given by certain cross-ratios of intersection points. We show that the $X$-variables constitute a Y-system. Since cross-ratios are Möbius invariant, so are the $X$-variables. We show that this Y-system is a specialization of the one introduced by Adler, Bobenko and Suris \cite{abs}. In fact, we show that a dSKP lattice corresponds to half the intersection points of a circle pattern if and only if the $X$-variables are real. Additionally, in upcoming work with Mühlhoff \cite{ammiquel} we are able to show that circle patterns which satisfy the same embedding assumptions as in \cite{klrr} have $X$-variables that are \emph{positive real}. 

There are other geometric dynamics besides Miquel dynamics for which analogous cluster algebra results were obtained, for example: the \emph{pentagram map} \cite{glickpentagram}, \emph{polygon recutting} \cite{izosimovrecutting} and \emph{cross-ratio dynamics} \cite{agrcr}. Note that in each case there is a natural transformation group that commutes with the dynamics: $\mathrm{PGL}(3, \R)$, $\mathrm {Aff}(\C)$ and $\mathrm{PGL}(2, \C)$, respectively. In each case, the cluster variables are invariant under these transformations. In that sense, the new description of Miquel dynamics via $X$-variables is more in line with the literature on other examples than the previous description of Miquel dynamics via $Y$-variables.

Note that to a circle pattern under Miquel dynamics, we are now able to associate two different dSKP lattices via each half of the intersection points, and one dSKP lattice via the circle centers. Moreover, we show that to each dSKP lattice, there are actually two independent sets of $X$-variables, and therefore two different Y-systems. Additionally, there is the Y-system as introduced in \cite{amiquel, klrr}. Hence, there are $9 = 3\times 3$ Y-systems associated to Miquel dynamics in total, although only three of those are real-valued. We think it is an interesting aspect of our results, that a dynamical system may come with several Y-systems instead of just one, with different transformation symmetries and some of them real-valued, some complex-valued. In particular, this shows that there is not necessarily a canonical Y-system. Of course, it is natural to ask how all these lattices and variables are related. On the level of dSKP lattices, we are able to show that the three dSKP lattices may actually be considered to be \emph{one} dSKP lattice but on a 4-dimensional octahedral lattice instead of a 3-dimensional one. Thus we give a unifying description of the dSKP lattice in \cite{amiquel, klrr} and the new ones we present here.

We also think that the unified view on the dSKP lattices may prepare the way to solve another interesting question: do the various Y-systems also satisfy some functional relations? Let us point out that in \cite{athesis} there is a general method to associate a whole sequence of Y-systems to many geometric systems. However, it is not immediately clear if this method relates to the different Y-systems presented here. The question of relations between the Y-systems is also of particular interest because each Y-system -- viewed as a cluster structure -- comes with a canonical Poisson structure \cite{gsvpoissonpaper}. Moreover, in the biperiodic case there is also a full set of Hamiltonians for this Poisson structure, via the \emph{dimer integrable system} \cite{gkdimers}. Hence, one may also ask how these Poisson and Hamiltonian structures are related. It might turn out that they are equivalent, or that they give rise to a bi-Hamiltonian structure, or something else altogether.

Finally, there are some cases of circle patterns which are of special interest with regard to Miquel dynamics and also independently in the literature. The first special case are \emph{circle packings} \cite{stephensoncp}, the center nets of which are called \emph{s-embeddings} \cite{chelkaksembeddings, chelkaksgraphs}. The second special case are circle patterns related to \emph{harmonic embeddings} or \emph{Tutte embeddings} \cite{biskupharmonic, tutteembedding}. It was Kenyon et al.\ \cite{klrr} that showed how in these two special cases the $Y$-variables satisfy certain local constraints, which imply that the dimer model may be specialized to the \emph{Ising model} and the \emph{spanning tree model} respectively, see \cite{klrr} and \cite{clrtembeddings} for more background. Thus the study of these two special cases is primarily motivated from the statistical mechanics point of view. Of course, circle packings are also of general interest in discrete analytic function theory, see \cite{stephensoncp}. Note that these two special cases are \emph{not} invariant under Miquel dynamics. From the integrable systems perspective, s-embeddings are characterized by the fact that the evolution of Miquel dynamics has a global (time-)symmetry, which is also reflected in the $Y$-variables. As a consequence of the symmetry, we show that for s-embeddings the two different $X$-variables coincide. In the case of harmonic embeddings it is not clear if there is some global symmetry. Instead, we show that in this case one set of $X$-variables coincides with the $Y$-variables. The third special case we consider are \emph{integrable circle patterns} \cite{bmsanalytic}, which were introduced as an integrable discretization of discrete analytic functions. It turns out that integrable circle patterns are preserved under Miquel dynamics. Moreover, we show that in this case two more Y-systems are real-valued compared to the general case. We also show that the two -- now real-valued -- Y-systems actually coincide. The proof relies on an apparently novel identity for the two cross-ratios in a Miquel cube (Theorem~\ref{th:wconjugate}). Let us add there is another special case of circle patterns with factorizing cross-ratios \cite{bpdisosurfaces}, that comes from the theory of discrete isothermic surfaces. It is currently unclear if this special case has particular properties with respect to Miquel dynamics.

\subsection*{Plan of the paper} 

In Section~\ref{sec:main} we present the main results of the paper. We focus on Miquel dynamics as yielding a sequence of circle patterns, a viewpoint close to the previous descriptions of Miquel dynamics in \cite{amiquel,klrr} and to the descriptions of similar systems in the cluster algebra community.
In Section~\ref{sec:octahedralcombinatorics}, we introduce the combinatorics of $A_3$ as viewed on $\Z^3_+$, which we denote $\lat$, and explain the occurrence of tetrahedra $\tet$ and octahedra $\oct$ in $\lat$. In Section~\ref{sec:miquelmaps}, we put the evolution of Miquel dynamics on the lattice $\lat$, we call the resulting objects \emph{Miquel maps}. In particular, we associate circles and centers with $\lat$, and intersection points with the tetrahedra $\tet$. This is the lattice viewpoint which is more common in the discrete integrable systems community, and which is useful to provide precise statements and proofs of our results. In Section~\ref{sec:dskp}, we explain and prove the dSKP results for Miquel maps on $\lat$. In Section~\ref{sec:afour}, we show how to identify $\tet \cup \lat$ with a subset of $A_4$. Then we prove that with this identification, the combined map of centers and intersection points solves the dSKP equation on $A_4$. In Section~\ref{sec:toda}, we explain the Y-system results for $Y$- and $X$-variables on $\lat$. In Section~\ref{sec:icp}, we introduce what we call \emph{$W$-variables} for Miquel maps, which are also a Y-system, but a complex-valued one. We show that the special case of real $W$-variables is the case of so called \emph{integrable circle patterns}. Finally, in Section~\ref{sec:harmonic} we show that $X$- and $Y$-variables coincide in the special case of $h$-embeddings and in Section~\ref{sec:packings} we discuss circle packings and $s$-embeddings and the resulting symmetries in the $X$-, $Y$- and $W$-variables.

\subsection*{Acknowledgments}
The author is supported by the Deutsche Forschungsgemeinschaft (DFG) Collaborative Research Center TRR 109 ``Discretization in Geometry and Dynamics''. I would like to thank Sanjay Ramassamy, Wolfgang Schief and Boris Springborn for discussions on the topic. I would also like to thank the anonymous referees for their thorough reviews which helped to improve the presentation of the paper significantly.

\section{Main results} \label{sec:main}

\begin{figure}
	\centering
	\includegraphics[scale=0.48]{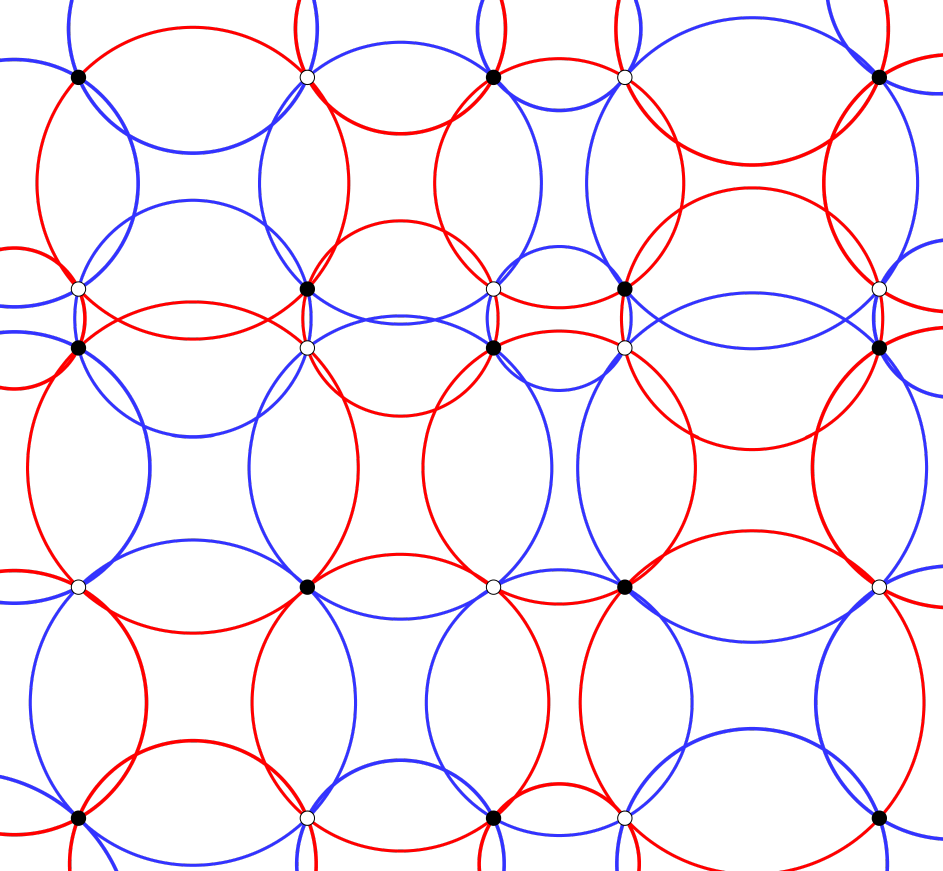}
	\hspace{4mm}
	\includegraphics[scale=0.48]{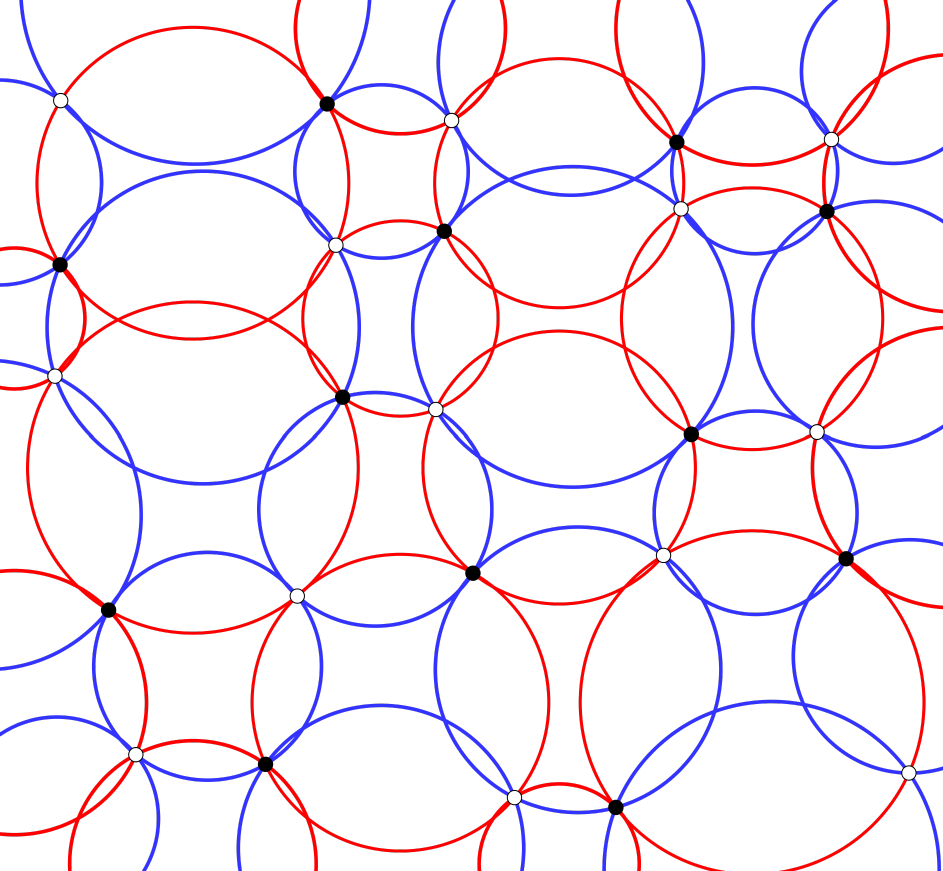}
	\caption{Miquel dynamics transforms the circle pattern $(c_0,p_0)$ on the left into the circle pattern $(c_1,p_1)$ on the right, by replacing all the blue circles.}
	\label{fig:miquelcps}
\end{figure}

Let $F(\Z^2)$ denote the faces of $\Z^2$, and let $\hC = \C \cup \{\infty\}$ denote the \emph{complex projective line}. As a set $\hC$ is also called the \emph{extended complex numbers}. Moreover, via stereographic projection $\hC$ is also in bijection with the \emph{Riemann sphere}, which is also sometimes denoted by $\hC$. We use the standard rules of arithmetic on $\hC$. We call a subset of $\hC$ a \emph{circle} if it is a circle in $\C$, or if it consists of the union of $\{\infty\}$ and a line in $\C$.

\begin{definition} \label{def:cp}
	A \emph{circle pattern} $(c,p)$ is a pair of maps
	\begin{align}
		c&: \Z^2 \rightarrow \mbox{Circles}(\hC),\\
		p&: F(\Z^2) \rightarrow \hC, 
	\end{align}
	such that $p(f) \in c(v)$ whenever $v \in f$. 
\end{definition}

It may seem counterintuitive to associate circles to vertices of $\Z^2$ and points to faces. However, for the dynamics we study it will prove useful to view $\Z^2$ as a subset of a higher-dimensional lattice, and for that purpose it is more practical to define circle patterns by associating circles to vertices of $\Z^2$ and points to faces.  

\emph{Miquel dynamics} \cite{ramassamymiquel} is a discrete time dynamics on circle patterns, and the evolution of Miquel dynamics produces a sequence of circle patterns $(c_k, p_k)_{k\in\Z}$, see also Figure~\ref{fig:miquelcps}. Let us explain how Miquel dynamics are defined, beginning with a circle pattern $(c_0,p_0)$. The vertices of $\Z^2$ may be partitioned into \emph{even vertices} and \emph{odd vertices}, where even and odd refer to the parity of the coordinate sum. Consider an even vertex $v\in V$ and its four adjacent vertices $v_1,v_2,v_3,v_4$ in counterclockwise order, see Figure~\ref{fig:miquellabels}. Let $f_{12}$ be the face incident to $v,v_1,v_2$, and $f_{23}$ the face incident to $v,v_2,v_3$, and $f_{34}$ and $f_{41}$ analogously, see also Figure~\ref{fig:miquellabels}.
The circles $c_0(v_1)$ and $c_0(v_2)$ have one intersection point $p_0(f_{12})$ on $c_0(v)$, and generically they have a second intersection point $p_1(f_{12})$. The exception is if $c_0(v_1),c_0(v_2)$ are tangent, in which case we set $p_1(f_{12}) = p_0(f_{12})$. Throughout we assume it is understood that in the case of tangent circles the second or the other intersection point is the same as the first intersection point.
\emph{Miquel's theorem} \cite{miquel} states that the four points $p_1(f_{12}), p_1(f_{23}), p_1(f_{34}), p_1(f_{41})$ are on a circle, which we denote by $c_1(v)$. By setting $c_1(v) = c_0(v)$ for all odd vertices, we obtain a new circle pattern $(c_1,p_1)$. We say the two circle patterns are related by Miquel dynamics. Note that this operation is invertible by applying the same procedure to the even circles again. However, applying the substitution to all odd circles yields a new circle pattern $(c_2, p_2)$. Therefore, by alternating even and odd substitutions we obtain a bi-infinite sequence of circle patterns $(c_k, p_k)_{k\in\Z}$, which is the forwards and backwards evolution of Miquel dynamics.

Let us denote by $t_k: \Z^2 \rightarrow \C$ the circle centers of $c_k$. We call a map $t_k$ that arises as the circle centers of a circle pattern a conical net, see for example \cite{muellerconical}. We do not need this characterization, but conical nets may be characterized by an angle condition, specifically that the two sums of opposite angles are equal. These nets were originally studied in $\R^3$ \cite{lpwywconical}, and the term ``conical'' appears because the angle condition implies that certain planes touch a cone.

Conical nets are also called \emph{Coloumb gauge} in \cite{klrr}, since they interpret the edge-vectors $t(v)-t(v')$ of a conical net as the edge-weights of a dimer model in a special complex gauge: the Coulomb gauge.
Conical nets are also called \emph{$t$-realization} in \cite{clrtembeddings}, since they are related to \emph{T-graphs} \cite{kenyonsheffield}, and also because the authors introduce a concept of \emph{$t$-holomorphicity} that generalizes \emph{$s$-holomorphicity} \cite{smirnovtowardsconformal, smirnovrandomcluster, csisoradial}. The concept of $s$-holomorphicity was used to obtain the celebrated conformal invariance of the 2D Ising model \cite{csising}. A special case of $t$-realizations are \emph{$t$-embeddings}, which are conical nets with an additional embedding property. More specifically, a $t$-embedding is a conical net such that the images of the faces are convex and non-overlapping, and the images of adjacent vertices are distinct.
Note that the notions of Coulomb gauge and $t$-embeddings were developed independently at the same time, which explains the two different terms. Moreover, it appears that the respective authors were not aware of the older and more established term of conical nets introduced in \cite{lpwywconical} and in the planar case in \cite{muellerconical}, see also \cite[Chapter~3.3]{ddgbook} for additional references.

\begin{figure}
	\centering
	\includegraphics[scale=0.4]{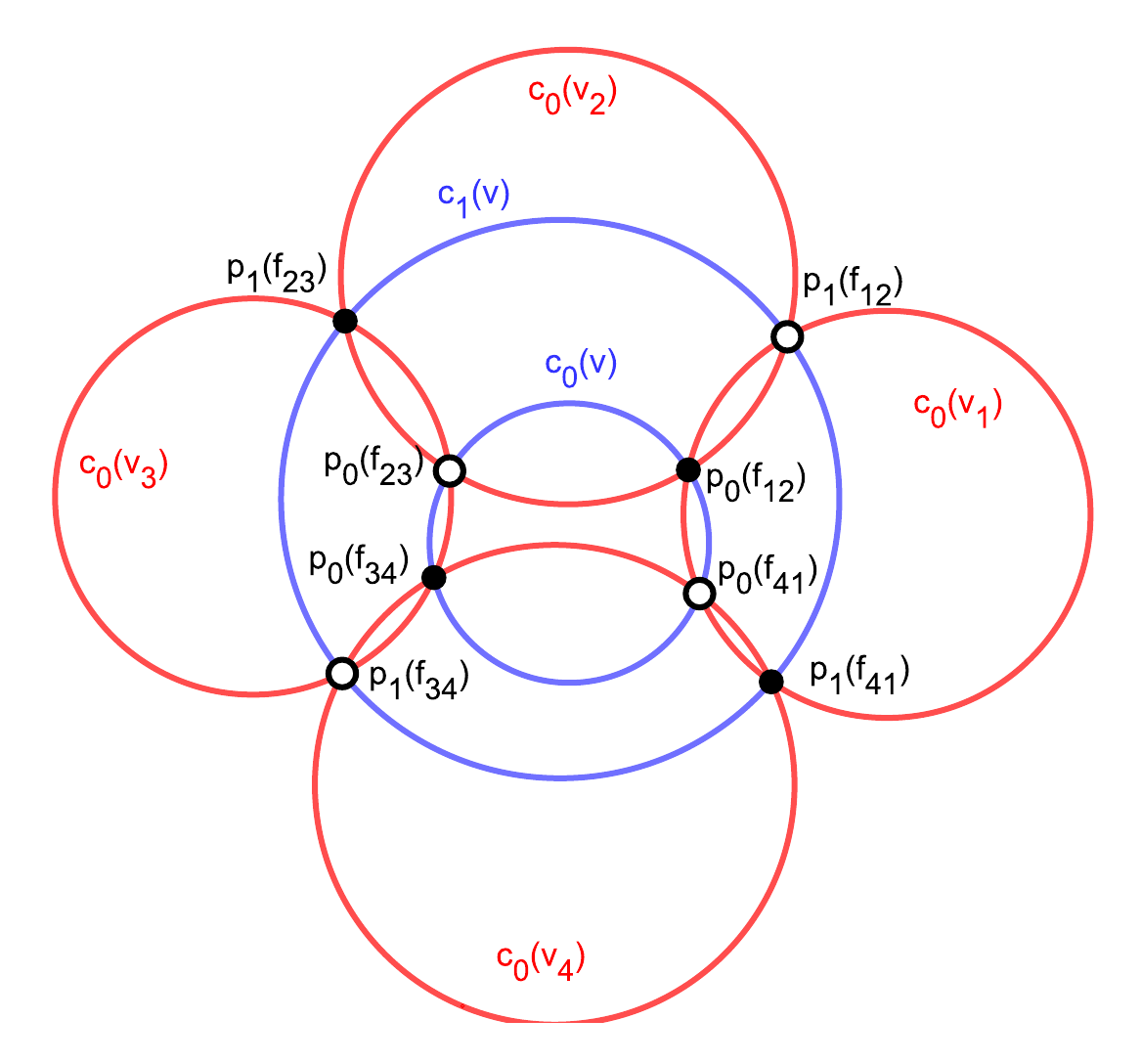}
	\hspace{-2mm}
	\includegraphics[scale=0.4]{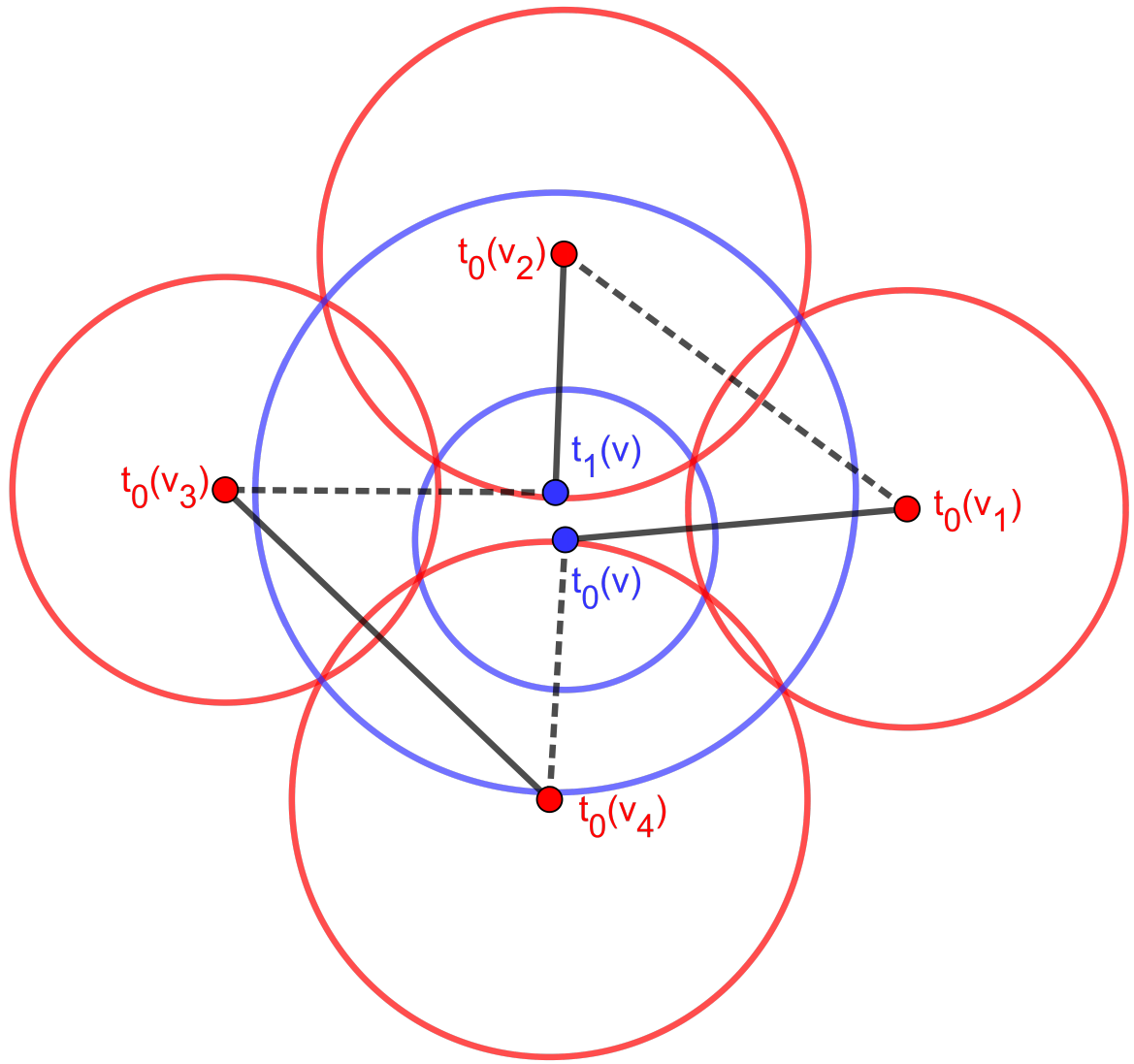}
	\caption{Local labeling in Miquel dynamics. The points involved in Equation~\eqref{eq:conicaldskp} for $k=0$ are highlighted on the right.}
	\label{fig:miquellabels}
\end{figure}

The \emph{multi-ratio} of an even number of points $a_1,a_2,\dots,a_{2m} \in \hC$ is defined as
\begin{align}
	&\mr(a_1,a_2,\dots,a_{2m}) \coloneqq \frac{(a_1-a_2)\cdots(a_{2m-1}-a_{2m})}{(a_2-a_3)\cdots(a_{2m}-a_1)} \in \hC.
\end{align}
Note that the multi-ratio is a rational function in any of the arguments and is therefore well-defined even if some of the arguments are $\infty$, in accordance with the rules of arithmetic of $\hC$. Moreover, it is also well-defined if non-consecutive arguments coincide, and if one pair of consecutive arguments coincides. However, if several consecutive arguments coincide then it is unclear how the multi-ratio should be defined.

In the following, we assume that the circle patterns involved are \emph{generic enough} that all the involved multi-ratios are well-defined. We assume this since we want to avoid lengthy case analyses of singularities, in particular since not every singularity is actually problematic for every result. For example, it is in principle no obstacle for Miquel dynamics if there is a circle of radius zero, but then the $X$-variables that we introduce later on are not well-defined. In other situations, Miquel dynamics may not be geometrically well-defined, but becomes well-defined if we enforce algebraic constraints, for example Equation~\eqref{eq:conicaldskp}. For instance, this happens when the centers of the six circles in Figure~\ref{fig:miquellabels} are on a line, which happens if $p_0(f_{23}) = p_0(f_{41})$ and $p_0(f_{12}) = p_0(f_{34})$. Note that some singularities of Miquel dynamics were studied in \cite{adtmdskp}.
Moreover, we assume throughout the paper that none of the circles of $c_k$ contains the point at infinity. This assumption is not necessary for the new Y-systems that we present, since they are defined using multi-ratios. However, the assumption is practical since we also revisit and compare with the old Y-system that involves circle centers. Note that the assumption implies that all the circle centers in $t$ are indeed in $\C$.

The following equation was discovered in \cite{amiquel} and independently in \cite{klrr}
\begin{align}
	\mr(t_k(v),t_k(v_1), t_k(v_2), t_{k+1}(v),t_k(v_3), t_k(v_4)) = -1, \label{eq:conicaldskp}
\end{align}
where the labeling of the faces is as shown in Figure~\ref{fig:miquellabels}. This equation is Equation~$(\chi_2)$ in the classification of integrable discrete equations of octahedron type \cite{abs}. Indeed, Equation~\eqref{eq:conicaldskp} posses the full symmetry of the octahedron, that is it is invariant under any cyclic permutation of the arguments and each argument swap $1 \leftrightarrow 4$, $2 \leftrightarrow 5$ and $3 \leftrightarrow 6$. As an equation on the octahedral lattice, Equation~\eqref{eq:conicaldskp} is also known as the \emph{dSKP equation}, studied in \cite{ksclifford} but first discovered in \cite{dndskp,ncwqdskp}.
We will explain the octahedral lattice in more detail in Section~\ref{sec:octahedralcombinatorics}, and the interpretation of Miquel dynamics on the octahedral lattice in Section~\ref{sec:miquelmaps}.

Consider a conical net $t_k$ and a vertex $v$ such that the center $t_k(v)$ is being replaced by Miquel dynamics in $t_{k+1}$. Then we define the \emph{$Y$-variables}
\begin{align}
	Y_k(v) = -\frac{(t_k(v_1) - t_k(v))(t_k(v_3) - t_k(v))}{(t_k(v_2) - t_k(v))(t_k(v_4) - t_k(v))}, \label{eq:yvars}
\end{align} 
where the labels are as in Figure~\ref{fig:miquellabels}. It was shown in \cite{amiquel,klrr} that $Y$ is real-valued. Moreover, if $t$ is a $t$-embedding then $Y$ is necessarily positive as well.

Let us call $-Y$ the negative $Y$-variables. It was also shown in \cite{amiquel,klrr} that the negative $Y$-variables together with Miquel dynamics constitute a \emph{Y-system} (see \cite{knsysystems}), that is, they satisfy
\begin{align}
	(-Y_{k+2}(v)) (-Y_k(v)) = \frac{(1 - (-Y_{k+1}(v_2)))(1 - (-Y_{k+1}(v_4)))}{(1 - (-Y_{k+1}^{-1}(v_1)))(1 - (-Y_{k+1}^{-1}(v_3)))}, \label{eq:ymin}
\end{align}
or equivalently
\begin{align}
	Y_{k+2}(v) Y_k(v) = \frac{(1 + Y_{k+1}(v_2))(1 + Y_{k+1}(v_4))}{(1 + Y_{k+1}^{-1}(v_1))(1 + Y_{k+1}^{-1}(v_3))}. \label{eq:ypos}
\end{align}

\begin{figure}
	\centering
	\includegraphics[scale=0.55]{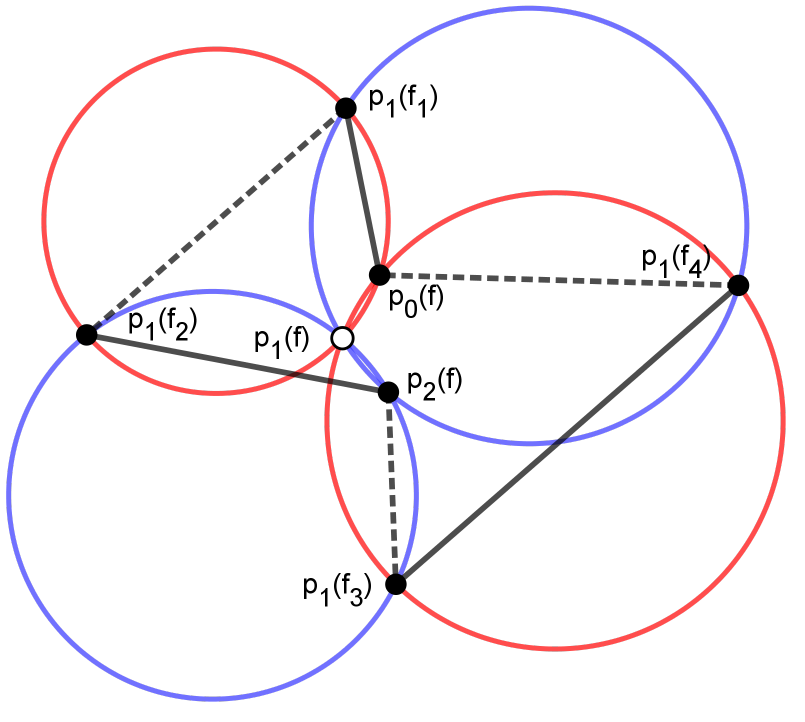}
	\hspace{3mm}
	\includegraphics[scale=0.4]{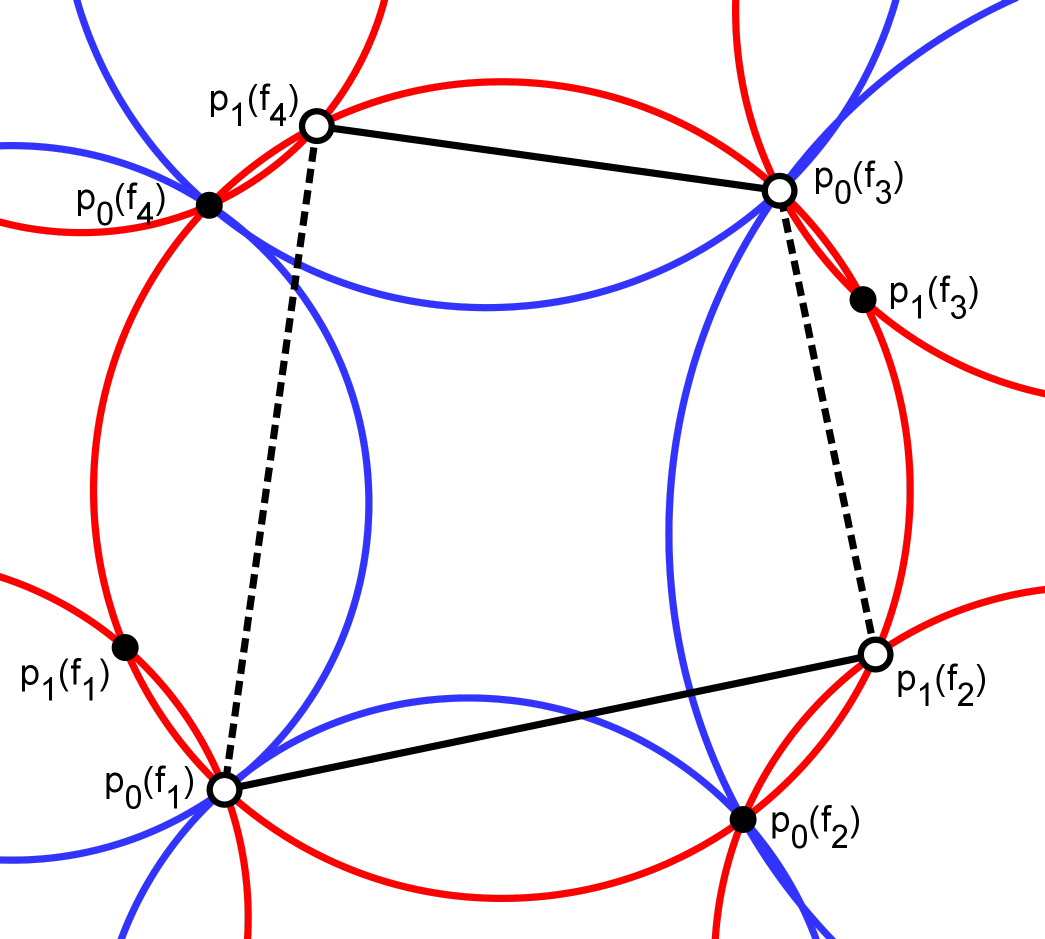}
	\caption{Left: a Clifford configuration, the red (resp.\ blue) circles are part of $c_0$ and $c_1$ (resp.\ $c_1$ and $c_2$). Right: the points that define the $\Xw$-variables, the red (resp.\ blue) circles are part of $c_{0}$ and $c_1$ (resp.\ $c_{-1}$ and $c_0$).}
	\label{fig:cliffordlabels}
\end{figure}

In the following, in a slight abuse of wording we call any variables that satisfy Equation~\eqref{eq:ypos} a Y-system. Therefore we will not need the notion of negative $Y$-variables again. The advantage of Equation~\eqref{eq:ypos} is that it is consistent with the previous Miquel dynamics literature \cite{amiquel, klrr} and the formulas for \emph{mutation} in \emph{cluster algebra} theory \cite{fzclusteralgebra}. This is also practical since the $Y$-variables determine the Boltzmann weights a statistical model, the dimer model, up to gauge and should therefore naturally be positive quantities \cite{amiquel,klrr}. Additionally, it is obvious from Equation~\eqref{eq:ypos} that if an initial circle pattern has positive $Y$-variables, the $Y$-variables remain positive under Miquel dynamics.

The first new result we present is that the dSKP equation does not only show up among the circle centers of a circle pattern, but also directly in the circle intersection points.
Specifically, we show in Theorem~\ref{th:udskp} that
\begin{align}
	\mr(p_{k-1}(f), p_k(f_1), p_k(f_2), p_{k+1}(f), p_k(f_3), p_k(f_4)) = -1, \label{eq:pdskp}
\end{align}
where the labels are as in Figure~\ref{fig:cliffordlabels} (left). Since $F(\Z^2) \simeq \Z^2$, the faces of $\Z^2$ may also be partitioned into even (parity) faces and odd (parity) faces. Let $p^\bullet_k$ be the restriction of $p_k$ to even faces if $k$ is even and the restriction to odd faces if $k$ is odd. Analogously, let $p^\circ_k$ be the restriction of $p_k$ to odd faces if $k$ is even and the restriction to even faces if $k$ is odd. Equation~\eqref{eq:pdskp} involves either only points of $\ub$ or only points of $\uw$. As a result, both $\ub$ and $\uw$ solve the dSKP equation on the octahedral lattice, as we show in Section~\ref{sec:dskp}.

Additionally, we show that there is a way to combine circle intersection points $p$ and circle centers $t$ that solves the dSKP equation on a higher-dimensional octahedral lattice, the $A_4$ lattice. This shows that from the dSKP perspective, (half) the intersection points are a Bäcklund transform of the circle centers and vice versa. In Section~\ref{sec:afour} we explain the combinatorics of $A_4$ and this result in more detail. It is beyond the scope of this paper, but in a similar vein one can show that if one adds additional boundary cluster variables to the $Y$-variables, the $X$-variables are related to these extended $Y$-variables by a sequence of cluster mutations, see \cite{athesis}. In particular, this suggests that the positivity of the $X$- and $Y$-variables may be more closely related than would be expected at first glance.

The multi-ratio $a_1, a_2, a_3, a_4 \in \hC$ of four points is called the \emph{cross-ratio}
\begin{align}
	\cro(a_1, a_2, a_3, a_4) = \mr(a_1, a_2, a_3, a_4).
\end{align}

Consider a circle $c_k(f)$ that is not replaced in the next iteration of Miquel dynamics and the labeling in Figure~\ref{fig:cliffordlabels} (right). We define the \emph{$\Xb$-variables} and the \emph{$\Xw$-variables}
\begin{align}
	X^\bullet_k(v) &= - \cro(p^\bullet_{k}(f_1), p^\bullet_{k+1}(f_2), p^\bullet_{k}(f_3), p^\bullet_{k+1}(f_4)), \label{eq:xfone}\\ 
	X^\circ_k(v) &= - \cro(p^\circ_{k+1}(f_1), p^\circ_{k}(f_2), p^\circ_{k+1}(f_3), p^\circ_{k}(f_4)).\label{eq:xftwo}
\end{align}
Since each $X_k(v)$ is the cross-ratio of four points on a circle, both the $\Xb$- and the $\Xw$-variables are real-valued. In Section~\ref{sec:toda} we show that both $\Xb$ and $\Xw$ constitute a Y-system. As a consequence, if the $X$-variables are positive initially, positivity is also preserved by Miquel dynamics.

A \emph{Möbius transformation} $M: \hC \rightarrow \hC$ is a transformation of the form
\begin{align}
	M(z) = \frac{az + b}{cz + d},
\end{align}
for some $a,b,c,d\in \C$ with $ad \neq bc$. The definition of a circle pattern is invariant under Möbius transformations of $\hC$. However, the Möbius transformation of a conical net is not a conical net, conical nets are only invariant under similarity transformations of $\hC$. Nor are the $Y$-variables invariant under Möbius transformations. On the other hand, cross-ratios are invariant under Möbius transformations and therefore the $X$-variables are also invariant under Möbius transformations. Therefore, unlike the $Y$-variables, the $X$-variables satisfy the transformation group principle formulated in \cite{bsorganizing,ddgbook} and mentioned in the introduction. There are several approaches to discrete conformality using circle patterns, for example using the \emph{length cross-ratio} \cite{bpsdiscretehyperbolic}, the circle intersection angles \cite{bsvariationalcp}, or cross-ratios of the form
\begin{align}
	 \cro(p^\bullet_{0}(f_1), p^\circ_{0}(f_2), p^\bullet_{0}(f_3), p^\circ_{0}(f_4)),
\end{align}
in \cite{bpdisosurfaces}. It would be interesting to understand how the $X$-variables relate to any of those theories, but it is currently unclear. However, since the $X$-variables have the right transformation behaviour, it is possible that a direct link exists.

\begin{table}
	\centering
	{\renewcommand{\arraystretch}{1.5}
		\begin{tabular}{c|c|c|c}
			& $t$-map & $\ub$-map & $\uw$-map \\ 
			\hline 
			$Y$-variables & real & complex & complex \\ 
			\hline 
			$X$-variables & complex & \textbf{real} & \textbf{real} \\ 
			\hline 
			$W$-variables & complex & \textbf{real iff ICP} &  \textbf{real iff ICP} \\ 
		\end{tabular}
	}
	\caption{Overview of the various Y-systems associated to Miquel dynamics, ICP is short for integrable circle pattern. In bold are the Y-systems that are invariant under Möbius transformations of the circle patterns.}
	\label{tab:vars}
\end{table}

Consider Equations \eqref{eq:xfone} and \eqref{eq:xftwo} but at a face $f$ that \emph{is} replaced by the next iteration of Miquel dynamics.
We call the resulting variables the $\Wb$- and $\Ww$-variables. In Section~\ref{sec:icp}, we show that these variables are also Y-systems, but not real-valued ones. Interestingly, they do satisfy $\Wb = \bar W^\circ$, that is they are complex conjugates. Therefore if the $\Wb$-variables are real, then $\Wb = \Ww$. We show that this special case of real $W$-variables corresponds to \emph{integrable circle patterns}, which were introduced in \cite{bmsanalytic}.

\begin{remark}
	Note that with the $Y$-, $X$- and $W$-variables there are three \emph{different} ways to obtain a Y-system for a dSKP map. The $Y$-variables are an \emph{affine} notion (affine geometry of $\C$), the $X$- and $W$-variables are a \emph{projective} notion (projective geometry of $\CP^1$), as explained in more detail in \cite{athesis}. It is also possible to associate projective variables to the conical net, and affine variables to the intersection points. However, in this case the resulting variables are not necessarily real-valued anymore, we give an overview in Table~\ref{tab:vars}. The affine variables for a dSKP map were first considered in \cite{amiquel,klrr}, but they already appear as cluster variables for \emph{T-graphs} in \cite{kenyonsheffield}, which are related to \emph{Menelaus lattices} as defined in \cite{ksclifford}. The projective variables for general dSKP maps are discussed in \cite{abs}, and coincide with the cluster variables for the pentagram map \cite{glickpentagram}.
\end{remark}

There are two more special cases that we mentioned in the introduction, which are motivated by statistical mechanics: harmonic embeddings and s-embeddings. Harmonic embeddings are circle patterns such that around every second face of $\Z^2$, the four circle centers $t(v_1)$, $t(v_2)$, $t(v_3)$, $t(v_4)$ form a rectangle. It was shown in \cite{klrr}, that in this case the $Y$-variables are in the so called \emph{resistor subvariety} (see \cite{gkdimers}), which means that
\begin{align}
	Y_0(v_1)Y_0(v_3) = Y_0(v_2) Y_0(v_4).
\end{align}
In Section~\ref{sec:harmonic}, we are able to show that $X^\bullet_k(v) = Y_k(v)$ for all vertices and $k\in\Z$. Apart from being a remarkable coincidence by itself, this also implies that the $X^\bullet_0$-variables are also in the resistor subvariety. The other type of special circle pattern is such that half of it is a circle packing, that is half the circles are touching (see Figure~\ref{fig:circlepacking}). In this case the corresponding conical net is called an s-embedding. In this case it was shown in \cite{klrr} that the $Y$-variables are in the so called \emph{Ising subvariety} (see \cite{kenyonpemantle}), which is equivalent to the fact that the $Y$-variables exhibit a global time-symmetry, in the sense that $Y_k(v) Y_{-k}(v) = 1$ for all vertices. For the $X$-variables, there is the same symmetry, but it instead relates $X^\bullet$-variables to $X^\circ$-variables in the sense that $X^\bullet_k(v) = X^\circ_{-k}(v)$.

\section{Combinatorics of $A_3$} \label{sec:octahedralcombinatorics}

\begin{figure}
	\centering
	\begin{tikzpicture}[scale=2,font=\sffamily]
	\coordinate (e1) at (1,0);
	\coordinate (e2) at (30:0.8);
	\coordinate (e3) at (0,1);
	
	\node[ivert] (v000) at ($0*(e1)+0*(e2)+0*(e3)$) {000};
	\node[wvert] (v100) at ($1*(e1)+0*(e2)+0*(e3)$) {};
	\node[wvert] (v010) at ($0*(e1)+1*(e2)+0*(e3)$) {};
	\node[wvert] (v001) at ($0*(e1)+0*(e2)+1*(e3)$) {};
	\node[ivert] (v110) at ($1*(e1)+1*(e2)+0*(e3)$) {110};
	\node[ivert] (v011) at ($0*(e1)+1*(e2)+1*(e3)$) {011};
	\node[ivert] (v101) at ($1*(e1)+0*(e2)+1*(e3)$) {101};
	\node[ivert] (v200) at ($2*(e1)+0*(e2)+0*(e3)$) {200};
	\node[ivert] (v020) at ($0*(e1)+2*(e2)+0*(e3)$) {020};
	\node[ivert] (v002) at ($0*(e1)+0*(e2)+2*(e3)$) {002};		
	\node[wvert] (v210) at ($2*(e1)+1*(e2)+0*(e3)$) {};
	\node[wvert] (v120) at ($1*(e1)+2*(e2)+0*(e3)$) {};
	\node[wvert] (v102) at ($1*(e1)+0*(e2)+2*(e3)$) {};		
	\node[wvert] (v201) at ($2*(e1)+0*(e2)+1*(e3)$) {};
	\node[wvert] (v021) at ($0*(e1)+2*(e2)+1*(e3)$) {};
	\node[wvert] (v012) at ($0*(e1)+1*(e2)+2*(e3)$) {};		
	\node[wvert] (v111) at ($1*(e1)+1*(e2)+1*(e3)$) {};		
	\node[ivert] (v211) at ($2*(e1)+1*(e2)+1*(e3)$) {211};		
	\node[ivert] (v121) at ($1*(e1)+2*(e2)+1*(e3)$) {121};		
	\node[ivert] (v112) at ($1*(e1)+1*(e2)+2*(e3)$) {112};		
	\node[ivert] (v220) at ($2*(e1)+2*(e2)+0*(e3)$) {220};
	\node[ivert] (v022) at ($0*(e1)+2*(e2)+2*(e3)$) {022};
	\node[ivert] (v202) at ($2*(e1)+0*(e2)+2*(e3)$) {202};
	\node[wvert] (v221) at ($2*(e1)+2*(e2)+1*(e3)$) {};		
	\node[wvert] (v212) at ($2*(e1)+1*(e2)+2*(e3)$) {};		
	\node[wvert] (v122) at ($1*(e1)+2*(e2)+2*(e3)$) {};		
	\node[ivert] (v222) at ($2*(e1)+2*(e2)+2*(e3)$) {222};

	\draw[-, line width=1pt]
	(v000) edge (v100) edge (v010) edge (v001)
	(v100) edge (v200) edge (v110) edge (v101)
	(v010) edge (v110) edge (v020) edge (v011)
	(v001) edge (v101) edge (v011) edge (v002)
	(v200) edge (v210) edge (v201)
	(v020) edge (v120) edge (v021) 
	(v002) edge (v102) edge (v012) 
	(v110) edge (v111) edge (v210) edge (v120)
	(v011) edge (v111) edge (v021) edge (v012)
	(v210) edge (v220) edge (v211)
	(v120) edge (v220) edge (v121)
	(v021) edge (v022) edge (v121)
	(v012) edge (v022) edge (v112)
	(v222) edge (v221) edge (v212) edge (v122)
	(v221) edge (v220) edge (v211) edge (v121)
	(v122) edge (v121) edge (v112) edge (v022)
	;
	\draw[-,white, line width=2.5pt]
	(v011) edge (v111)
	(v110) edge (v111)
	(v111) edge (v211) edge (v121) edge (v112)
	;
	\draw[-,black, line width=1pt]
	(v110) edge (v111)
	(v011) edge (v111)
	(v111) edge (v211) edge (v121) edge (v112)
	;
	\draw[-,white, line width=2.5pt]
	(v100) edge (v101)
	(v001) edge (v101) 
	(v101) edge (v111) edge (v201) edge (v102)
	(v200) edge (v201)
	(v002) edge (v102)
	(v201) edge (v202) edge (v211)
	(v210) edge (v211)
	(v102) edge (v202) edge (v112)
	(v012) edge (v112)
	(v212) edge (v211) edge (v202) edge (v112)
	;
	\draw[-,black, line width=1pt]
	(v100) edge (v101)
	(v001) edge (v101) 
	(v101) edge (v111) edge (v201) edge (v102)
	(v200) edge (v201)
	(v002) edge (v102)
	(v201) edge (v202) edge (v211)
	(v210) edge (v211)
	(v102) edge (v202) edge (v112)
	(v012) edge (v112)
	(v212) edge (v211) edge (v202) edge (v112)
	;
	
	\draw[fill=black,opacity=0.3,draw=none]
		(v000.center) -- (v100.center) -- (v110.center) -- (v111.center) -- (v011.center) -- (v001.center) -- cycle
		(v110.center) -- (v210.center) -- (v220.center) -- (v221.center) -- (v121.center) -- (v111.center) -- cycle
		(v011.center) -- (v111.center) -- (v121.center) -- (v122.center) -- (v022.center) -- (v012.center) -- cycle
	;
	\draw[fill=black,opacity=0.35,draw=none]
		(v101.center) -- (v201.center) -- (v211.center) -- (v212.center) -- (v112.center) -- (v102.center) -- cycle
	;
	
	\node[ivert] (v000) at ($0*(e1)+0*(e2)+0*(e3)$) {000};
	\node[wvert] (v100) at ($1*(e1)+0*(e2)+0*(e3)$) {};
	\node[wvert] (v010) at ($0*(e1)+1*(e2)+0*(e3)$) {};
	\node[wvert] (v001) at ($0*(e1)+0*(e2)+1*(e3)$) {};
	\node[ivert] (v110) at ($1*(e1)+1*(e2)+0*(e3)$) {110};
	\node[ivert] (v011) at ($0*(e1)+1*(e2)+1*(e3)$) {011};
	\node[ivert] (v101) at ($1*(e1)+0*(e2)+1*(e3)$) {101};
	\node[ivert] (v200) at ($2*(e1)+0*(e2)+0*(e3)$) {200};
	\node[ivert] (v020) at ($0*(e1)+2*(e2)+0*(e3)$) {020};
	\node[ivert] (v002) at ($0*(e1)+0*(e2)+2*(e3)$) {002};		
	\node[wvert] (v210) at ($2*(e1)+1*(e2)+0*(e3)$) {};
	\node[wvert] (v120) at ($1*(e1)+2*(e2)+0*(e3)$) {};
	\node[wvert] (v102) at ($1*(e1)+0*(e2)+2*(e3)$) {};		
	\node[wvert] (v201) at ($2*(e1)+0*(e2)+1*(e3)$) {};
	\node[wvert] (v021) at ($0*(e1)+2*(e2)+1*(e3)$) {};
	\node[wvert] (v012) at ($0*(e1)+1*(e2)+2*(e3)$) {};		
	\node[wvert] (v111) at ($1*(e1)+1*(e2)+1*(e3)$) {};		
	\node[ivert] (v211) at ($2*(e1)+1*(e2)+1*(e3)$) {211};		
	\node[ivert] (v121) at ($1*(e1)+2*(e2)+1*(e3)$) {121};		
	\node[ivert] (v112) at ($1*(e1)+1*(e2)+2*(e3)$) {112};		
	\node[ivert] (v220) at ($2*(e1)+2*(e2)+0*(e3)$) {220};
	\node[ivert] (v022) at ($0*(e1)+2*(e2)+2*(e3)$) {022};
	\node[ivert] (v202) at ($2*(e1)+0*(e2)+2*(e3)$) {202};
	\node[wvert] (v221) at ($2*(e1)+2*(e2)+1*(e3)$) {};		
	\node[wvert] (v212) at ($2*(e1)+1*(e2)+2*(e3)$) {};		
	\node[wvert] (v122) at ($1*(e1)+2*(e2)+2*(e3)$) {};		
	\node[ivert] (v222) at ($2*(e1)+2*(e2)+2*(e3)$) {222};
	\end{tikzpicture}
	\hspace{3mm}
	\begin{tikzpicture}[scale=2,font=\sffamily\scriptsize]
	
	\coordinate (e1) at (1,0);
	\coordinate (e2) at (30:0.9);
	\coordinate (e3) at (0,1);
	
	\node[bvert] (v000) at ($0*(e1)+0*(e2)+0*(e3)$) {};
	\node[wvert] (v100) at ($1*(e1)+0*(e2)+0*(e3)$) {};
	\node[wvert] (v010) at ($0*(e1)+1*(e2)+0*(e3)$) {};
	\node[wvert] (v001) at ($0*(e1)+0*(e2)+1*(e3)$) {};
	\node[bvert] (v110) at ($1*(e1)+1*(e2)+0*(e3)$) {};
	\node[bvert] (v011) at ($0*(e1)+1*(e2)+1*(e3)$) {};
	\node[bvert] (v101) at ($1*(e1)+0*(e2)+1*(e3)$) {};
	\node[bvert] (v200) at ($2*(e1)+0*(e2)+0*(e3)$) {};
	\node[bvert] (v020) at ($0*(e1)+2*(e2)+0*(e3)$) {};
	\node[bvert] (v002) at ($0*(e1)+0*(e2)+2*(e3)$) {};		
	\node[wvert] (v210) at ($2*(e1)+1*(e2)+0*(e3)$) {};
	\node[wvert] (v120) at ($1*(e1)+2*(e2)+0*(e3)$) {};
	\node[wvert] (v102) at ($1*(e1)+0*(e2)+2*(e3)$) {};		
	\node[wvert] (v201) at ($2*(e1)+0*(e2)+1*(e3)$) {};
	\node[wvert] (v021) at ($0*(e1)+2*(e2)+1*(e3)$) {};
	\node[wvert] (v012) at ($0*(e1)+1*(e2)+2*(e3)$) {};		
	\node[wvert] (v111) at ($1*(e1)+1*(e2)+1*(e3)$) {};		
	\node[bvert] (v211) at ($2*(e1)+1*(e2)+1*(e3)$) {};		
	\node[bvert] (v121) at ($1*(e1)+2*(e2)+1*(e3)$) {};		
	\node[bvert] (v112) at ($1*(e1)+1*(e2)+2*(e3)$) {};		
	\node[bvert] (v220) at ($2*(e1)+2*(e2)+0*(e3)$) {};
	\node[bvert] (v022) at ($0*(e1)+2*(e2)+2*(e3)$) {};
	\node[bvert] (v202) at ($2*(e1)+0*(e2)+2*(e3)$) {};
	\node[wvert] (v221) at ($2*(e1)+2*(e2)+1*(e3)$) {};		
	\node[wvert] (v212) at ($2*(e1)+1*(e2)+2*(e3)$) {};		
	\node[wvert] (v122) at ($1*(e1)+2*(e2)+2*(e3)$) {};		
	\node[bvert] (v222) at ($2*(e1)+2*(e2)+2*(e3)$) {};	
	
	\draw[-, gray]
		(v000) edge (v100) edge (v010) edge (v001)
		(v100) edge (v200) edge (v110) edge (v101)
		(v010) edge (v110) edge (v020) edge (v011)
		(v001) edge (v101) edge (v011) edge (v002)
		(v200) edge (v210) edge (v201)
		(v020) edge (v120) edge (v021) 
		(v002) edge (v102) edge (v012) 
		(v110) edge (v111) edge (v210) edge (v120)
		(v011) edge (v111) edge (v021) edge (v012)
		(v210) edge (v220) edge (v211)
		(v120) edge (v220) edge (v121)
		(v021) edge (v022) edge (v121)
		(v012) edge (v022) edge (v112)
		(v222) edge (v221) edge (v212) edge (v122)
		(v221) edge (v220) edge (v211) edge (v121)
		(v122) edge (v121) edge (v112) edge (v022)
	;
	\draw[-, blue, line width=1pt]
 		(v000) edge (v110) edge (v101) edge (v011)
 		(v110) edge (v220) edge[red] (v211) edge[red] (v121)
		(v220) -- (v211)  (v121) -- (v220)
		(v211) edge[red] (v121)
 		(v101) edge[red] (v211) edge[red] (v202) edge[red] (v112)
		(v211) -- (v202) -- (v112) edge[red] (v211)
 		(v011) edge[red] (v121) edge[red] (v112) edge (v022)
		(v121)  (v112) -- (v022) -- (v121) edge[red] (v112)
		(v200) edge (v110) edge (v101) edge (v211)
		(v002) edge (v112) edge (v101) edge (v011)
		(v020) edge (v110) edge (v121) edge (v011)
		(v222) edge (v112) edge (v121) edge (v211)
	;
	\draw[-,white, line width=2.5pt]
 		(v011) edge (v121) edge (v112) 
 		(v110) edge (v211) edge (v121)
	;
	\draw[-, blue, line width=1pt]
 		(v011) edge[red] (v121) edge[red] (v112) 
 		(v110) edge[red] (v211) edge[red] (v121)	
	;	
	\draw[-,white, line width=2.5pt]
		(v222) edge (v112) edge (v121) edge (v211)
 		(v110) -- (v101) -- (v011) -- (v110)
	;
	\draw[-, blue, line width=1pt]
		(v222) edge (v112) edge (v121) edge (v211)
	;
	\draw[-, red, line width=1pt]
 		(v110) -- (v101) -- (v011) -- (v110)	
	;
	\draw[-,white, line width=1.8pt]
		(v011) edge (v111)
		(v110) edge (v111)
		(v111) edge (v211) edge (v121) edge (v112)
	;
	\draw[-,gray]
		(v110) edge (v111)
		(v011) edge (v111)
		(v111) edge (v211) edge (v121) edge (v112)
	;
	\draw[-,white, line width=2.5pt]
 		(v101) edge (v211) edge (v202) edge (v112)	
		(v211) -- (v202) -- (v112) -- (v211)
		(v200) edge (v110) edge (v101) edge (v211)
		(v002) edge (v112) edge (v101) edge (v011)
 		(v000) edge (v101)
 	;
	\draw[-,blue, line width=1pt]
 		(v101) edge[red] (v211) edge (v202) edge[red] (v112)	
		(v211) -- (v202) -- (v112) edge[red] (v211)
		(v200) edge (v110) edge (v101) edge (v211)
		(v002) edge (v112) edge (v101) edge (v011)
 		(v000) edge (v101)
	;
	\draw[-,white, line width=1.8pt]
		(v100) edge (v101)
		(v001) edge (v101) 
		(v101) edge (v111) edge (v201) edge (v102)
		(v200) edge (v201)
		(v002) edge (v102)
		(v201) edge (v202) edge (v211)
		(v210) edge (v211)
		(v102) edge (v202) edge (v112)
		(v012) edge (v112)
		(v212) edge (v211) edge (v202) edge (v112)
		(v221) edge (v211)
		(v122) edge (v112)
	;
	\draw[-,gray]
		(v100) edge (v101)
		(v001) edge (v101) 
		(v101) edge (v111) edge (v201) edge (v102)
		(v200) edge (v201)
		(v002) edge (v102)
		(v201) edge (v202) edge (v211)
		(v210) edge (v211)
		(v102) edge (v202) edge (v112)
		(v012) edge (v112)
		(v212) edge (v211) edge (v202) edge (v112)
		(v221) edge (v211)
		(v122) edge (v112)
	;

	\draw[fill=black,opacity=0.3,draw=none]
		(v000.center) -- (v110.center) -- (v011.center) -- cycle
		(v011.center) -- (v121.center) -- (v022.center) -- cycle
		(v110.center) -- (v220.center) -- (v121.center) -- cycle	
	;
	\draw[fill=black,opacity=0.35,draw=none]
		(v101.center) -- (v211.center) -- (v112.center) -- cycle
	;
	
	\node[bvert] (v000) at ($0*(e1)+0*(e2)+0*(e3)$) {};
	\node[wvert] (v100) at ($1*(e1)+0*(e2)+0*(e3)$) {};
	\node[wvert] (v010) at ($0*(e1)+1*(e2)+0*(e3)$) {};
	\node[wvert] (v001) at ($0*(e1)+0*(e2)+1*(e3)$) {};
	\node[bvert] (v110) at ($1*(e1)+1*(e2)+0*(e3)$) {};
	\node[bvert] (v011) at ($0*(e1)+1*(e2)+1*(e3)$) {};
	\node[bvert] (v101) at ($1*(e1)+0*(e2)+1*(e3)$) {};
	\node[bvert] (v200) at ($2*(e1)+0*(e2)+0*(e3)$) {};
	\node[bvert] (v020) at ($0*(e1)+2*(e2)+0*(e3)$) {};
	\node[bvert] (v002) at ($0*(e1)+0*(e2)+2*(e3)$) {};		
	\node[wvert] (v210) at ($2*(e1)+1*(e2)+0*(e3)$) {};
	\node[wvert] (v120) at ($1*(e1)+2*(e2)+0*(e3)$) {};
	\node[wvert] (v102) at ($1*(e1)+0*(e2)+2*(e3)$) {};		
	\node[wvert] (v201) at ($2*(e1)+0*(e2)+1*(e3)$) {};
	\node[wvert] (v021) at ($0*(e1)+2*(e2)+1*(e3)$) {};
	\node[wvert] (v012) at ($0*(e1)+1*(e2)+2*(e3)$) {};		
	\node[wvert] (v111) at ($1*(e1)+1*(e2)+1*(e3)$) {};		
	\node[bvert] (v211) at ($2*(e1)+1*(e2)+1*(e3)$) {};		
	\node[bvert] (v121) at ($1*(e1)+2*(e2)+1*(e3)$) {};		
	\node[bvert] (v112) at ($1*(e1)+1*(e2)+2*(e3)$) {};		
	\node[bvert] (v220) at ($2*(e1)+2*(e2)+0*(e3)$) {};
	\node[bvert] (v022) at ($0*(e1)+2*(e2)+2*(e3)$) {};
	\node[bvert] (v202) at ($2*(e1)+0*(e2)+2*(e3)$) {};
	\node[wvert] (v221) at ($2*(e1)+2*(e2)+1*(e3)$) {};		
	\node[wvert] (v212) at ($2*(e1)+1*(e2)+2*(e3)$) {};		
	\node[wvert] (v122) at ($1*(e1)+2*(e2)+2*(e3)$) {};		
	\node[bvert] (v222) at ($2*(e1)+2*(e2)+2*(e3)$) {};		
	
	\end{tikzpicture}
	\caption{Left: $A_3$ identified with $\Z^3_+$. Right: edges of $A_3$, an octahedron is highlighted with red edges, and there are four black tetrahedra visible. Note that darker shaded areas are where two tetrahedra are overlapping.}
	\label{fig:athree}
\end{figure}

We need the subsets of $\Z^n$
\begin{align}
	\Z^n_c &= \{z \in \Z^{n} \ | \ \sum_{i=1}^{n} z_i = c \}, \\
	\Z^n_+ &= \{z \in \Z^{n} \ | \ \sum_{i=1}^{n} z_i \in 2 \Z \}, \\
	\Z^n_- &= \{z \in \Z^{n} \ | \ \sum_{i=1}^{n} z_i \in 2 \Z+1 \}.
\end{align}
With this notation, the \emph{$A_n$ lattice} is defined as $\Z^{n+1}_0$. For $A_3$ we use the representation as the \emph{octahedral lattice $\lat$}, where $\lat = \Z^3_+$. We will explain the bijection between $A_3$ and $\lat$ in Section~\ref{sec:afour}, for now we do not need it.

For $n\in \N$ we denote the index set $[n] = \{1,2,\dots, n\}$. For $i\in [n]$ let $e_i$ be the $i$-th unit vector. In order to improve the readability of equations, we introduce the \emph{shift operators} $\shi i$ and $\shi {\sm i}$.  For $i \in [n]$ and $z\in \Z^n$ the shift operators are defined by $\shi i z = z+ e_i$ and $\shi{\sm i} z = z - e_i$. Note that in the literature $\shi{\bar i}$ is sometimes used for $\shi{\sm i}$, but the former can be hard to read.
Additionally, if we give multiple numbers in the subscript this implies a composition of shift operators, for example $\shi{1,\sm 2} = \shi 1 \circ \shi{\sm 2}$.

Let us explain the 3-cells of $\lat$, see also Figure~\ref{fig:athree}. The \emph{octahedra} of $\lat$ are spanned by the points

\begin{align}
	\shi 1 z,\ \shim 1 z,\  \shi 2 z,\  \shim 2 z,\  \shi 3 z,\  \shim 3 z,
\end{align}
for all $z \in \Z^3_-$. We denote the set of octahedra by $\oct$ and identify $\oct \simeq \Z^3_-$.
The \emph{tetrahedra} of $\lat$ come in two kinds. The \emph{black tetrahedra} are spanned by the points 
\begin{align}
	z,\ \shi {2,3} z,\  \shi {1,3} z,\  \shi {1,2} z,
\end{align}
for all $z \in \Z^3_+$, and the \emph{white tetrahedra} are spanned by the points
\begin{align}
	\shi 1 z, \ \shi 2 z, \ \shi 3 z, \ \shi {1,2,3} z, 
\end{align}
for all $z \in \Z^3_-$. We denote the set of tetrahedra by $\tet \simeq \Z^3$, the set of black tetrahedra by $\tetb \simeq \Z^3_+$, and the set of white tetrahedra by $\tetw \simeq \Z^3_-$. It is sometimes useful to think of the set of tetrahedra as corresponding to the cubes of $\Z^3$.

Note that all the 2-cells of $\lat$ are triangles. We call two 3-cells adjacent if they share a triangle. We say a vertex $z\in \lat$ is incident to a 3-cell if $z$ is contained in the 3-cell.

\begin{remark}
	Miquel dynamics may also be defined as local dynamics if one works with more general combinatorics than $\Z^2$, replacing only one circle in each step, see \cite{amiquel,klrr}. In this case $\Z^2$ is replaced with a planar graph with bipartite dual. The definition of the $Y$-variables works analogously to Equation~\eqref{eq:yvars} and  dynamics are defined very naturally in the setup of cluster algebras. How to describe the $X$-variables for more general combinatorics is less clear. We believe the correct approach is to consider so called \emph{vector-relation configurations} \cite{agprvrc} defined in $\CP^1$ and such that the face weights are real. Miquel dynamics should then be described by the projective dynamics for vector-relation configurations, and the face weights should correspond to the $X$-variables.
\end{remark}

\section{Miquel maps} \label{sec:miquelmaps}

\begin{figure}
	\centering
	\begin{tikzpicture}[scale=2.2,font=\sffamily]
	\coordinate (e1) at (1,0);
	\coordinate (e2) at (60:0.7);
	\coordinate (e3) at (0,1);
	
	\node[ivert, blue] (v000) at ($0*(e1)+0*(e2)+0*(e3)$) {000};
	\node[wvert] (v100) at ($1*(e1)+0*(e2)+0*(e3)$) {};
	\node[wvert] (v010) at ($0*(e1)+1*(e2)+0*(e3)$) {};
	\node[wvert] (v001) at ($0*(e1)+0*(e2)+1*(e3)$) {};
	\node[ivert, blue] (v110) at ($1*(e1)+1*(e2)+0*(e3)$) {110};
	\node[ivert, red] (v011) at ($0*(e1)+1*(e2)+1*(e3)$) {011};
	\node[ivert, red] (v101) at ($1*(e1)+0*(e2)+1*(e3)$) {101};
	\node[ivert, blue] (v200) at ($2*(e1)+0*(e2)+0*(e3)$) {200};
	\node[ivert, blue] (v020) at ($0*(e1)+2*(e2)+0*(e3)$) {020};
	\node[wvert] (v210) at ($2*(e1)+1*(e2)+0*(e3)$) {};
	\node[wvert] (v120) at ($1*(e1)+2*(e2)+0*(e3)$) {};
	\node[wvert] (v201) at ($2*(e1)+0*(e2)+1*(e3)$) {};
	\node[wvert] (v021) at ($0*(e1)+2*(e2)+1*(e3)$) {};
	\node[wvert] (v111) at ($1*(e1)+1*(e2)+1*(e3)$) {};		
	\node[ivert, red] (v211) at ($2*(e1)+1*(e2)+1*(e3)$) {211};		
	\node[ivert, red] (v121) at ($1*(e1)+2*(e2)+1*(e3)$) {121};		
	\node[ivert, blue] (v220) at ($2*(e1)+2*(e2)+0*(e3)$) {220};
	\node[wvert] (v221) at ($2*(e1)+2*(e2)+1*(e3)$) {};		

	\node[ivert,rectangle,draw=black,inner sep=1.5pt] (c000) at ($.5*(e1)+.5*(e2)+.5*(e3)$) {000};	
	\node[ivert,rectangle,draw=black,inner sep=1.5pt] (c100) at ($1.5*(e1)+.5*(e2)+.5*(e3)$) {100};	
	\node[ivert,rectangle,draw=black,inner sep=1.5pt] (c010) at ($.5*(e1)+1.5*(e2)+.5*(e3)$) {010};	
	\node[ivert,rectangle,draw=black,inner sep=1.5pt] (c110) at ($1.5*(e1)+1.5*(e2)+.5*(e3)$) {110};	
	
	\draw[-, line width=1pt]
		(v000) edge (v100) edge (v010) edge (v001)
		(v100) edge (v200) edge (v110) edge (v101)
		(v010) edge (v110) edge (v020) edge (v011)
		(v001) edge (v101) edge (v011) 
		(v200) edge (v210) edge (v201)
		(v020) edge (v120) edge (v021) 
		(v110) edge (v111) edge (v210) edge (v120)
		(v011) edge (v111) edge (v021) 
		(v210) edge (v220) edge (v211)
		(v120) edge (v220) edge (v121)
		(v021) edge (v121)
		(v221) edge (v220) edge (v211) edge (v121)
	;
	\draw[-,white, line width=2.5pt]
		(v011) edge (v111)
		(v110) edge (v111)
		(v111) edge (v211) edge (v121)
	;
	\draw[-,black, line width=1pt]
		(v110) edge (v111)
		(v011) edge (v111)
		(v111) edge (v211) edge (v121)
	;
	\draw[-,white, line width=2.5pt]
		(v100) edge (v101)
		(v001) edge (v101) 
		(v101) edge (v111) edge (v201)
		(v200) edge (v201)
		(v201) edge (v211)
		(v210) edge (v211)
	;
	\draw[-,black, line width=1pt]
		(v100) edge (v101)
		(v001) edge (v101) 
		(v101) edge (v111) edge (v201)
		(v200) edge (v201)
		(v201) edge (v211)
		(v210) edge (v211)
	;
	
	\draw[fill=black,opacity=0.3,draw=none]
		(v000.center) -- (v100.center) -- (v110.center) -- (v111.center) -- (v011.center) -- (v001.center) -- cycle
		(v110.center) -- (v210.center) -- (v220.center) -- (v221.center) -- (v121.center) -- (v111.center) -- cycle
	;
	\draw[fill=black,opacity=0.35,draw=none]
	;
	
	\node[ivert, blue] (v000) at ($0*(e1)+0*(e2)+0*(e3)$) {000};
	\node[wvert] (v100) at ($1*(e1)+0*(e2)+0*(e3)$) {};
	\node[wvert] (v010) at ($0*(e1)+1*(e2)+0*(e3)$) {};
	\node[wvert] (v001) at ($0*(e1)+0*(e2)+1*(e3)$) {};
	\node[ivert, blue] (v110) at ($1*(e1)+1*(e2)+0*(e3)$) {110};
	\node[ivert, red] (v011) at ($0*(e1)+1*(e2)+1*(e3)$) {011};
	\node[ivert, red] (v101) at ($1*(e1)+0*(e2)+1*(e3)$) {101};
	\node[ivert, blue] (v200) at ($2*(e1)+0*(e2)+0*(e3)$) {200};
	\node[ivert, blue] (v020) at ($0*(e1)+2*(e2)+0*(e3)$) {020};
	\node[wvert] (v210) at ($2*(e1)+1*(e2)+0*(e3)$) {};
	\node[wvert] (v120) at ($1*(e1)+2*(e2)+0*(e3)$) {};
	\node[wvert] (v201) at ($2*(e1)+0*(e2)+1*(e3)$) {};
	\node[wvert] (v021) at ($0*(e1)+2*(e2)+1*(e3)$) {};
	\node[wvert] (v111) at ($1*(e1)+1*(e2)+1*(e3)$) {};		
	\node[ivert, red] (v211) at ($2*(e1)+1*(e2)+1*(e3)$) {211};		
	\node[ivert, red] (v121) at ($1*(e1)+2*(e2)+1*(e3)$) {121};		
	\node[ivert, blue] (v220) at ($2*(e1)+2*(e2)+0*(e3)$) {220};
	\node[wvert] (v221) at ($2*(e1)+2*(e2)+1*(e3)$) {};		

	\end{tikzpicture}
	\hspace{5mm}
	\includegraphics[scale=0.7]{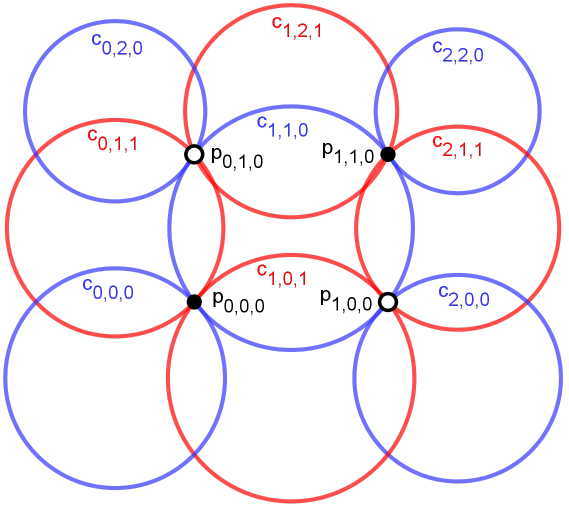}
	
	\caption{The $0$-layer of a Miquel map $(c,p)$, defined on vertices and tetrahedra of $\lat$ (left), and its image in $\C$ (right).}
	\label{fig:cpslice}
\end{figure}

In this section we explain how we view the circles and intersection points of Miquel dynamics on the octahedral lattice $\lat$.

\begin{definition}
	A \emph{Miquel map $(c,p)$} is a pair of maps 
	\begin{align}
		c&: \lat \rightarrow \mathrm{Circles}(\hC),\\
		p&: \tet \rightarrow \hC,
	\end{align}
	such that $p(T) \in c(z)$ for all incident $z \in \lat$, $T \in \tet$.
\end{definition}

Recall that we identified the set of tetrahedra $\tet$ with $\Z^3$ in Section~\ref{sec:octahedralcombinatorics}. For $k\in \Z$, consider the restriction $(c_k,p_k)$ of a Miquel map $(c,p)$ to
\begin{align}
	\{z \in \lat \ | \ z_3 \in\{k,k+1\} \} \cup \{T \in \tet \ | \ T_3 = k \}, \label{eq:miquellevel}
\end{align}
in the sense that
\begin{align}
	c_k(i,j) &= 
		\begin{cases}
			c(i,j,k) & i+j+k \in 2\Z, \\
			c(i,j,k+1) & i+j+k \in 2\Z+1, \\
		\end{cases} \\
	p_k(i,j) &= p(i,j,k).
\end{align}

Thus, $(c_k, p_k)$ is a circle pattern, see also Figure~\ref{fig:cpslice}. 

Recall that we identified the set of octahedra $\tet$ with $\Z^3_-$ in Section~\ref{sec:octahedralcombinatorics}. Consider an octahedron $O \in \oct$. Then we say that the six circles
\begin{align}
	 c(\shi 1 O),\  c(\shim 1 O),\  c(\shi 2  O),\  c( \shim2 O),\   c(\shi 3 O),\   c(\shim3 O),
\end{align}
of the vertices of $O$ constitute a \emph{Miquel configuration}, see also Figure~\ref{fig:miquelathree}. The eight intersection points 
\begin{align}
	p(\shi {\sm1,\sm2,\sm3} O), \ p(\shi {\sm1,\sm2}O), \ p(\shi {\sm1,\sm3} O), \ p(\shi {\sm2,\sm3} O), \ p(\shi {\sm1} O), \ p(\shi {\sm2} O), \ p(\shi {\sm3} O), \ p(O),
\end{align}
of the eight tetrahedra adjacent to $O$ are the intersection points of the Miquel configuration.

Each point $z \in \lat$ with $z_3 = k+2$ belongs to exactly one octahedron $O$ with $O_3 = k+1$, specifically $O = z - e_3$. Therefore each circle in $(c_{k+1}, p_{k+1})$ either coincides with a circle in $(c_k, p_k)$ or is determined via a Miquel configuration from $(c_k, p_k)$. Moreover, each tetrahedron $T \in \tet$ with $T_3 = k+1$ is adjacent to exactly one octahedron $O$ with $O_3 = k+1$. Therefore also all the intersection points of $(c_{k+1}, p_{k+1})$ are determined via Miquel configurations. Also note that for an octahedron $O \in \oct \simeq \Z^3_-$ the point $\shi{3} O \in \Z^3$ is not an octahedron. Therefore if a circle is in $(c_{k+1}, p_{k+1})$ but not in $(c_{k}, p_{k})$, it must also be in $(c_{k+2}, p_{k+2})$. We conclude that $(c_{k+1}, p_{k+1})$ is obtained from $(c_k, p_k)$ via Miquel dynamics. Therefore Miquel maps are the lattice perspective on (the iteration of) Miquel dynamics.

\begin{figure}
	\centering
	\includegraphics[scale=0.45]{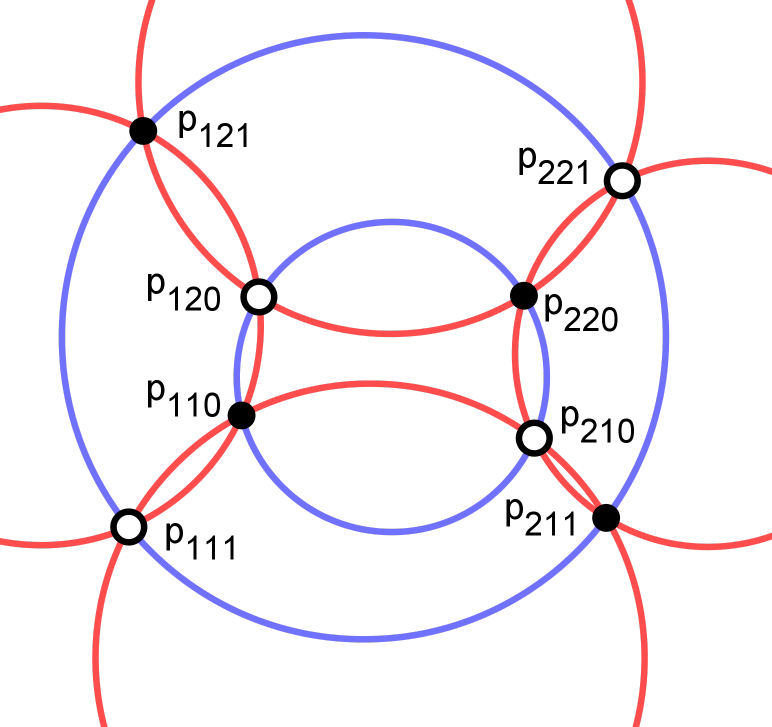}
	\hspace{4mm}
	\includegraphics[scale=0.45]{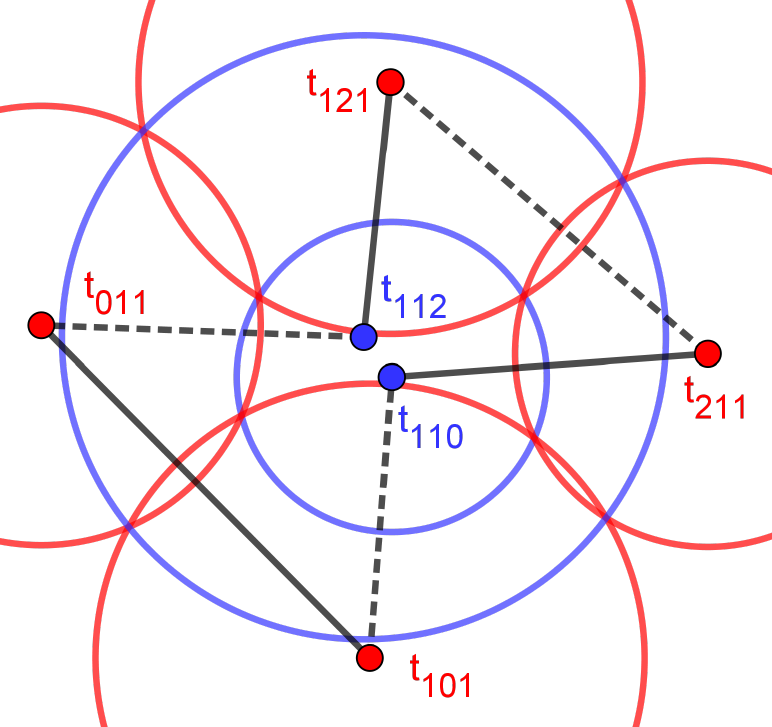}
	\caption{Coordinates in the octahedral lattice for a Miquel configuration at octahedron $(1,1,1)$. }
	\label{fig:miquelathree}
\end{figure}

\section{dSKP maps} \label{sec:dskp}

\begin{definition}\label{def:dskp}
	A \emph{dSKP map} $a: \Z^3_\pm \rightarrow \hC$ is a map that satisfies the \emph{dSKP lattice equation}, that is a map such that 
	\begin{align}
		\mr(a(\shi 1 z), a(\shi 2 z), a(\shi 3 z), a(\shi{\sm 1} z), a(\shi {\sm 2} z), a(\shi {\sm 3}  z)) = -1. \label{eq:dskp}
	\end{align}
	for all $z \in \Z_{\mp}$.
\end{definition}

Assume we know the values of a dSKP map $f$ restricted to $\{z \in \Z^3_{\pm} \ | \ z_3 \in \{0,1\}  \}$, and let us call this restriction \emph{initial data}. Then every value of $f(z)$ for $z\in \Z^3_{\pm}$ with $z_3 = 2$ occurs in exactly one instance of Equation~\eqref{eq:dskp}. Hence, the restriction of $f$ to $\{z \in \Z^3_{\pm} \ | \ z_3 = 2 \}$ is determined by the initial data. Since this argument may be iterated (forwards and backwards), all of the dSKP map $f$ is determined by the initial data.

Before we turn to new results, let us quickly recapitulate a previously discovered result relating Miquel maps to dSKP maps.
Given a Miquel map $(c,p)$ we associate a \emph{$t$-map} $t: \lat \rightarrow \hC$, which consists of the circle centers of $c$, that is for $z \in \lat$ the point $t(z)$ is the center of $c(z)$.
The following theorem was found and proven in \cite{amiquel} and independently in \cite{klrr}.
\begin{theorem}\label{th:tdskp}
	The $t$-map of a Miquel map is a dSKP map.
\end{theorem}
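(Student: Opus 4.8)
The plan is to reduce Theorem~\ref{th:tdskp} to the single-configuration statement that the centers of the six circles in one Miquel configuration satisfy the dSKP relation~\eqref{eq:dskp}, and then to verify that statement using Miquel's theorem together with the geometry of radical axes. By the combinatorics set up in Section~\ref{sec:miquelmaps}, every octahedron $O \in \oct$ with $O_3 = k+1$ carries precisely one Miquel configuration relating the circles of $(c_k,p_k)$ to $(c_{k+1},p_{k+1})$, and its six vertices are $\shi{\pm i}O$ for $i \in \{1,2,3\}$. So it suffices to show that for each such $O$,
\begin{align}
	\mr\bigl(t(\shi 1 O),\,t(\shi 2 O),\,t(\shi 3 O),\,t(\shi{\sm 1}O),\,t(\shi{\sm 2}O),\,t(\shi{\sm 3}O)\bigr) = -1.
\end{align}
Translating back to the circle-pattern picture of Figure~\ref{fig:miquellabels}, four of these six circles are the ``old'' circles $c_k(v_1),\dots,c_k(v_4)$ through a common even vertex, one is the old even circle $c_k(v)$, and one is the new even circle $c_{k+1}(v)$; this is exactly the configuration of Equation~\eqref{eq:conicaldskp}, which is the content of the result of \cite{amiquel,klrr} quoted in the excerpt.

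The main work is therefore the proof of~\eqref{eq:conicaldskp} itself. I would argue as follows. Write $p_{12}, p_{23}, p_{34}, p_{41}$ for the ``old'' intersection points $p_0(f_{ij})$ lying on the central circle $c_0(v)$, and $q_{12},q_{23},q_{34},q_{41}$ for the ``new'' ones $p_1(f_{ij})$ on $c_1(v)$; by construction $p_{ij}, q_{ij} \in c_0(v_i) \cap c_0(v_j)$. The key geometric input is that the line through $p_{ij}$ and $q_{ij}$ is the radical axis of $c_0(v_i)$ and $c_0(v_j)$, hence is perpendicular to the line joining the centers $t_0(v_i)$ and $t_0(v_j)$, and moreover the foot of that perpendicular is the midpoint of the chord, so there is a clean relation between $t_0(v_i) - t_0(v_j)$ and the chord direction. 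The cleanest route is to use the inscribed-angle / tangent-direction description: for two circles meeting at points $a,b$, the vector between centers is determined, up to real scaling, by rotating $a-b$ by $90^\circ$; combining the four such relations around $v$ expresses each factor $t_0(v_i) - t_0(v)$ (and $t_1(v) - t_0(v_3)$, etc.) in terms of differences of the $p_{ij}$ and $q_{ij}$, times explicit complex scalars. Substituting these into the multi-ratio, the real scaling factors cancel (the multi-ratio is a ratio of equally many numerator and denominator differences), and one is left with a multi-ratio expression in the eight points $p_{ij}, q_{ij}$ which equals $-1$ precisely because of Miquel's theorem — concretely, Miquel's theorem says the four $q_{ij}$ are concyclic, equivalently that a certain cross-ratio of the eight points is real with the right sign, and the algebra collapses to $-1$.

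The step I expect to be the main obstacle is the bookkeeping of the complex scaling factors: one must track the directions carefully enough to see that every factor of the form $\lambda_{ij}\,(\text{unit vector})$ appearing in a numerator term is matched by a factor with the same or reciprocal $\lambda$ in a denominator term, so that only a product of unimodular (indeed $\pm i$) constants survives, and that this product is exactly what makes the sign $-1$ rather than $+1$. An alternative that sidesteps part of this is to invoke directly that Equation~\eqref{eq:conicaldskp} is Equation~$(\chi_2)$ of~\cite{abs} with its full octahedral symmetry and that~\cite{amiquel,klrr} have already established it for conical nets; then Theorem~\ref{th:tdskp} is essentially a restatement, and the only thing to check is that the identification of $\lat$-octahedra with Miquel configurations in Section~\ref{sec:miquelmaps} matches the labeling of Figure~\ref{fig:miquellabels} and that the multi-ratio's octahedral symmetry makes the vertex ordering immaterial. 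I would present the short argument (citing~\eqref{eq:conicaldskp}) as the proof, and relegate the self-contained geometric derivation sketched above to a remark or to the later sections where the analogous statements for $\ub$ and $\uw$ are proved in detail.
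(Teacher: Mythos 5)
Your proposal matches the paper's treatment: Theorem~\ref{th:tdskp} is presented there as a quoted literature result, equivalent to Equation~\eqref{eq:conicaldskp} from \cite{amiquel,klrr}, with the only content being the identification of octahedra of $\lat$ with Miquel configurations and the octahedral symmetry of the multi-ratio making the vertex ordering immaterial --- which is exactly the short argument you say you would present. Your sketched radical-axis derivation of \eqref{eq:conicaldskp} is not carried out in the paper (which instead remarks that Lemma~\ref{lem:tumr} yields an alternative proof), but since you explicitly relegate it to a remark it does not affect the correctness of your proof.
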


Consequently, the centers $t_0$ of a circle pattern $(c_0, p_0)$ are initial data and determine the whole $t$-map.

Let us now turn to an analogous new result. For this, we need to split the intersection points of a Miquel map according to their parity on $\Z^3$.

\begin{definition}
	Let $(c,p)$ be a Miquel map. The $\ub$-map is the map $\ub: \tetb \rightarrow \hC$-map which is the restriction of $p$ to $\tetb$. The $\uw$-map is the map $\uw: \tetw \rightarrow \hC$-map which is the restriction of $p$ to $\tetw$. 
\end{definition}

\begin{figure}
	\centering
	\includegraphics[scale=0.5]{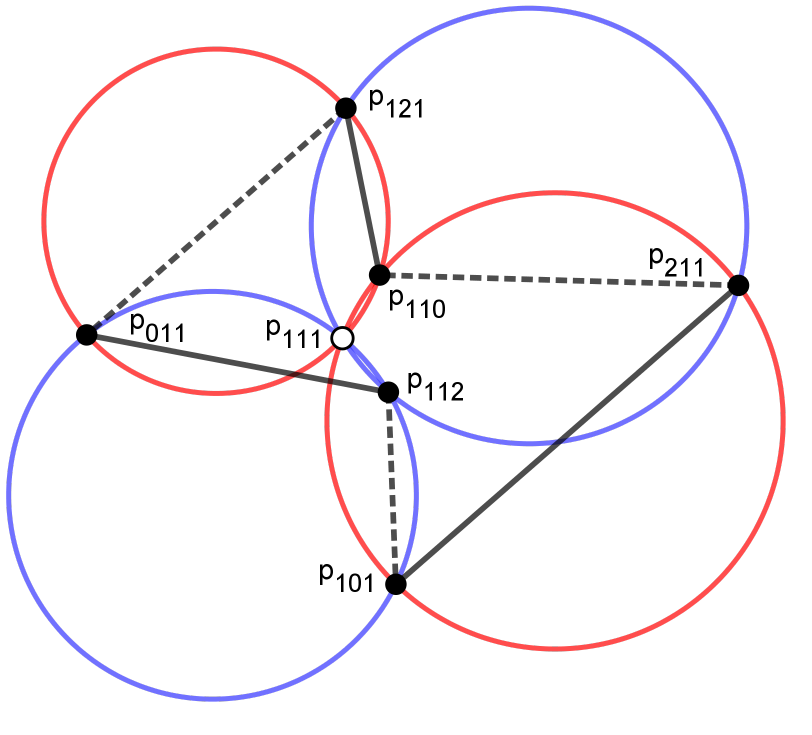}
	\hspace{5mm}
	\includegraphics[scale=0.5]{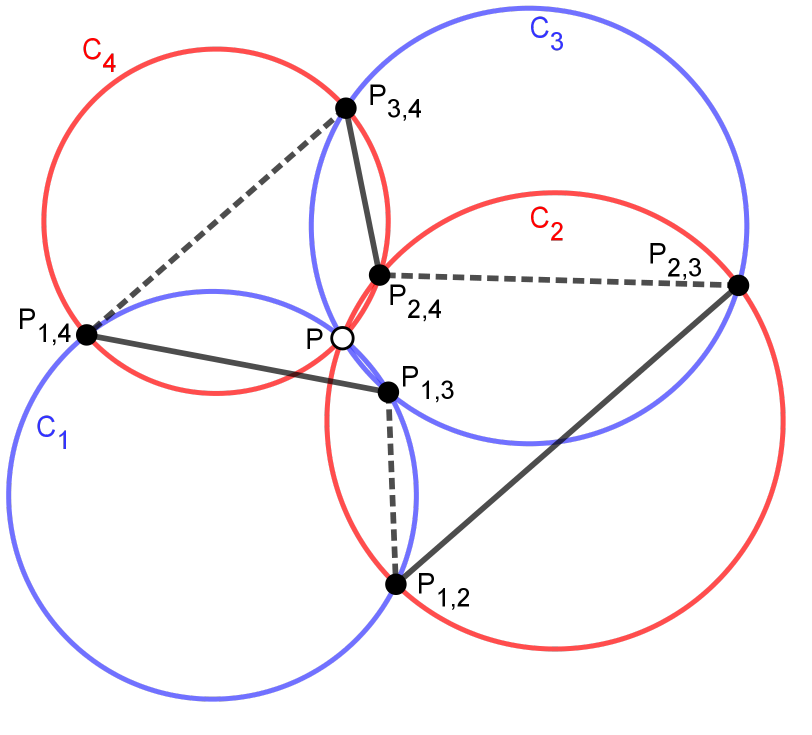}
	\caption{Left: coordinates in the octahedral lattice for a Clifford configuration at white tetrahedron $(1,1,1)$, the dSKP equation for $\ub$ is highlighted. Right: labeling for the proof of \ref{th:udskp}.}
	\label{fig:cpathree}
\end{figure}

\begin{theorem} \label{th:udskp}
	The $\ub$-map and $\uw$-map of a Miquel map are dSKP maps.
\end{theorem}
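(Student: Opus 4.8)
The plan is to reduce the statement to a configuration-theoretic identity about the eight intersection points of a single Miquel configuration, and then to verify that identity. Concretely, fix a white tetrahedron $T \in \tetw$ (the black case being symmetric), and observe that the dSKP equation for $\ub$ at the point corresponding to $T$ involves only the six intersection points $p(\shi 1 T), p(\shi 2 T), p(\shi 3 T)$ and $p(\shi{\sm1}T), p(\shi{\sm2}T), p(\shi{\sm3}T)$ — all of which are intersection points of black tetrahedra adjacent to one and the same octahedron $O$ (namely $O = T - $ the appropriate correction to land in $\Z^3_-$), hence of one and the same Miquel configuration, as described in Section~\ref{sec:miquelmaps}. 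So the content of Theorem~\ref{th:udskp} is the purely local claim: among the eight intersection points of a Miquel configuration, the six that belong to one parity class satisfy the multi-ratio equation $\mr(\cdots) = -1$, with the cyclic order dictated by the octahedral combinatorics.

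First I would set up coordinates for a Miquel configuration as in Figure~\ref{fig:miquelathree}: six circles $c(\shi{\pm i}O)$ and eight intersection points, where the point $p(\shi{\sm i_1,\dots}O)$ lies on exactly the circles whose vertex indices it shares. Then I would invoke Theorem~\ref{th:tdskp} and Equation~\eqref{eq:conicaldskp} as the known input — we already know the circle \emph{centers} form a dSKP map — and try to transfer that to the intersection points. The cleanest route is probably to use inversive/Möbius-invariant coordinates: apply a Möbius transformation sending one of the eight intersection points to $\infty$. Under such a transformation, four of the six circles (those through that point) become lines, the Miquel configuration degenerates into a configuration of lines and circles whose incidences are easy to track, and the remaining intersection points become finite points for which the multi-ratio condition can be checked by a direct (but short) computation, e.g.\ by expressing each point as an intersection of two lines and computing the relevant differences. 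Alternatively, one can argue without choosing a base point by exhibiting the six points as lying on a common conic/circle or by using a known lemma that a Miquel configuration \emph{is} a dSKP-type configuration (the Clifford configuration interpretation hinted at around Figure~\ref{fig:cliffordlabels}); the existence of the "other" Miquel circle $c_1(v)$ through four of the eight points is exactly Miquel's theorem, which suggests that the eight points already carry enough projective structure to force the multi-ratio identity.

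The main obstacle I anticipate is bookkeeping: getting the cyclic order of the six arguments right so that the octahedral symmetry of Equation~\eqref{eq:dskp} is respected, and making sure the parity split $\tetb$ versus $\tetw$ is tracked consistently with the index shifts — a sign or an ordering error here would be easy to make and fatal. A secondary issue is genericity: the multi-ratios must be well-defined (no problematic coincidences among the eight points), but the paper has already declared a blanket genericity assumption, so I would simply invoke that. I would also double-check that the equation obtained really is dSKP on $\Z^3_\pm$ in the sense of Definition~\ref{def:dskp} — i.e.\ that as $z$ ranges over all of $\tetw$ (resp.\ $\tetb$) the local identities assemble into the lattice equation for $\ub$ (resp.\ $\uw$) — which is immediate from the fact established in Section~\ref{sec:miquelmaps} that every relevant octahedron contributes exactly one such local identity. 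The expected length: once the local identity is proven for a single Miquel configuration, the global statement is a one-line consequence.
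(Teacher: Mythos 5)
There is a genuine gap: your localization of the identity is wrong. For $z\in\Z^3_-$ the dSKP equation for $\ub$ involves the six points $\ub(\shi{1}z),\dots,\ub(\shim{3}z)$, i.e.\ the intersection points of the six black tetrahedra $z\pm e_i$. These do \emph{not} all belong to a single Miquel configuration: the eight tetrahedra adjacent to one octahedron form a $2\times2\times2$ block of cubes of $\Z^3$ (differing by at most $1$ in each coordinate), whereas $\shi{1}z$ and $\shim{1}z$ differ by $2e_1$. Equivalently, in the picture of Equation~\eqref{eq:pdskp} the six points live on three consecutive time levels $k-1,k,k+1$, while a single Miquel configuration only touches two. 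So the reduction "it suffices to check a multi-ratio identity among the eight points of one Miquel configuration" is false, and the subsequent normalization step inherits the error: in a Miquel configuration each intersection point lies on exactly \emph{three} of the six circles, not four, so "four of the six circles become lines" does not describe that configuration.

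The correct local structure — which is what the paper uses — is the \emph{Clifford configuration}: the four circles at the vertices of the white tetrahedron at $z$ all pass through the single point $\uw(z)$, and the six points $\ub(\shi{\pm i}z)$ are precisely the six pairwise \emph{second} intersection points of these four circles. The multi-ratio identity for such a configuration (four circles through a common point) is the result of Konopelchenko--Schief cited in the paper, and the theorem follows by matching labels. You do gesture at this ("the Clifford configuration interpretation"), and your computational fallback would in fact succeed if applied there: sending the common point $\uw(z)$ to $\infty$ turns the four circles into four lines, the six points become the vertices of a complete quadrilateral, and the multi-ratio $=-1$ is a short Menelaus-type computation. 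But as written, your primary argument rests on an incorrect identification of which circles and which common point are involved, and the proposed detour through Theorem~\ref{th:tdskp} (transferring the dSKP property from centers to intersection points) plays no role in any working proof. To repair the proposal, replace the Miquel-configuration reduction by the Clifford-configuration reduction and carry out (or cite) the complete-quadrilateral computation.
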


\begin{proof}
	It has been shown in \cite{ksclifford} how the dSKP equation arises in a Clifford 4-circle configuration. Let us recall this result. Consider four circles $C_1, C_2, C_3, C_4$ that intersect in a point $P$, and let $P_{i,j}$ be the other intersection point of $P_i$ and $P_j$ for $i,j\in\{1,2,3,4\}$, see also Figure~\ref{fig:cpathree}. Then it was shown that
	\begin{align}
		\mr(P_{2,3},P_{3,4}, P_{2,4}, P_{1,4}, P_{1,2},P_{1,3}) = -1.
	\end{align}
	Now let us apply this result to our claim. Let $z \in \lat$ and consider the identification of circles
	\begin{align}
		c(z) = C_1, \  c(\shi{1,3} z) = C_2,\ c(\shi{1,2} z) = C_3, \ c(\shi{2,3} z) = C_4.
	\end{align}
	All four circles pass through $p(z) = \uw(z)$, so we may identify $\uw(z) = P$. The remaining point identifications are
	\begin{align}
		\ub(\shi{\sm 2}z) &= P_{1,2},&  \ub(\shi{1}z) &= P_{2,3},&  \ub(\shi{2}z) &= P_{3,4},\\
		 \ub(\shi{\sm 1}z) &= P_{1,4} & \ub(\shi{\sm 3}z) &= P_{1,3} & \ub(\shi{3}z) &= P_{2,4}.
	\end{align}
	Therefore, the map $\ub$ satisfies
	\begin{align}
		\mr(\ub(\shi 1  z), \ub(\shi 2  z),\ub(\shi 3  z),\ub(\shi{\sm 1}  z),\ub(\shi {\sm 2}  z), \ub(\shi {\sm 3} z)) = -1, \label{eq:udskp}
	\end{align}	
	as required by Definition~\ref{def:dskp}. An analogous argument works for $\uw$.
\end{proof}

As a consequence of Theorem~\ref{th:udskp}, the restriction of the $\ub$-map to $\{z \in \Z^3_{\pm} \ | \ z_3 \in \{0,1\} \}$ forms initial data and therefore determines the whole $\ub$-map, and analogously for the $\uw$-map.

\begin{remark}
	In \cite{adtmdskp} it was shown how values of the $t$-map $t$ can be expressed in terms of initial data $t^0, t^1$ using dimer partition functions. The same tools apply to express the values of $\ub$ and $\uw$ in terms of initial data.
\end{remark}

\section{A combined dSKP map on $A_4$} \label{sec:afour}

We need some combinatorial and geometric preliminaries. We begin with combinatorics.
Let us recall the definition of the $A_4$ lattice, namely
\begin{align}
	A_4 = \Z^5_0 = \{ z \in \Z^5 \ | \ \sum_{i=1}^5 z_i = 0 \}.
\end{align}
In the following, we consider the subset
\begin{align}
	\hat A_4 = \left\{ z \in A_4 \ | \ z_5 \in \{-1,0,1\} \right\}.
\end{align}
Next, we introduce an identification $\phi:  \hlat \rightarrow \hat A_4$ of $\hlat = \lat \cup \tet$ with $\hat A_4$ as follows, see also Figure~\ref{fig:afour}. Let us identify $\tet$ with the cubes of $\Z^3$, which in turn we identify with $\Z^3 + \frac12(1,1,1)$. Introduce the four vectors in $\Z^5$
\begin{align}
	a_1  &= (1,0,0,0,-1), & a_2 &= (0,1,0,0,-1),\\
	a_3 &= (0,0,1,0,-1), & a_4 &= (0,0,0,1,-1),
\end{align}
and the four vectors in $\frac12\Z^3$
\begin{align}
	b_1  &= \frac12(+1,-1,-1), & b_2 &= \frac12(-1,+1,-1),\\
	b_3 &= \frac12(-1,-1,+1), & b_4 &= 	\frac12(+1,+1,+1).
\end{align}

\begin{figure}
	\centering
	\begin{tikzpicture}[scale=2,font=\sffamily]
		\coordinate (e1) at (1,0);
		\coordinate (e2) at (30:0.8);
		\coordinate (e3) at (0,1);
	
		\node[ivert] (v000) at ($0*(e1)+0*(e2)+0*(e3)$) {123\={4}};
		\node[wvert] (v100) at ($1*(e1)+0*(e2)+0*(e3)$) {};
		\node[wvert] (v010) at ($0*(e1)+1*(e2)+0*(e3)$) {};
		\node[wvert] (v001) at ($0*(e1)+0*(e2)+1*(e3)$) {};
		\node[ivert] (v110) at ($1*(e1)+1*(e2)+0*(e3)$) {12};
		\node[ivert] (v011) at ($0*(e1)+1*(e2)+1*(e3)$) {23};
		\node[ivert] (v101) at ($1*(e1)+0*(e2)+1*(e3)$) {13};
		\node[ivert] (v200) at ($2*(e1)+0*(e2)+0*(e3)$) {11};
		\node[ivert] (v020) at ($0*(e1)+2*(e2)+0*(e3)$) {22};
		\node[ivert] (v002) at ($0*(e1)+0*(e2)+2*(e3)$) {33};		
		\node[wvert] (v210) at ($2*(e1)+1*(e2)+0*(e3)$) {};
		\node[wvert] (v120) at ($1*(e1)+2*(e2)+0*(e3)$) {};
		\node[wvert] (v102) at ($1*(e1)+0*(e2)+2*(e3)$) {};		
		\node[wvert] (v201) at ($2*(e1)+0*(e2)+1*(e3)$) {};
		\node[wvert] (v021) at ($0*(e1)+2*(e2)+1*(e3)$) {};
		\node[wvert] (v012) at ($0*(e1)+1*(e2)+2*(e3)$) {};		
		\node[wvert] (v111) at ($1*(e1)+1*(e2)+1*(e3)$) {};		
		\node[ivert] (v211) at ($2*(e1)+1*(e2)+1*(e3)$) {14};		
		\node[ivert] (v121) at ($1*(e1)+2*(e2)+1*(e3)$) {24};		
		\node[ivert] (v112) at ($1*(e1)+1*(e2)+2*(e3)$) {34};		
		\node[ivert] (v220) at ($2*(e1)+2*(e2)+0*(e3)$) {12\={3}4};
		\node[ivert] (v022) at ($0*(e1)+2*(e2)+2*(e3)$) {\={1}234};
		\node[ivert] (v202) at ($2*(e1)+0*(e2)+2*(e3)$) {1\={2}34};
		\node[wvert] (v221) at ($2*(e1)+2*(e2)+1*(e3)$) {};		
		\node[wvert] (v212) at ($2*(e1)+1*(e2)+2*(e3)$) {};		
		\node[wvert] (v122) at ($1*(e1)+2*(e2)+2*(e3)$) {};		
		\node[ivert] (v222) at ($2*(e1)+2*(e2)+2*(e3)$) {44};

		\draw[-, line width=1pt]
			(v000) edge (v100) edge (v010) edge (v001)
			(v100) edge (v200) edge (v110) edge (v101)
			(v010) edge (v110) edge (v020) edge (v011)
			(v001) edge (v101) edge (v011) edge (v002)
			(v200) edge (v210) edge (v201)
			(v020) edge (v120) edge (v021) 
			(v002) edge (v102) edge (v012) 
			(v110) edge (v111) edge (v210) edge (v120)
			(v011) edge (v111) edge (v021) edge (v012)
			(v210) edge (v220) edge (v211)
			(v120) edge (v220) edge (v121)
			(v021) edge (v022) edge (v121)
			(v012) edge (v022) edge (v112)
			(v222) edge (v221) edge (v212) edge (v122)
			(v221) edge (v220) edge (v211) edge (v121)
			(v122) edge (v121) edge (v112) edge (v022)
		;
		\draw[-,white, line width=2.5pt]
			(v011) edge (v111)
			(v110) edge (v111)
			(v111) edge (v211) edge (v121) edge (v112)
		;
		\draw[-,black, line width=1pt]
			(v110) edge (v111)
			(v011) edge (v111)
			(v111) edge (v211) edge (v121) edge (v112)
		;
		\draw[-,white, line width=2.5pt]
			(v100) edge (v101)
			(v001) edge (v101) 
			(v101) edge (v111) edge (v201) edge (v102)
			(v200) edge (v201)
			(v002) edge (v102)
			(v201) edge (v202) edge (v211)
			(v210) edge (v211)
			(v102) edge (v202) edge (v112)
			(v012) edge (v112)
			(v212) edge (v211) edge (v202) edge (v112)
		;
		\draw[-,black, line width=1pt]
			(v100) edge (v101)
			(v001) edge (v101) 
			(v101) edge (v111) edge (v201) edge (v102)
			(v200) edge (v201)
			(v002) edge (v102)
			(v201) edge (v202) edge (v211)
			(v210) edge (v211)
			(v102) edge (v202) edge (v112)
			(v012) edge (v112)
			(v212) edge (v211) edge (v202) edge (v112)
		;

		\draw[fill=black,opacity=0.3,draw=none]
			(v000.center) -- (v100.center) -- (v110.center) -- (v111.center) -- (v011.center) -- (v001.center) -- cycle
			(v101.center) -- (v201.center) -- (v211.center) -- (v212.center) -- (v112.center) -- (v102.center) -- cycle
			(v110.center) -- (v210.center) -- (v220.center) -- (v221.center) -- (v121.center) -- (v111.center) -- cycle
			(v011.center) -- (v111.center) -- (v121.center) -- (v122.center) -- (v022.center) -- (v012.center) -- cycle
		;
		\draw[fill=black,opacity=0.35,draw=none]
			(v101.center) -- (v201.center) -- (v211.center) -- (v212.center) -- (v112.center) -- (v102.center) -- cycle
		;

		\node[ivert] (v000) at ($0*(e1)+0*(e2)+0*(e3)$) {123\={4}};
		\node[wvert] (v100) at ($1*(e1)+0*(e2)+0*(e3)$) {};
		\node[wvert] (v010) at ($0*(e1)+1*(e2)+0*(e3)$) {};
		\node[wvert] (v001) at ($0*(e1)+0*(e2)+1*(e3)$) {};
		\node[ivert] (v110) at ($1*(e1)+1*(e2)+0*(e3)$) {12};
		\node[ivert] (v011) at ($0*(e1)+1*(e2)+1*(e3)$) {23};
		\node[ivert] (v101) at ($1*(e1)+0*(e2)+1*(e3)$) {13};
		\node[ivert] (v200) at ($2*(e1)+0*(e2)+0*(e3)$) {11};
		\node[ivert] (v020) at ($0*(e1)+2*(e2)+0*(e3)$) {22};
		\node[ivert] (v002) at ($0*(e1)+0*(e2)+2*(e3)$) {33};		
		\node[wvert] (v210) at ($2*(e1)+1*(e2)+0*(e3)$) {};
		\node[wvert] (v120) at ($1*(e1)+2*(e2)+0*(e3)$) {};
		\node[wvert] (v102) at ($1*(e1)+0*(e2)+2*(e3)$) {};		
		\node[wvert] (v201) at ($2*(e1)+0*(e2)+1*(e3)$) {};
		\node[wvert] (v021) at ($0*(e1)+2*(e2)+1*(e3)$) {};
		\node[wvert] (v012) at ($0*(e1)+1*(e2)+2*(e3)$) {};		
		\node[wvert] (v111) at ($1*(e1)+1*(e2)+1*(e3)$) {};		
		\node[ivert] (v211) at ($2*(e1)+1*(e2)+1*(e3)$) {14};		
		\node[ivert] (v121) at ($1*(e1)+2*(e2)+1*(e3)$) {24};		
		\node[ivert] (v112) at ($1*(e1)+1*(e2)+2*(e3)$) {34};		
		\node[ivert] (v220) at ($2*(e1)+2*(e2)+0*(e3)$) {12\={3}4};
		\node[ivert] (v022) at ($0*(e1)+2*(e2)+2*(e3)$) {\={1}234};
		\node[ivert] (v202) at ($2*(e1)+0*(e2)+2*(e3)$) {1\={2}34};
		\node[wvert] (v221) at ($2*(e1)+2*(e2)+1*(e3)$) {};		
		\node[wvert] (v212) at ($2*(e1)+1*(e2)+2*(e3)$) {};		
		\node[wvert] (v122) at ($1*(e1)+2*(e2)+2*(e3)$) {};		
		\node[ivert] (v222) at ($2*(e1)+2*(e2)+2*(e3)$) {44};
	\end{tikzpicture}
	\hspace{2mm}
	\begin{tikzpicture}[scale=2,font=\sffamily\scriptsize]
		
		\coordinate (e1) at (1,0);
		\coordinate (e2) at (30:0.8);
		\coordinate (e3) at (0,1);
		
		\node[ivert] (v000) at ($0*(e1)+0*(e2)+0*(e3)$) {123\={4}};
		\coordinate (v100) at ($1*(e1)+0*(e2)+0*(e3)$) {};
		\coordinate (v010) at ($0*(e1)+1*(e2)+0*(e3)$) {};
		\coordinate (v001) at ($0*(e1)+0*(e2)+1*(e3)$) {};
		\node[ivert] (v110) at ($1*(e1)+1*(e2)+0*(e3)$) {12};
		\node[ivert] (v011) at ($0*(e1)+1*(e2)+1*(e3)$) {23};
		\node[ivert] (v101) at ($1*(e1)+0*(e2)+1*(e3)$) {13};
		\node[ivert] (v200) at ($2*(e1)+0*(e2)+0*(e3)$) {11};
		\node[ivert] (v020) at ($0*(e1)+2*(e2)+0*(e3)$) {22};
		\node[ivert] (v002) at ($0*(e1)+0*(e2)+2*(e3)$) {33};		
		\coordinate (v210) at ($2*(e1)+1*(e2)+0*(e3)$) {};
		\coordinate (v120) at ($1*(e1)+2*(e2)+0*(e3)$) {};
		\coordinate (v102) at ($1*(e1)+0*(e2)+2*(e3)$) {};		
		\coordinate (v201) at ($2*(e1)+0*(e2)+1*(e3)$) {};
		\coordinate (v021) at ($0*(e1)+2*(e2)+1*(e3)$) {};
		\coordinate (v012) at ($0*(e1)+1*(e2)+2*(e3)$) {};		
		\coordinate (v111) at ($1*(e1)+1*(e2)+1*(e3)$) {};		
		\node[ivert] (v211) at ($2*(e1)+1*(e2)+1*(e3)$) {14};		
		\node[ivert] (v121) at ($1*(e1)+2*(e2)+1*(e3)$) {24};		
		\node[ivert] (v112) at ($1*(e1)+1*(e2)+2*(e3)$) {34};		
		\node[ivert] (v220) at ($2*(e1)+2*(e2)+0*(e3)$) {12\={3}4};
		\node[ivert] (v022) at ($0*(e1)+2*(e2)+2*(e3)$) {\={1}234};
		\node[ivert] (v202) at ($2*(e1)+0*(e2)+2*(e3)$) {1\={2}34};
		\coordinate (v221) at ($2*(e1)+2*(e2)+1*(e3)$) {};		
		\coordinate (v212) at ($2*(e1)+1*(e2)+2*(e3)$) {};		
		\coordinate (v122) at ($1*(e1)+2*(e2)+2*(e3)$) {};		
		\node[ivert] (v222) at ($2*(e1)+2*(e2)+2*(e3)$) {44};		
		
		\node[ivert] (c000) at ($.5*(e1)+.5*(e2)+.5*(e3)$) {123};	
		\node[ivert] (c100) at ($1.5*(e1)+.5*(e2)+.5*(e3)$) {1};	
		\node[ivert] (c010) at ($.5*(e1)+1.5*(e2)+.5*(e3)$) {2};	
		\node[ivert] (c001) at ($.5*(e1)+.5*(e2)+1.5*(e3)$) {3};	
		\node[ivert] (c110) at ($1.5*(e1)+1.5*(e2)+.5*(e3)$) {124};	
		\node[ivert] (c101) at ($1.5*(e1)+.5*(e2)+1.5*(e3)$) {134};	
		\node[ivert] (c011) at ($.5*(e1)+1.5*(e2)+1.5*(e3)$) {234};	
		\node[ivert] (c111) at ($1.5*(e1)+1.5*(e2)+1.5*(e3)$) {4};	

		\draw[-, gray]
			(v000) edge (v100) edge (v010) edge (v001)
			(v100) edge (v200) edge (v110) edge (v101)
			(v010) edge (v110) edge (v020) edge (v011)
			(v001) edge (v101) edge (v011) edge (v002)
			(v200) edge (v210) edge (v201)
			(v020) edge (v120) edge (v021) 
			(v002) edge (v102) edge (v012) 
			(v110) edge (v111) edge (v210) edge (v120)
			(v011) edge (v111) edge (v021) edge (v012)
			(v210) edge (v220) edge (v211)
			(v120) edge (v220) edge (v121)
			(v021) edge (v022) edge (v121)
			(v012) edge (v022) edge (v112)
			(v222) edge (v221) edge (v212) edge (v122)
			(v221) edge (v220) edge (v211) edge (v121)
			(v122) edge (v121) edge (v112) edge (v022)
		;
		\draw[-, line width=1pt]
			(c000) edge[ black] (v000) edge[ blue] (v110) edge[ green] (v101) edge[ red] (v011) 
			(c110) edge[ black] (v110) edge[ blue] (v220) edge[ green] (v211) edge[ red] (v121) 
			(c101) edge[ black] (v101) edge[ blue] (v211) edge[ green] (v202) edge[ red] (v112) 
			(c011) edge[ black] (v011) edge[ blue] (v121) edge[ green] (v112) edge[ red] (v022) 
			(c111) edge[ black] (v222) edge[ red] (v211) edge[ green] (v121) edge[ blue] (v112) 
			(c100) edge[ black] (v211) edge[ red] (v200) edge[ green] (v110) edge[ blue] (v101) 
			(c010) edge[ black] (v121) edge[ red] (v110) edge[ green] (v020) edge[ blue] (v011) 
			(c001) edge[ black] (v112) edge[ red] (v101) edge[ green] (v011) edge[ blue] (v002) 
		;
		\draw[-,white, line width=1.8pt]
			(v011) edge (v111)
			(v110) edge (v111)
			(v111) edge (v211) edge (v121) edge (v112)
		;
		\draw[-,gray]
			(v110) edge (v111)
			(v011) edge (v111)
			(v111) edge (v211) edge (v121) edge (v112)
		;
		\draw[-,white, line width=2.5pt]
			(c001) edge[] (v112) 
			(c100) edge[ ] (v211)
			(c101) edge[ ] (v112) 
			(c110) edge[ ] (v110)
			(c111) edge[ ] (v211) edge[ ] (v112) 
			(c011) edge[ ] (v011)
			(c010) edge[ ] (v110) edge[ ] (v011) 
		;
		\draw[-, line width=1pt]
			(c001) edge[ black] (v112) 
			(c100) edge[ black] (v211)
			(c101) edge[ red] (v112) 
			(c110) edge[ black] (v110)		
			(c111) edge[ red] (v211) edge[ blue] (v112) 
			(c011) edge[ black] (v011)
			(c010) edge[ red] (v110) edge[ blue] (v011) 
 		;
		\draw[-,white, line width=1.8pt]
			(v100) edge (v101)
			(v001) edge (v101) 
			(v101) edge (v111) edge (v201) edge (v102)
			(v200) edge (v201)
			(v002) edge (v102)
			(v201) edge (v202) edge (v211)
			(v210) edge (v211)
			(v102) edge (v202) edge (v112)
			(v012) edge (v112)
			(v212) edge (v211) edge (v202) edge (v112)
		;
		\draw[-,gray]
			(v100) edge (v101)
			(v001) edge (v101) 
			(v101) edge (v111) edge (v201) edge (v102)
			(v200) edge (v201)
			(v002) edge (v102)
			(v201) edge (v202) edge (v211)
			(v210) edge (v211)
			(v102) edge (v202) edge (v112)
			(v012) edge (v112)
			(v212) edge (v211) edge (v202) edge (v112)
		;		
		\draw[-,white, line width=2.5pt]
			(c100)  edge[ ] (v200) edge[ ] (v101) 
			(c000) edge[ ] (v110)
			(c101) edge[ ] (v101) edge[ ] (v211) edge[ ] (v202) 
			(c001) edge[ ] (v101)
			(c000) edge[ ] (v000)
			(c001) edge[ ] (v002) 
		;
		\draw[-,black, line width=1pt]
			(c100)  edge[ red] (v200) edge[ blue] (v101) 
			(c000) edge[ blue] (v110)
			(c101) edge[ black] (v101) edge[ blue] (v211) edge[ green] (v202) 
			(c001) edge[ red] (v101)
			(c000) edge[ black] (v000)
			(c001) edge[ blue] (v002) 
		;

		\node[ivert] (v000) at ($0*(e1)+0*(e2)+0*(e3)$) {123\={4}};
		\coordinate (v100) at ($1*(e1)+0*(e2)+0*(e3)$) {};
		\coordinate (v010) at ($0*(e1)+1*(e2)+0*(e3)$) {};
		\coordinate (v001) at ($0*(e1)+0*(e2)+1*(e3)$) {};
		\node[ivert] (v110) at ($1*(e1)+1*(e2)+0*(e3)$) {12};
		\node[ivert] (v011) at ($0*(e1)+1*(e2)+1*(e3)$) {23};
		\node[ivert] (v101) at ($1*(e1)+0*(e2)+1*(e3)$) {13};
		\node[ivert] (v200) at ($2*(e1)+0*(e2)+0*(e3)$) {11};
		\node[ivert] (v020) at ($0*(e1)+2*(e2)+0*(e3)$) {22};
		\node[ivert] (v002) at ($0*(e1)+0*(e2)+2*(e3)$) {33};		
		\coordinate (v210) at ($2*(e1)+1*(e2)+0*(e3)$) {};
		\coordinate (v120) at ($1*(e1)+2*(e2)+0*(e3)$) {};
		\coordinate (v102) at ($1*(e1)+0*(e2)+2*(e3)$) {};		
		\coordinate (v201) at ($2*(e1)+0*(e2)+1*(e3)$) {};
		\coordinate (v021) at ($0*(e1)+2*(e2)+1*(e3)$) {};
		\coordinate (v012) at ($0*(e1)+1*(e2)+2*(e3)$) {};		
		\coordinate (v111) at ($1*(e1)+1*(e2)+1*(e3)$) {};		
		\node[ivert] (v211) at ($2*(e1)+1*(e2)+1*(e3)$) {14};		
		\node[ivert] (v121) at ($1*(e1)+2*(e2)+1*(e3)$) {24};		
		\node[ivert] (v112) at ($1*(e1)+1*(e2)+2*(e3)$) {34};		
		\node[ivert] (v220) at ($2*(e1)+2*(e2)+0*(e3)$) {12\={3}4};
		\node[ivert] (v022) at ($0*(e1)+2*(e2)+2*(e3)$) {\={1}234};
		\node[ivert, fill=white] (v202) at ($2*(e1)+0*(e2)+2*(e3)$) {1\={2}34};
		\coordinate (v221) at ($2*(e1)+2*(e2)+1*(e3)$) {};		
		\coordinate (v212) at ($2*(e1)+1*(e2)+2*(e3)$) {};		
		\coordinate (v122) at ($1*(e1)+2*(e2)+2*(e3)$) {};		
		\node[ivert] (v222) at ($2*(e1)+2*(e2)+2*(e3)$) {44};		
		
		\node[ivert] (c000) at ($.5*(e1)+.5*(e2)+.5*(e3)$) {123};	
		\node[ivert] (c100) at ($1.5*(e1)+.5*(e2)+.5*(e3)$) {1};	
		\node[ivert] (c010) at ($.5*(e1)+1.5*(e2)+.5*(e3)$) {2};	
		\node[ivert] (c001) at ($.5*(e1)+.5*(e2)+1.5*(e3)$) {3};	
		\node[ivert] (c110) at ($1.5*(e1)+1.5*(e2)+.5*(e3)$) {124};	
		\node[ivert] (c101) at ($1.5*(e1)+.5*(e2)+1.5*(e3)$) {134};	
		\node[ivert] (c011) at ($.5*(e1)+1.5*(e2)+1.5*(e3)$) {234};	
		\node[ivert] (c111) at ($1.5*(e1)+1.5*(e2)+1.5*(e3)$) {4};	
		
	\end{tikzpicture}

	\caption{Left: identification of $\lat$ with $A_3$. Right: identification of $\lat \cup \tetb \cup \tetw$ with $\hat A_4 \subset A_4$ (red, green, blue, black correspond to the 1-, 2-, 3-, 4-direction respectively). To improve the readability of the figure, we use shift notation, every label $I$ corresponds to $\shi{I}(z)$ for a point $z \in \Z^5_{-2}$, and we suppressed the 5th coordinate.}
	\label{fig:afour}
\end{figure}

Set $\phi(0)$ = 0. Let $z \in \lat = \Z^3_+$ and $z' \in \tet = \Z^3 + \frac12(1,1,1)$ such that $z$ is a vertex of the tetrahedron $z'$, then we require that for every $i\in [4]$ holds
\begin{align}
	\phi(z) - \phi(z') =
	\begin{cases}
		+a_i & \mbox{if } z-z' = +b_i, \\
		-a_i & \mbox{if } z-z' = -b_i.
	\end{cases}
\end{align}
This defines $\phi$ on all of $\hlat$. Since we gave a definition of $\phi$ via differences, we have to check that this is well-defined. Each closing conditions corresponds to a (planar) rhombus in $\lat \cup \tet$, involving two vertices of $\lat$, a tetrahedron of $\tetb$ and a tetrahedron of $\tetw$. Opposite edges in the rhombus are mapped to the same difference by $\phi$, therefore $\phi$ is well-defined.

Since all the $a_i$ vectors are in $A_5$, so is all of $\phi(\hlat)$. The restriction of $\phi$ to $\lat$ has constant coordinate $z_5 = 0$, the restriction to $\tetb$ has constant coordinate $z_5 = 1$ and the restriction to $\tetw$ has constant coordinate $z_5 = -1$. This reflects the fact that $\lat \simeq A_3$, $\tetb \simeq A_3$ and $\tetw \simeq A_3$. 

Using the identification of $\hlat$ with $\hat A_4$ via $\phi$, we may consider a combined map $(\ub, t, \uw): \hlat \rightarrow \C$.
Fix $m \in [5]$, set 
\begin{align}
	I = [5] \setminus \{m\},
\end{align}
and let $z \in \Z^5_{-2}$. Then the six points
\begin{align}
	\{\shi{i_1,i_2} z \ | \ i_1, i_2 \in I, \ i_1 \neq i_2 \},
\end{align}
are the vertices of an octahedron in $A_4$, and every octahedron may be represented in this way. If $m = 5$, we recover the octahedra that live in only one of the lattices $\lat,\tetb$ or $\tetw$. For $m \neq 5$, the octahedra involve three vertices of $\lat$ and three vertices of either $\tetb$ or $\tetw$.

\begin{definition}\label{def:higherdskp}
	A \emph{dSKP map} $a: A_n \rightarrow \hC$ is a map that satisfies the \emph{dSKP lattice equation}, that is a map such that 
	\begin{align}
	\mr(a(\shi {i_1,i_2} z), a(\shi {i_2,i_3} z), a(\shi {i_1,i_3} z), a(\shi {i_3,i_4} z), a(\shi {i_1,i_4}  z), a(\shi {i_2,i_4} z)) = -1,
	\end{align}
	for all $\{i_1,i_2,i_3,i_4\} \in \binom{[n]}{4}$, and $z \in \Z^{n+1}_{-2}$.
\end{definition}

To show how the $t$-, $\ub$- and $\uw$-maps are a dSKP map on $A_4$, we also need a geometric lemma. From the viewpoint of hyperbolic geometry, this lemma is actually about two related ideal tetrahedra, but we give an elementary proof.

\begin{figure}
	\centering
	\begin{tikzpicture}[scale=2,font=\sffamily\scriptsize]
	
	\coordinate (e1) at (1,0);
	\coordinate (e2) at (30:0.8);
	\coordinate (e3) at (0,1);
	
	\node[ivert] (v000) at ($0*(e1)+0*(e2)+0*(e3)$) {123\={4}};
	\coordinate (v100) at ($1*(e1)+0*(e2)+0*(e3)$) {};
	\coordinate (v010) at ($0*(e1)+1*(e2)+0*(e3)$) {};
	\coordinate (v001) at ($0*(e1)+0*(e2)+1*(e3)$) {};
	\node[ivert] (v110) at ($1*(e1)+1*(e2)+0*(e3)$) {12};
	\node[ivert] (v011) at ($0*(e1)+1*(e2)+1*(e3)$) {23};
	\node[ivert] (v101) at ($1*(e1)+0*(e2)+1*(e3)$) {13};
	\node[ivert] (v020) at ($0*(e1)+2*(e2)+0*(e3)$) {22};
	\coordinate (v210) at ($2*(e1)+1*(e2)+0*(e3)$) {};
	\coordinate (v120) at ($1*(e1)+2*(e2)+0*(e3)$) {};
	\coordinate (v201) at ($2*(e1)+0*(e2)+1*(e3)$) {};
	\coordinate (v021) at ($0*(e1)+2*(e2)+1*(e3)$) {};
	\coordinate (v012) at ($0*(e1)+1*(e2)+2*(e3)$) {};		
	\coordinate (v111) at ($1*(e1)+1*(e2)+1*(e3)$) {};		
	\node[ivert] (v211) at ($2*(e1)+1*(e2)+1*(e3)$) {14};		
	\node[ivert] (v121) at ($1*(e1)+2*(e2)+1*(e3)$) {24};		
	\node[ivert] (v112) at ($1*(e1)+1*(e2)+2*(e3)$) {34};		
	\node[ivert] (v220) at ($2*(e1)+2*(e2)+0*(e3)$) {12\={3}4};
	\node[ivert] (v022) at ($0*(e1)+2*(e2)+2*(e3)$) {\={1}234};
	\coordinate (v221) at ($2*(e1)+2*(e2)+1*(e3)$) {};		
	\coordinate (v122) at ($1*(e1)+2*(e2)+2*(e3)$) {};		
	
	\node[ivert] (c000) at ($.5*(e1)+.5*(e2)+.5*(e3)$) {123};	
	\node[ivert] (c010) at ($.5*(e1)+1.5*(e2)+.5*(e3)$) {2};	
	\node[ivert] (c110) at ($1.5*(e1)+1.5*(e2)+.5*(e3)$) {124};	
	\node[ivert] (c011) at ($.5*(e1)+1.5*(e2)+1.5*(e3)$) {234};	
	
	\draw[-, gray]
	(v000) edge (v100) edge (v010) edge (v001)
	(v100) edge (v110) edge (v101)
	(v010) edge (v110) edge (v020) edge (v011)
	(v001) edge (v101) edge (v011) 
	(v020) edge (v120) edge (v021) 
	(v110) edge (v111) edge (v210) edge (v120)
	(v011) edge (v111) edge (v021) edge (v012)
	(v210) edge (v220) edge (v211)
	(v120) edge (v220) edge (v121)
	(v021) edge (v022) edge (v121)
	(v012) edge (v022) edge (v112)
	(v221) edge (v220) edge (v211) edge (v121)
	(v122) edge (v121) edge (v112) edge (v022)
	;
	\draw[-, line width=1pt]
	(c000)  edge[ blue] (v110) edge[ red] (v011) 
	(c110) edge[ black] (v110) edge[ red] (v121) 
	(c011) edge[ black] (v011) edge[ blue] (v121) 
	;
	\draw[-,white, line width=1.8pt]
	(v011) edge (v111)
	(v110) edge (v111)
	(v111) edge (v211) edge (v121) edge (v112)
	;
	\draw[-,gray]
	(v110) edge (v111)
	(v011) edge (v111)
	(v111) edge (v211) edge (v121) edge (v112)
	;
	\draw[-,white, line width=2.5pt]
	(c110) edge[ ] (v110)
	(c011) edge[ ] (v011)
	;
	\draw[-, line width=1pt]
	(c110) edge[ black] (v110)		
	(c011) edge[ black] (v011)
	;
	\draw[-,white, line width=1.8pt]
	(v100) edge (v101)
	(v001) edge (v101) 
	(v101) edge (v111) 
	(v210) edge (v211)
	(v012) edge (v112)
	;
	\draw[-,gray]
	(v100) edge (v101)
	(v001) edge (v101) 
	(v101) edge (v111) 
	(v210) edge (v211)
	(v012) edge (v112)
	;	
	\draw[-,white, line width=2.5pt]
	(c000) edge[ ] (v110)
	;
	\draw[-,black, line width=1pt]
	(c000) edge[ blue] (v110)
	;

	\node[ivert] (v000) at ($0*(e1)+0*(e2)+0*(e3)$) {123\={4}};
	\coordinate (v100) at ($1*(e1)+0*(e2)+0*(e3)$) {};
	\coordinate (v010) at ($0*(e1)+1*(e2)+0*(e3)$) {};
	\coordinate (v001) at ($0*(e1)+0*(e2)+1*(e3)$) {};
	\node[ivert] (v110) at ($1*(e1)+1*(e2)+0*(e3)$) {12};
	\node[ivert] (v011) at ($0*(e1)+1*(e2)+1*(e3)$) {23};
	\node[ivert] (v101) at ($1*(e1)+0*(e2)+1*(e3)$) {13};
	\node[ivert] (v020) at ($0*(e1)+2*(e2)+0*(e3)$) {22};
	\coordinate (v210) at ($2*(e1)+1*(e2)+0*(e3)$) {};
	\coordinate (v120) at ($1*(e1)+2*(e2)+0*(e3)$) {};
	\coordinate (v201) at ($2*(e1)+0*(e2)+1*(e3)$) {};
	\coordinate (v021) at ($0*(e1)+2*(e2)+1*(e3)$) {};
	\coordinate (v012) at ($0*(e1)+1*(e2)+2*(e3)$) {};		
	\coordinate (v111) at ($1*(e1)+1*(e2)+1*(e3)$) {};		
	\node[ivert] (v211) at ($2*(e1)+1*(e2)+1*(e3)$) {14};		
	\node[ivert] (v121) at ($1*(e1)+2*(e2)+1*(e3)$) {24};		
	\node[ivert] (v112) at ($1*(e1)+1*(e2)+2*(e3)$) {34};		
	\node[ivert] (v220) at ($2*(e1)+2*(e2)+0*(e3)$) {12\={3}4};
	\node[ivert] (v022) at ($0*(e1)+2*(e2)+2*(e3)$) {\={1}234};
	\coordinate (v221) at ($2*(e1)+2*(e2)+1*(e3)$) {};		
	\coordinate (v122) at ($1*(e1)+2*(e2)+2*(e3)$) {};		
	
	\node[ivert] (c000) at ($.5*(e1)+.5*(e2)+.5*(e3)$) {123};	
	\node[ivert] (c010) at ($.5*(e1)+1.5*(e2)+.5*(e3)$) {2};	
	\node[ivert] (c110) at ($1.5*(e1)+1.5*(e2)+.5*(e3)$) {124};	
	\node[ivert] (c011) at ($.5*(e1)+1.5*(e2)+1.5*(e3)$) {234};		
	
	\end{tikzpicture}
	\hspace{4mm}
	\includegraphics[scale=0.3]{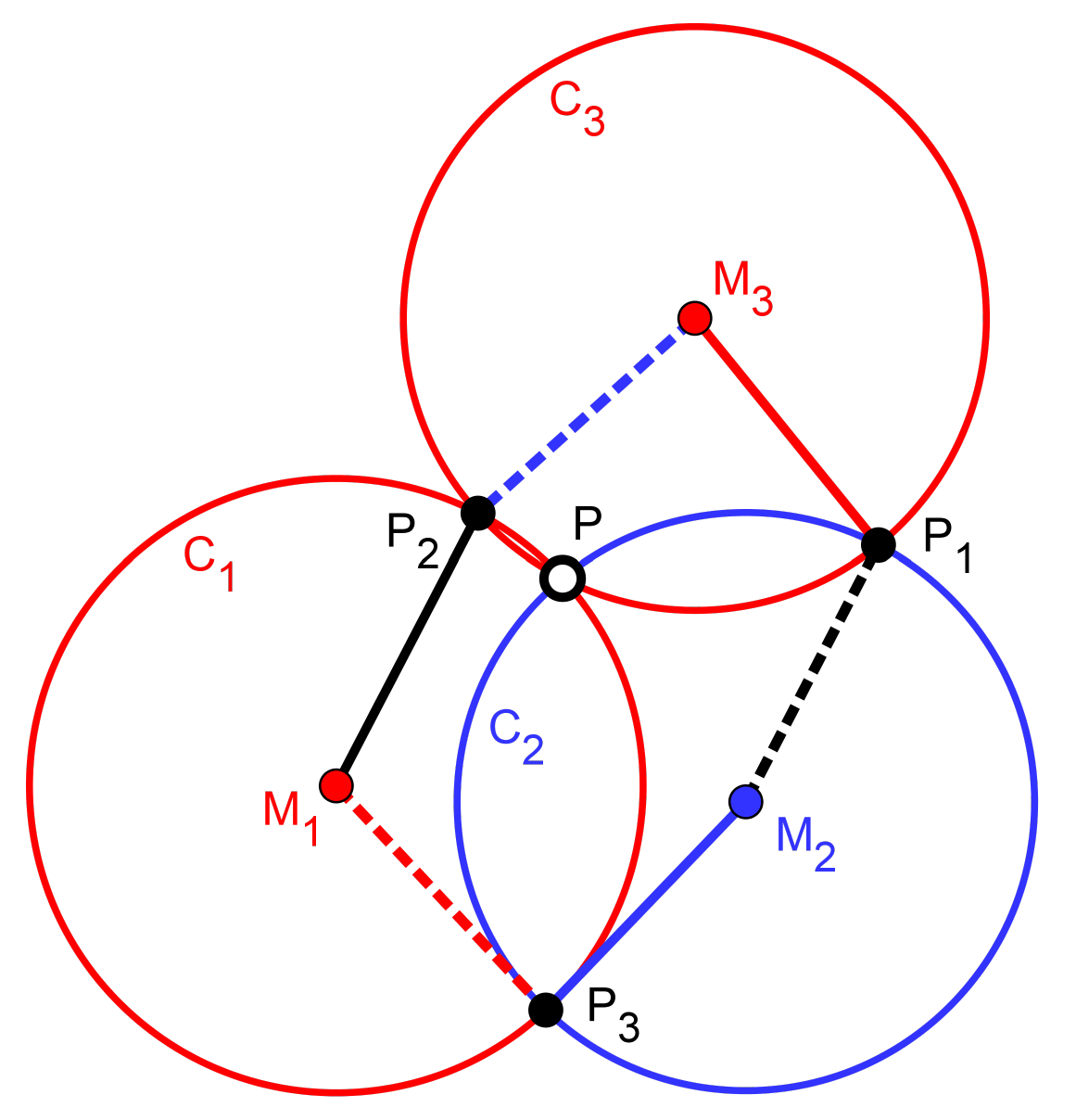}
	
	\caption{Left: an octahedron in $\hat A_4$, with three vertices corresponding to $\lat$ and three to $\tetb$. Right: the corresponding geometric configuration with labels as in Lemma~\ref{lem:tumr}.}
	\label{fig:tulemma}
\end{figure}

\begin{lemma}\label{lem:tumr}
	Consider three circles $C_1,C_2,C_3$ that intersect in a point $P$, and let $P_{1}, P_{2}, P_{3}$ be the other three intersection points, see Figure~\ref{fig:tulemma}. Let $M_1,M_2,M_3$ be the corresponding circle centers. Then we have
	\begin{align}
	\mr(P_1,M_3,P_2,M_1,P_3,M_2) = -1.
	\end{align}
\end{lemma}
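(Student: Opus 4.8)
The plan is to reduce the statement to a computation in coordinates, exploiting the Möbius invariance properties of the multi-ratio together with the special structure of the configuration. First I would place the common intersection point $P$ at the origin, so that each circle $C_j$ passes through $0$. A circle through $0$ with center $M_j$ is the set $\{z : |z - M_j| = |M_j|\}$, equivalently $\{z : z\bar z = M_j \bar z + \bar M_j z\}$, i.e.\ $z\bar z = 2\,\mathrm{Re}(\bar M_j z)$; in particular the second intersection point of $C_j$ with any line through $0$ in direction $u$ is $z = 2\,\mathrm{Re}(\bar M_j u)\, u / |u|^2$, but the more useful fact is the inversion picture. Apply the map $\iota(z) = 1/\bar z$ (inversion in the unit circle). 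Under $\iota$, a circle through $0$ becomes a \emph{line}, and one computes directly that the circle $C_j$ with $z\bar z = M_j\bar z + \bar M_j z$ maps to the line $\{w : M_j w + \bar M_j \bar w = 1\}$, i.e.\ the line perpendicular to $M_j$ at distance $1/(2|M_j|)$ from the origin. Crucially, the point $1/(2 M_j)$ lies on this line --- so the image of the center $M_j$ under $z \mapsto 1/(2M_j)$ (which is \emph{not} $\iota$ applied to $M_j$, but rather a natural companion point) sits on the image line.

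The cleaner route, which I would actually carry out, is this. Inversion $\iota$ sends $P \mapsto \infty$, sends $P_j \mapsto Q_j := 1/\bar P_j$, and sends the circle $C_j$ to the line $\ell_j$ through $Q_k$ and $Q_l$ (where $\{j,k,l\}=\{1,2,3\}$), since $P_k, P_l \in C_j$. So the three lines $\ell_1,\ell_2,\ell_3$ form a triangle with vertices $Q_1,Q_2,Q_3$, where $Q_j = \ell_k \cap \ell_l$. Now the center $M_j$ of $C_j$: the image of $M_j$ is not directly needed, but note $M_j$ is equidistant from $P, P_k, P_l$; equivalently $1/(2\bar M_j)$ is the foot of the perpendicular from the origin to $\ell_j$ --- this is the standard fact that inversion sends a circle through $0$ with diameter endpoints $0$ and $d$ to the line through $1/\bar d$ perpendicular to $d$, and here $d = 2M_j$. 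Since the multi-ratio is \emph{invariant} under Möbius transformations (it is a product of cross-ratios, each Möbius invariant), we have
\begin{align}
	\mr(P_1, M_3, P_2, M_1, P_3, M_2) = \mr(\iota(P_1), \iota(M_3), \iota(P_2), \iota(M_1), \iota(P_3), \iota(M_2)) = \mr(Q_1, N_3, Q_2, N_1, Q_3, N_2),
\end{align}
where $N_j := \iota(M_j) = 1/\bar M_j$. So it remains to prove the affine-geometry identity $\mr(Q_1, N_3, Q_2, N_1, Q_3, N_2) = -1$, where $Q_1,Q_2,Q_3$ are the vertices of a triangle and $N_j$ is the point $1/\bar M_j = 2 \cdot (\text{foot of perpendicular from }0\text{ to }\ell_j)$.

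At this point I would compute with the triangle directly: parametrize $Q_1, Q_2, Q_3$ freely, let $\ell_1 = Q_2 Q_3$ etc., and express $N_j$ (twice the foot of the perpendicular from the origin to $\ell_j$) as an explicit rational expression in $Q_1,Q_2,Q_3$ and their conjugates. Substituting into the multi-ratio, the claim becomes a rational identity; since both sides are Möbius-covariant in the right way (or after normalizing, say, $Q_1,Q_2,Q_3$ to convenient values by a similarity, which preserves the multi-ratio), this collapses to a finite verification. The main obstacle I anticipate is precisely organizing this final algebraic step cleanly: the points $N_j$ involve conjugates, so the identity is not purely holomorphic, and a brute-force expansion is messy. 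I would try to sidestep it by recognizing the triangle configuration $\mr(Q_1, N_3, Q_2, N_1, Q_3, N_2) = -1$ as an instance of a known ``Menelaus-type'' relation --- the product structure of the multi-ratio over a triangle with the feet of the perpendiculars (the pedal triangle of the origin, up to the factor $2$) is exactly the setting where classical Menelaus-type identities yield $\pm 1$. Concretely, I expect that writing each factor $Q_i - N_j$ as a multiple of a vector along or transverse to the appropriate side, the conjugate terms cancel in the product around the cycle, leaving $-1$. Alternatively, one can prove the lemma by a continuity/deformation argument: both sides are (locally constant $=-1$ vs.\ a rational function) and one checks equality on a one-parameter family, e.g.\ the symmetric configuration where the three circles have equal radii, where the identity is immediate by symmetry, and then argues the general case by the fact that a rational function on an irreducible variety agreeing with $-1$ on a Zariski-dense subset is identically $-1$.
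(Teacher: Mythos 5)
Your reduction is essentially the paper's own proof: both arguments send $P\mapsto\infty$ and $\infty\mapsto 0$, so that the circles become the sides of the triangle $Q_1Q_2Q_3$ and the centers become the reflections of the origin about those sides (your $N_j=1/\bar M_j$, twice the foot of the perpendicular from $0$ to $\ell_j$, is exactly that reflection). Two points need attention. First, $\iota(z)=1/\bar z$ is orientation-reversing, hence \emph{not} a Möbius transformation; the multi-ratio is not invariant under it but complex-conjugated, so your displayed equality should read $\mr(P_1,M_3,P_2,M_1,P_3,M_2)=\overline{\mr(Q_1,N_3,Q_2,N_1,Q_3,N_2)}$. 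This is harmless here because the target value $-1$ is real, but as written the step is incorrect; alternatively use the genuine Möbius map $z\mapsto 1/z$ and invoke preservation of symmetric points (the center being the symmetric point of $\infty$) to land in the same normalized picture, which is what the paper does.

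Second, and more substantively, the step you defer --- ``the conjugate terms cancel in the product around the cycle'' --- is the entire content of the lemma, and you never execute it. It is in fact a two-line computation, and it is precisely how the paper finishes: from $N_3=Q_1-\bar Q_1\,\tfrac{Q_2-Q_1}{\bar Q_2-\bar Q_1}$ one gets $\tfrac{N_3-Q_1}{N_3-Q_2}=\tfrac{\bar Q_1}{\bar Q_2}$, and cyclically $\tfrac{N_1-Q_2}{N_1-Q_3}=\tfrac{\bar Q_2}{\bar Q_3}$, $\tfrac{N_2-Q_3}{N_2-Q_1}=\tfrac{\bar Q_3}{\bar Q_1}$; rearranging the six factors of the multi-ratio into these three ratios produces an overall sign $-1$ and a telescoping product equal to $1$. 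Your fallback options do not close this gap: the ``Menelaus-type relation'' is only named, not proved, and the deformation argument fails because the equal-radii configurations form a proper real-algebraic subvariety of the (four-real-dimensional, after normalizing by similarities) configuration space, hence are not Zariski-dense, so agreement there does not force the rational function to be identically $-1$. With the telescoping computation supplied, your proof coincides with the paper's.
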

\proof{
	Since multi-ratios are invariant under Möbius transformations, we may normalize the configuration with a Möbius transformation $M$ such that $M(P) = \infty$ and $M(\infty) = 0$. As a consequence, we may assume $C_1,C_2,C_3$ are straight lines and $M_1, M_2, M_3$ are the reflections of $0$ about the lines $P_2P_3$, $P_1P_3$, $P_1P_2$. In complex numbers, this implies that
	\begin{align}
	M_3 = P_1 - \bar P_1 \frac{P_2- P_1}{\bar P_2 - \bar P_1},
	\end{align}
	and analogous formulas for $M_1$ and $M_2$. Moreover, we calculate that
	\begin{align}
	M_3 - P_1 = - \bar P_1 \frac{P_2- P_1}{\bar P_2 - \bar P_1}, \quad M_3 - P_2 = - \bar P_2 \frac{P_1- P_2}{\bar P_1 - \bar P_2}.
	\end{align}
	In turn, this implies
	\begin{align}
	\frac{M_3 - P_1}{M_3 - P_2} = \frac{\bar P_1}{\bar P_2}.
	\end{align}
	Inserting this and the two other analogous ratios into the multi-ratio in the claim yields the desired result. \qed
}

\begin{remark}
	Note that by combining several copies of the equation of Lemma~\ref{lem:tumr} in one cube of a Miquel six circle configuration, one may obtain alternative proofs for both Theorem~\ref{th:tdskp} and Theorem~\ref{th:ytotoda}.
\end{remark}

\begin{theorem}\label{th:backlund}
	Given a Miquel map, the associated dSKP maps $\ub, t, \uw$ are a dSKP map on $\hat A_4$, via the identification $\phi$ of $\hat A_4$ with $\hlat$.
\end{theorem}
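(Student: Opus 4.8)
The plan is to verify the dSKP equation separately for each of the octahedra of $\hat A_4$, splitting into two cases according to whether the octahedron lives in a single copy of $A_3$ or straddles $\lat$ and one of $\tetb,\tetw$. The first case, $m=5$ in the notation before Definition~\ref{def:higherdskp}, is exactly the content of Theorems~\ref{th:tdskp} and~\ref{th:udskp}: those octahedra are the octahedra of $\lat$ (giving the $t$-map equation), the octahedra built from $\tetb$ (giving the $\ub$-map equation), and the octahedra built from $\tetw$ (giving the $\uw$-map equation). So for $m=5$ there is nothing new to prove. The work is entirely in the case $m\neq 5$, where the octahedron has three vertices mapped from $\lat$ and three vertices mapped from a single tetrahedron-lattice, say $\tetb$ (the case $\tetw$ being identical by symmetry).

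Next I would unwind the identification $\phi$ to see which geometric points these six vertices correspond to. Fixing $m\in[4]$ and $z\in\Z^5_{-2}$, the six lattice points $\shi{i_1,i_2}z$ with $i_1,i_2\in[5]\setminus\{m\}$ split into the three with $5\notin\{i_1,i_2\}$ (these have fifth coordinate $-2+0=\dots$, landing in $\tetb$ once shifted appropriately — concretely in $\tetb$, i.e.\ circle intersection points of $\ub$) and the three with $5\in\{i_1,i_2\}$ (landing in $\lat$, i.e.\ circle centers of $t$). Tracing through the vectors $a_i,b_i$, the three $\tetb$-points are the three ``other'' intersection points $P_1,P_2,P_3$ of a triple of circles through a common point, and the three $\lat$-points are exactly the centers $M_1,M_2,M_3$ of those three circles, arranged around the octahedron so that opposite vertices are a $P$ and the center of the circle \emph{not} containing the opposite intersection index. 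This is precisely the configuration of Lemma~\ref{lem:tumr}, and the cyclic arrangement $P_1,M_3,P_2,M_1,P_3,M_2$ dictated by the octahedron's combinatorics matches the one in that lemma. Hence the dSKP equation for this octahedron is literally the conclusion $\mr(P_1,M_3,P_2,M_1,P_3,M_2)=-1$ of Lemma~\ref{lem:tumr}.

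The main obstacle — and the step deserving the most care — is the bookkeeping of the second paragraph: checking that $\phi$ sends the combinatorial octahedron of $A_4$ with the indicated vertex labels to the geometric sextuple with exactly the cyclic order required by Lemma~\ref{lem:tumr}, including the sign conventions in the definition $\phi(z)-\phi(z')=\pm a_i$ versus $z-z'=\pm b_i$. One must confirm that the three circles in question do share a common point (this is where the Miquel map structure enters: the relevant common point is the value of $p$ at the white tetrahedron incident to the octahedron, or a circle center, depending on $m$), and that the pairing ``opposite vertices of the octahedron'' $\leftrightarrow$ ``$\{P_j, M_{j'}\}$ with $j\neq j'$'' is the one used in the lemma rather than its reciprocal. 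Once the labeled picture in Figure~\ref{fig:tulemma} is matched against a generic octahedron of $\hat A_4$, the identity follows immediately. I would therefore organize the proof as: (1) reduce to octahedra, (2) dispose of $m=5$ by citing Theorems~\ref{th:tdskp} and~\ref{th:udskp}, (3) for $m\neq 5$ identify the six points via $\phi$ using Figure~\ref{fig:tulemma} as the template, and (4) apply Lemma~\ref{lem:tumr}, with the analogous statement for $\tetw$ following by the evident symmetry exchanging $\tetb\leftrightarrow\tetw$ and $a_i$-signs.
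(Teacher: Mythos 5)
Your proposal is correct and follows essentially the same route as the paper: dispose of the octahedra with $m=5$ by Theorems~\ref{th:tdskp} and~\ref{th:udskp}, and for $m\neq 5$ identify the six vertices as three circle centers and three intersection points of three circles through a common point, then apply Lemma~\ref{lem:tumr}. (One small caveat: whether the triple $\shi{i,5}z$ lands in $\lat$ or in $\tetb/\tetw$ depends on the fifth coordinate of $z$, so your assignment of the two triples to centers versus intersection points is reversed for half the octahedra — but this is exactly the bookkeeping you already flag, and it does not affect the argument.)
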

\proof{
	For octahedra that do not involve direction $5$, the dSKP equation follows from Theorem~\ref{th:tdskp} and Theorem~\ref{th:udskp}. Each other octahedron involves three circles intersecting in a common point. More specifically, consider $z \in \Z^5_{-2}$ and let $J \in \binom{[5]}{4}$ be such that $J \cup \{5\} = \{i_1,i_2,i_3,5\}$. Then the three vertices $\shi{i,5}z$ for $i \in J$ and the three vertices $\shi{i,j}z$ for $i,j \in J$, $i\neq j$ make up such an octahedron. In Figure~\ref{fig:tulemma}, we illustrate an example of such an octahedron and how this octahedron is situated in $\lat \cup \tetb$. Therefore, the dSKP equation involves three circle centers of $t$ and three intersection points of either $\ub$ or $\uw$ as required for Lemma~\ref{lem:tumr} to apply, which concludes the proof.\qed
}

\begin{remark}
	In a sense, Theorem~\ref{th:backlund} states that $\ub$ (and $\uw$) is a \emph{Bäcklund transform} of $t$ in $A_4$, since these are defined on adjacent copies of $A_3$, similar to how Bäcklund transforms are usually two solutions on adjacent copies of $\Z^2$ in $\Z^3$.
	Moreover, using Lemma~\ref{lem:tumr} one can show that $t$ together with 1-dimensional initial data of $\ub$, determines all of $\ub$, and vice versa. The same holds with $\uw$. In Section~\ref{sec:toda}, we discuss new and old Y-systems for Miquel maps. It is currently unclear if there is a functional relation between these Y-systems. However, since the dSKP maps are related, we suspect the same holds for the Y-systems. We think the results of this section may support research in that direction.
\end{remark}

\section{Y-systems} \label{sec:toda}

Y-systems were introduced in \cite{zamolodchikov}, see also further references in \cite{abs} or \cite{knsysystems} for an overview. 

\begin{definition}\label{def:toda}
	We say a map $U: \Z^3_\pm \rightarrow \C$ is a \emph{Y-system} if it satisfies
	\begin{align}
		U(\shi 3 z)  U(\shi {\sm 3} z) = \frac{(1+U(\shi 2 z))(1 + U(\shi {\sm 2} z))}{(1 + U^{-1}(\shi 1 z))(1 + U^{-1}(\shi {\sm 1} z))}, \label{eq:ysystems}
	\end{align}
	for all $z \in \Z_{\mp}$.
\end{definition}

Let us recall the literature result relating Y-systems and Miquel maps.

\begin{definition}\label{def:yvariables}
	Consider a Miquel map and let $t$ be the associated $t$-map. The \emph{$Y$-variables} $Y: \oct \rightarrow \R$, are defined by
	\begin{align}
		Y(z) \coloneqq - \frac{(t(\shi{1,3} z) - t(z)))(t(\shi{\sm 1, 3} z) - t(z)))}{(t(\shi{2,3} z) - t(z)))(t(\shi{\sm 2, 3} z) - t(z)))}.\label{eq:yvariable}
	\end{align}
\end{definition}

It was shown in \cite{amiquel,klrr} that the $Y$-variables are real-valued. This is a condition on the relative angles of the four lines through $t(z)$ and through $t(\shi{1,3} z)$, $t(\shi{2,3} z)$, $t(\shi{\sm 1, 3} z)$, $t(\shi{\sm 2, 3} z)$ respectively. Since the intersection points $ p(\shi{\sm 1, \sm 2}z)$, $ p(\shi{\sm 2}z)$, $p(z)$, $ p(\shi{\sm 1}z)$ are obtained from consecutively reflecting about the four lines, the composition of these four reflections is the identity. This imposes an angle condition on the four angles of consecutive lines, which in turn implies that $Y(z)$ is real-valued.

The first relation between Miquel maps and Y-systems was found in \cite{amiquel} and independently in \cite{klrr}. We recall this relation in the next theorem.

\begin{theorem}
	The $Y$-variables of a Miquel map are a Y-system.
\end{theorem}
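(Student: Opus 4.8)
The plan is to derive the Y-system relation \eqref{eq:ypos} for the $Y$-variables directly from the dSKP equation for the $t$-map (Theorem~\ref{th:tdskp}), since both are equations on the same octahedral lattice $\lat$ and the $Y$-variables are explicit multiplicative expressions in the $t$-values. First I would rewrite the defining formula \eqref{eq:yvariable} purely in terms of differences of $t$ at the six vertices of an octahedron together with the center $t(z)$ that gets mutated; note that $Y(z)$ and $Y(\shi 3 z)$, $Y(\shi{\sm 3}z)$, $Y(\shi{\pm 1}z)$, $Y(\shi{\pm 2}z)$ all refer to centers in a small neighborhood in $\lat$, so the product $Y(\shi 3 z)Y(\shi{\sm 3}z)$ and the right-hand side of \eqref{eq:ysystems} are rational functions of finitely many $t$-values.

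The key algebraic observation is that $1 + Y(z)$ telescopes nicely: using $t$-values $a = t(z)$, $b = t(\shi{1,3}z)$, $c = t(\shi{2,3}z)$, $d = t(\shi{\sm 1,3}z)$, $e = t(\shi{\sm 2,3}z)$ one has
\begin{align}
1 + Y(z) = \frac{(c-a)(e-a) - (b-a)(d-a)}{(c-a)(e-a)},
\end{align}
and the numerator should factor (after using an appropriate instance of the dSKP/multi-ratio relation) as a product of two differences of $t$-values sitting at neighboring vertices. The standard trick here is the Ptolemy-type identity $(b-a)(d-a) - (c-a)(e-a) = (b-c)(\,\cdot\,) - (d-e)(\,\cdot\,)$ type rearrangement, combined with the fact that the dSKP equation \eqref{eq:dskp} forces the six points around each octahedron to be concircular/collinear after a Möbius normalization — in fact Equation~\eqref{eq:conicaldskp} says precisely $\mr$ of the relevant six centers equals $-1$. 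So the second step is: normalize by a similarity (allowed since the $Y$-variables are similarity-invariant, even if not Möbius-invariant) to simplify $t(z)$, then expand $1 + Y$ and $1 + Y^{-1}$ at each of the four "side" vertices $\shi{\pm 1}z, \shi{\pm 2}z$ and factor each using the dSKP relation centered at that vertex.

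The third step is bookkeeping: substitute the factored forms of $1 + Y(\shi{\pm i}z)$ and $1 + Y^{-1}(\shi{\pm i}z)$ into the right-hand side of \eqref{eq:ysystems}, and check that after cancellation of the shared $t$-difference factors one is left exactly with $Y(\shi 3 z)Y(\shi{\sm 3}z)$, where these two factors are read off from \eqref{eq:yvariable} applied at $\shi 3 z$ and $\shi{\sm 3}z$. This cancellation is where the dSKP equation for the octahedron centered at $z$ itself enters, tying the "top" and "bottom" data together. I would organize the computation so that the dSKP relations used are exactly those at the five octahedra $z, \shi{1}z, \shi{\sm 1}z, \shi{2}z, \shi{\sm 2}z$ (or the appropriate $\Z^3_-$ representatives), matching the five octahedra whose vertices appear in \eqref{eq:ysystems}.

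The main obstacle is the sign and the precise combinatorial matching: getting the minus signs in the definition of $Y$, in the multi-ratio $= -1$, and in the factorization of $1+Y$ to all line up so that the final identity is \eqref{eq:ypos} with the correct placement of $v_1,v_3$ versus $v_2,v_4$ (equivalently of directions $1$ versus $2$). A clean way to avoid drowning in signs is to work with the negative $Y$-variables and Equation~\eqref{eq:ymin}, or to push everything through the established cluster-mutation formalism: one can also simply invoke that the $Y$-variables coincide with the cluster variables computed in \cite{amiquel,klrr} for the quiver associated to the circle pattern, and that dSKP evolution of centers corresponds to a sequence of mutations at the mutated vertices — but since the excerpt asks us to reprove it, I would instead keep the direct $t$-difference computation and simply be meticulous with the normalization and the orientation of the octahedron. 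Alternatively, as the Remark after Lemma~\ref{lem:tumr} hints, one can assemble several copies of Lemma~\ref{lem:tumr} inside a single Miquel six-circle cube to get the factorizations of $1+Y$ geometrically rather than by brute-force algebra; that is likely the slickest route and I would pursue it if the raw computation becomes unwieldy.
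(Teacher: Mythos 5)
First, a point of comparison: the paper does not actually prove this theorem. It is stated as a recollection of a result from \cite{amiquel} and \cite{klrr}, so there is no in-paper argument to measure your proposal against; the closest analogue is the proof of Theorem~\ref{th:xtoda}, where the Y-system property of the $X$-variables is reduced, following \cite{abs}, to four instances of the dSKP equation. Your overall strategy --- write everything in terms of $t$-differences and derive \eqref{eq:ysystems} from finitely many instances of \eqref{eq:conicaldskp} --- is the right one, and it must succeed in this form, since the affine $Y$-variables of an arbitrary complex-valued dSKP map already form a (complex) Y-system, cf.\ Table~\ref{tab:vars}; your closing suggestion to assemble copies of Lemma~\ref{lem:tumr} inside a single Miquel six-circle configuration is exactly the alternative route advertised in the Remark following that lemma.

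That said, as written the proposal has a genuine gap: the entire content of the proof is the claimed factorization of $1+Y$ (and of $1+Y^{-1}$) into a ratio of products of $t$-differences at neighbouring vertices, and you only assert that the numerator ``should factor'' after using ``an appropriate instance'' of the dSKP relation. Without that relation the numerator $(c-a)(e-a)-(b-a)(d-a)$ is a sum of two products, e.g.\ $(c-b)(e-a)+(b-a)(e-d)$, not a single product; identifying which instance of \eqref{eq:conicaldskp} collapses it, and with which signs, is precisely the nontrivial step, and it is deferred to ``bookkeeping.'' Moreover, the heuristic you offer in support is false: the dSKP equation does not force the six vertices of an octahedron to be concircular or collinear after a Möbius normalization --- $\mr=-1$ is a single complex condition, whereas concircularity of six points is three real conditions, and the six centers of a generic Miquel configuration are not concircular. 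So the plan points in a workable direction, but the proof is not yet there: either carry out the factorization of $1+Y(\shi{\pm1}z)$, $1+Y^{-1}(\shi{\pm2}z)$ explicitly and exhibit the cancellation against $Y(\shi{3}z)Y(\shi{\sm 3}z)$, or execute the geometric argument via Lemma~\ref{lem:tumr}.
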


\begin{figure}
	\centering
	\includegraphics[scale=0.4]{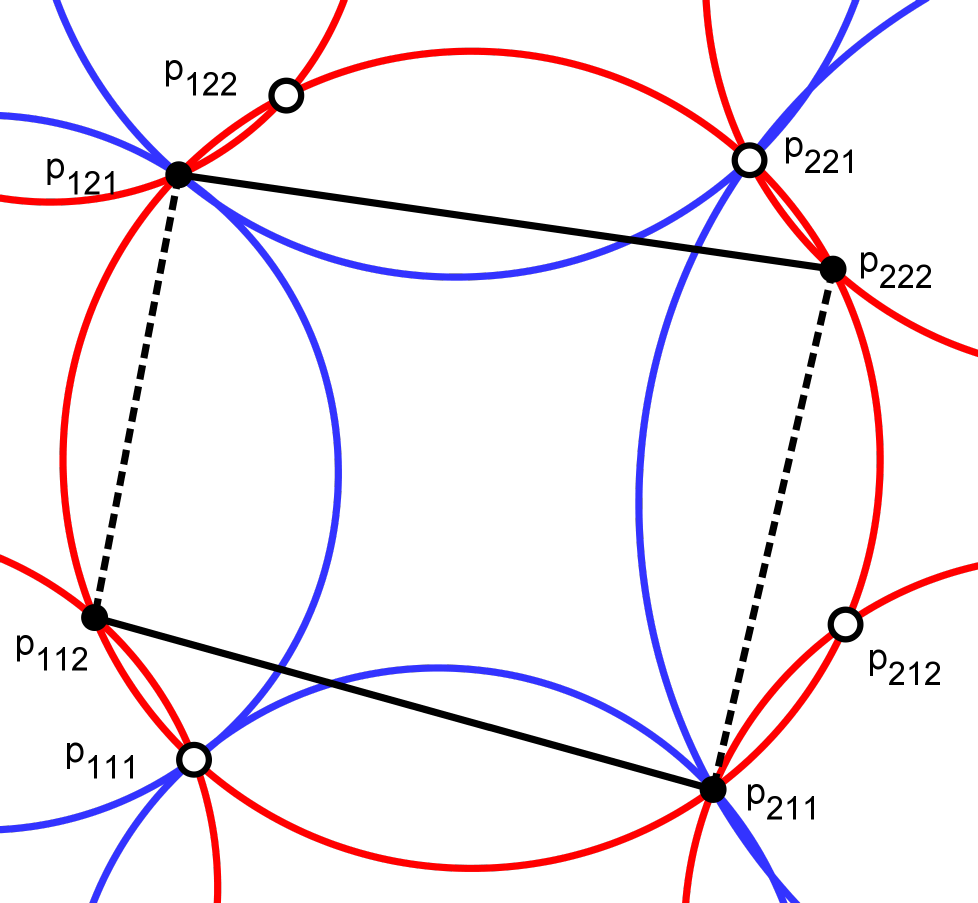}
	\hspace{5mm}
	\includegraphics[scale=0.4]{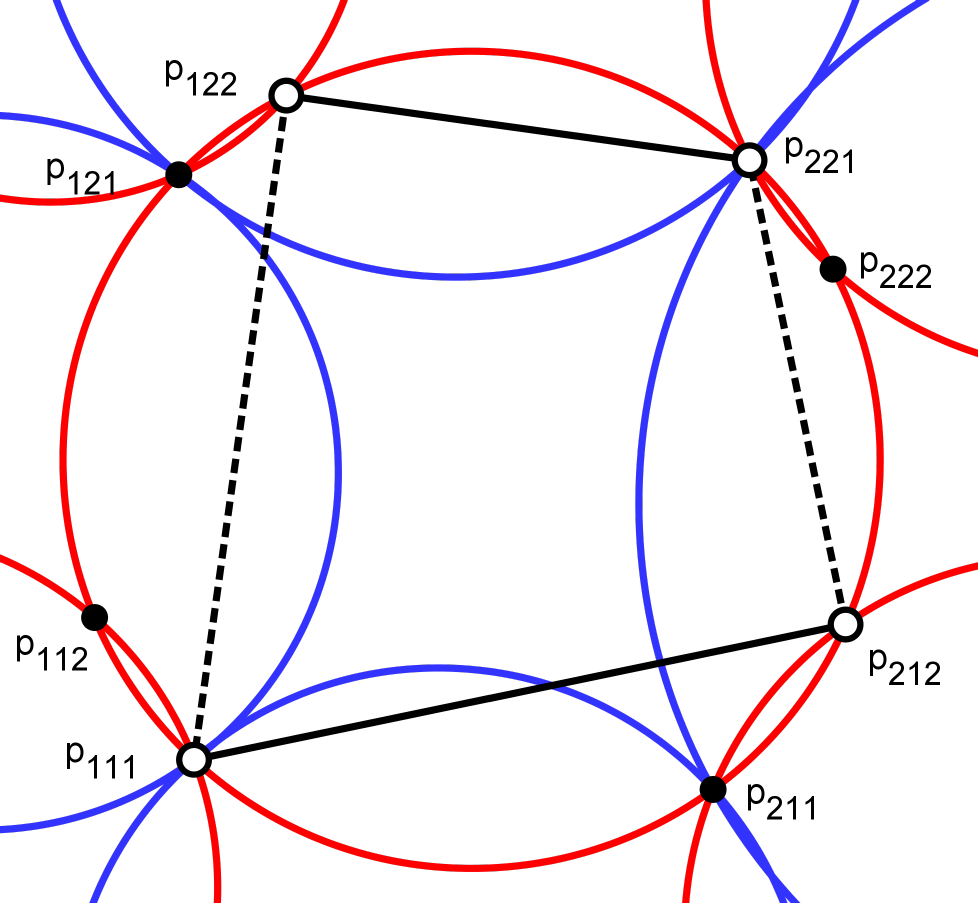}
	\caption{The points that define $\Xb({2,2,2})$ (left) and $\Xw({2,2,2})$ (right) on the circle $c_{2,2,2}$.}
	\label{fig:ax}
\end{figure}

Now, let us present an analogous result for the $\ub$- and $\uw$-maps.

\begin{definition} \label{def:xvariables}
	Consider a Miquel map and the associated $\ub, \uw$ maps.
	The \emph{$\Xb$-variables} $\Xb: \lat \rightarrow \R$, are defined by
	\begin{align}
		\Xb(z) \coloneqq -\cro( \ub(\shi{\sm 1,\sm 2} z),  \ub(\shi{\sm 2,\sm 3}z), \ub(z),  \ub(\shi{\sm 1,\sm 3}z) ). \label{eq:xbvars}
	\end{align}
	Analogously, the \emph{$\Xw$-variables} $\Xw: \lat \rightarrow \R$, are defined by
	\begin{align}
		\Xw(z) \coloneqq -\cro(\uw(\shi{\sm 1,\sm 2,\sm 3}z),  \uw(\shi{\sm 2}z),  \uw(\shi{\sm 3}z),  \uw(\shi{\sm 1}z)).
	\end{align}
\end{definition}

As mentioned in the introduction, the $\Xb$- and $\Xw$-variables are real-valued, since they are cross-ratios of four points on a circle, see also Figure~\ref{fig:ax}.

\begin{theorem}\label{th:xtoda}
	The $\Xb$-variables and the $\Xw$-variables of a Miquel map are a Y-system.
\end{theorem}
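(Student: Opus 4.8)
The plan is to deduce Theorem~\ref{th:xtoda} from the already-established facts that $\ub$ and $\uw$ are dSKP maps (Theorem~\ref{th:udskp}), together with the general principle that a suitable cross-ratio built from a dSKP map satisfies the Y-system relation \eqref{eq:ysystems}. Concretely, I would first isolate a purely combinatorial/algebraic lemma: \emph{if $a: \Z^3_\pm \to \hC$ is any dSKP map, then the map $z \mapsto -\cro\bigl(a(\shi{\sm 1,\sm 2}z), a(\shi{\sm 2,\sm 3}z), a(z), a(\shi{\sm 1,\sm 3}z)\bigr)$ is a Y-system} (and similarly for the shifted cross-ratio defining $\Xw$). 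Granting this lemma, Theorem~\ref{th:xtoda} is immediate: apply it with $a = \ub$ to get the statement for $\Xb$, and with $a = \uw$ (using the second cross-ratio formula) to get it for $\Xw$. This is exactly the promised assertion from the introduction that the $X$-variables are ``a specialization of the one introduced by Adler, Bobenko and Suris'', so I would also point to \cite{abs} for the general version of this reduction.

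The core work is therefore the lemma, and I would prove it by direct computation in $\hC$. First I would set up local coordinates: fix $z \in \Z^3_\mp$ and name the six neighbouring values $a(\shi{i}z), a(\shi{\sm i}z)$ for $i\in\{1,2,3\}$, abbreviating them (say $a_1 = a(\shi 1 z)$, $\bar a_1 = a(\shi{\sm 1}z)$, etc.). The Y-system equation \eqref{eq:ysystems} for the candidate $U = \Xb$ relates $\Xb$ at the two points $\shi 3 z$ and $\shi{\sm 3} z$ (as functions of the dSKP data around those octahedra) to the values of $\Xb$ at $\shi{\pm 1}z$ and $\shi{\pm 2}z$. Unwinding \eqref{eq:xbvars}, each of these six $\Xb$-values is a cross-ratio of four neighbours of the respective lattice point; using the dSKP equation $\mr(a_1,a_2,a_3,\bar a_1, \bar a_2, \bar a_3) = -1$ at the octahedra involved, one can eliminate the ``new'' points and express everything in terms of a common set of points. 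The key structural input is multidimensional consistency of dSKP on the octahedral lattice: the points appearing in the six $\Xb$-cross-ratios all lie in the stencil of one $4$-cell configuration, so the dSKP relations pin down enough coincidences to collapse the identity to a rational identity in finitely many free variables, which one then verifies. I expect the cleanest route is to normalize by a Möbius transformation sending three of the points to $0,1,\infty$, reducing to an identity in two or three complex parameters.

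The main obstacle I anticipate is bookkeeping: correctly matching the shift pattern in the Y-system definition \eqref{eq:ysystems} (which privileges direction $3$, with $\shi{\pm 2}$ in the numerator and $\shi{\pm 1}$ in the denominator) against the asymmetric shift pattern in \eqref{eq:xbvars} and in the $\Xw$-formula, and being careful that the dSKP equation is imposed on $\Z^3_\mp$ while the $X$-variables live on $\Z^3_\pm$, so the octahedra and tetrahedra interleave correctly. A secondary subtlety is genericity: the cross-ratios and the intermediate dSKP eliminations must all be well-defined, which the paper has already agreed to assume. One could instead try a more conceptual proof via the $A_4$ picture of Theorem~\ref{th:backlund} and Lemma~\ref{lem:tumr}, realizing $\Xb$ as a face weight of a vector-relation configuration and invoking that mutations of such weights obey the Y-system, but for a self-contained argument I would favour the direct cross-ratio computation sketched above, citing \cite{abs} for the assertion that this is the known octahedron-type specialization.
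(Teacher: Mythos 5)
Your proposal matches the paper's proof in essence: the paper likewise reduces Theorem~\ref{th:xtoda} to Theorem~\ref{th:udskp} plus the general fact from \cite[Section~7]{abs} that the cross-ratio variables (the $h$-variables, up to sign) of any dSKP map form a Y-system, the latter being established by exactly the kind of direct calculation you sketch (the Y-system relation is equivalent to four instances of the dSKP equation). The only difference is that you spell out the computation you would delegate to \cite{abs}, which the paper simply cites.
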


\begin{proof}
	In Theorem~\ref{th:udskp} we have shown that $\ub$-maps are dSKP maps. The $\Xb$-variables that we defined in Definition~\ref{def:xvariables} for $\ub$-maps coincide with variables introduced in \cite[Section~7]{abs}, where they are denoted by $h$ (albeit with a minus sign). Moreover, also in \cite{abs}  it was shown that these $h$-variables, and hence our $\Xb$-variables form a Y-system. The proof is based on a direct calculation, showing that Equation~\eqref{eq:ysystems} is equivalent to four instances of Equation~\eqref{eq:udskp}. The proof for the $\Xw$-variables proceeds analogously, since they also coincide with the $h$-variables. 
\end{proof}

Since the $X$-variables constitute a Y-system, we see that positivity of the $X$-variables is also preserved under Miquel dynamics. Of course, whether an $X$-variable is positive or not depends on the order in which the four points of $\ub$ (resp.\ $\uw$) appear on the corresponding circle. For example, in Figure~\ref{fig:miquelcps}, the $\Xb$-variables are all positive since the points of $\ub$ appear in cyclic order on each circle, and the same holds for the $\Xw$-variables. However, it is not directly clear in what manner this characterizes the corresponding circle pattern. We investigate positivity conditions in upcoming work \cite{ammiquel}.

Since $\Xb$ and $\Xw$ satisfy the same lattice equation (as a Y-system), it is possible that $\Xb(z) = \Xw(z)$ for all $z \in \lat$. A practical technique to construct ``regular looking'' larger patches of a circle pattern is to just take a rectangular lattice for the intersection points of the circle pattern. Experiments show that Miquel maps that contain such a rectangular circle pattern do indeed satisfy $\Xb = \Xw$, which leads us to ask a question.

\begin{question}
	Is there a geometric characterization of all circle patterns or Miquel maps that satisfy $\Xb = \Xw$?
\end{question}
This question may relate to our results on $h$-Miquel maps, see Section~\ref{sec:harmonic}, we did not investigate this direction further.

Next, we explain how the $Y$- and $X$-variables determine a Miquel map up to boundary data.
The \emph{$k$-th layer} of a Y-system $U: \Z^3_\pm \rightarrow \C$ is the restriction of $U$ to $\{z\in \Z^3_\pm \ | \ z_3 \in \{k,k+1\}\}$.

Let us begin with the $Y$-variables. Let $t_k$ be the restriction of a $t$-map $t$ to all $z\in \lat$ such that $z_3 \in \{k,k+1\}$. Then $t_k$ determines all of $t$, since Theorem~\ref{th:tdskp} determines $t_{k+1}$ and $t_{k-1}$ given $t_k$. The conical net $t$ does not determine the intersection points $c$. Instead, we may choose one of the intersection points freely and only then does the $t$-map determine all of the Miquel map.

\begin{theorem}\label{th:ytotoda}
	Let $(c,p)$ be a Miquel map, and let $t_k$ be the conical net of $(c_k,p_k)$. 
	The $t$-map is uniquely determined by the $0$-th layer of the $Y$-variables and the boundary data
	\begin{align}
		\{  t(z)  \ | \ z \in \lat, \ z_2 = 0,1, \ z_3 = 0,1 \}.
	\end{align}
	Generically, any real-valued Y-system corresponds to the $Y$-variables of a $t$-map.
\end{theorem}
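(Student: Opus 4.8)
The plan is to establish both halves of the statement by the same reconstruction procedure, read first one way and then the other. For the first half, recall from Theorem~\ref{th:tdskp} that the $t$-map is a dSKP map, and from the discussion after Definition~\ref{def:dskp} that a dSKP map is determined, forwards and backwards, by a single layer $z_3\in\{k,k+1\}$. Translating to the circle pattern $(c_0,p_0)$, the prescribed boundary data is exactly the value of the conical net $t_0$ on the two rows $z_2\in\{0,1\}$, and the $0$-th layer of $Y$ supplies the value of every $Y$-variable \eqref{eq:yvariable} attached to an octahedron lying in that layer. The recovery is then a propagation through $\lat$ away from the two prescribed rows, using two moves: (a) Equation~\eqref{eq:yvariable} rewritten so that any one of the five circle-centers around an octahedron is an explicit rational function of the other four and of the known value of $Y$; and (b) the dSKP equation of Theorem~\ref{th:tdskp} at an octahedron, which determines any one of its six centers from the other five. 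Ordering the octahedra by lattice distance from the initial strip, one checks that at each step exactly one center is still unknown and is pinned down by an instance of (a) or (b), so iterating fills in all of $t$; in particular $t_0$ and the $0$-th layer of $Y$ are recovered, and the formula \eqref{eq:yvars} together with the choice of one intersection point recovers the whole Miquel map.

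For the second half we run the same propagation in reverse. Given a real-valued Y-system $U$ and generic boundary data on the two rows $z_2\in\{0,1\}$, $z_3\in\{0,1\}$, feed $U$ into the rewritten Equation~\eqref{eq:yvariable} in place of the unknown $Y$ and, at every octahedron visited, impose both its dSKP equation and, where it applies, the prescribed $U$-value; genericity of $U$ and of the boundary data guarantees that no denominator or multi-ratio degenerates along the way, and one verifies that the imposed equations are mutually consistent, so the output is a genuine dSKP map $t\colon\lat\to\hC$ whose $Y$-variables agree with $U$ on the $0$-th layer. Two facts close the argument. First, the $Y$-variables of any dSKP map satisfy the Y-system relation \eqref{eq:ysystems} — this is the purely algebraic content of the result of \cite{amiquel,klrr}, the circle-pattern hypothesis not being used for this implication — and, since generically a Y-system is determined forwards and backwards by one layer (solve \eqref{eq:ysystems} for $U(\shi 3 z)$), the $Y$-variables of $t$ equal $U$ everywhere and are hence real-valued. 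Second, a dSKP map with real-valued $Y$-variables is the center map of a circle pattern — the converse implication recalled in the introduction from \cite{amiquel,klrr} — so $t$ is the $t$-map of a Miquel map (again determined once one intersection point is chosen), and its $Y$-variables are $U$, as desired.

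The main obstacle is the bookkeeping inside the propagation of the first paragraph: one must choose the order in which octahedra of $\lat$ are visited so that at every step precisely one center is undetermined and so that the equation used to determine it is well posed — in particular, one wants to avoid solving Equation~\eqref{eq:yvariable} for the ``central'' center, where it becomes quadratic, or else to argue that the correct root is singled out by the preceding data. Establishing that the interleaving of moves (a) and (b) sweeps out all of $\lat$ from the two-row strip plus the $0$-th layer of $Y$, and carrying the genericity hypotheses through every step so that the rational expressions involved stay finite and nonzero, is where the real work lies; the auxiliary lemma that the $Y$-variables of a dSKP map form a Y-system, used in the second half, should be isolated and proved (or cited from \cite{abs,amiquel,klrr}) on its own.
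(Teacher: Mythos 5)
Your outline has the right overall shape (fill in the slab $z_3\in\{0,1\}$ from the two prescribed rows using the $Y$-equations, then extend in the $z_3$-direction; for the converse, run the construction on an arbitrary real Y-system and use realness to get a conical net), but it is missing the one identity that makes the in-slab propagation possible, and without it the scheme fails. The $0$-th layer of $Y$ consists of the variables $Y(z)$ with $z_3\in\{0,1\}$. For $z_3=0$ the defining formula \eqref{eq:yvariable} involves only centers in the slab, but for $z_3=1$ it involves $t(\shi{1,3}z)$, $t(\shi{\sm 1,3}z)$, $t(\shi{2,3}z)$, $t(\shi{\sm 2,3}z)$, which sit at level $z_3=2$, outside the slab. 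The paper therefore first derives Equation \eqref{eq:yy}, which re-expresses $Y(z)$ for $z_3=1$ through the level-$0$ cross around $z$ (with the roles of directions $1$ and $2$ exchanged). Only with both \eqref{eq:yvariable} and \eqref{eq:yy} available can one sweep row by row in the $z_2$-direction, at each step solving \emph{linearly} for the single unknown arm of a cross whose central vertex and other three arms are already known (Figure \ref{fig:boundary}, left). Your two moves do not substitute for this: every instance of \eqref{eq:yvariable} with $z_3=0$ contains exactly one level-$0$ center, and it sits in the central position where the equation is quadratic (the very case you flag as needing to be avoided), so those equations alone never determine a new level-$0$ center; and your move (b) is unusable during the in-slab stage, since every octahedron of $\lat$ has its two polar vertices at levels $z_3-1$ and $z_3+1$ and its equatorial vertices spread over $z_2-1$ and $z_2+1$, so no dSKP instance ever has fewer than two unknown vertices until an entire slab is already filled. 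Concretely: the centers $t(z)$ with $z_3=0$ and $z_2\notin\{0,1\}$ are not recoverable by your procedure.

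Two smaller points. The paper never needs to resolve the quadratic ambiguity you worry about, precisely because \eqref{eq:yy} lets every center be reached as the linear (outer) unknown of some cross. And for the second half, the paper's route is cleaner than imposing the dSKP and $Y$-equations simultaneously and checking consistency: one simply runs the same propagation with the given real Y-system and generic boundary data (each value is determined exactly once, so there is nothing to verify), observes that realness of the $Y$-values forces the resulting $t_0$ to be a conical net, i.e.\ the center net of a circle pattern, and then extends by Miquel dynamics; your appeal to an unproved mutual-consistency claim is an unnecessary extra burden.
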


\begin{proof}
	Let us explain how the $0$-th layer of the $Y$-variables determines $t_0$ up to boundary data (see also Figure~\ref{fig:boundary}). Assume we know $Y(z)$ with $z_3 = 0$ and $t$ on $\shi{1,3}z, \shi{\sm 1,3}z, \shi{\sm 2,3}z$. Then we may solve Equation~\eqref{eq:yvariable} uniquely for $\shi{2,3}z$. For $z_3 = 1$, a quick calculation shows that
	\begin{align}
		Y(z) = - \frac{(t(\shi{2,\sm 3} z) - t(z))(t(\shi{\sm 2,\sm 3} z) - t(z))}{(t(\shi{1,\sm 3} z) - t(z))(t(\shi{\sm 1,\sm 3} z) - t(z))}. \label{eq:yy}
	\end{align}
	Therefore, if we know $Y(z)$ with $z_3 = 1$ the point $\shi{2,\sm 3} z$ is determined by $\shi{1,\sm 3}z, \shi{\sm 1,\sm 3}z, \shi{\sm 2,\sm 3}z$. Now, if we choose boundary data as in the claim, all of $t_0$ is determined by the $0$-th layer of the $Y$-variables. Since the $Y$-variables are real, $t_0$ is indeed a conical net. The whole $t$-map is then determined by Miquel dynamics.
\end{proof}

Now let us prepare the analogous result for the $\Xb$-variables (and analogously the $\Xw$-variables). Assume we know $\ub$ but not $\uw$. For every $z\in \lat$, the circle $c(z)$ contains four points of $\ub$, therefore all the circles are determined by $\ub$. As a consequence, also all the intersection points $p$ are determined. Thus, $\ub$ (or $\uw$) determine the Miquel map uniquely.

Recall that $p_k(z)$ is the restriction of $p$ to all $z\in \lat$ with $z_3 = k$. We define the restriction  $p^\bullet_k$ slightly differently, as the restriction of $\ub$ to $z\in \lat$ with $z_3 \in \{k,k+1\}$. With this definition, the restriction $p^\bullet_k$ determines all of $\ub$, since we may use Theorem~\ref{th:udskp} to determine $p^\bullet_{k+1}$ and $p^\bullet_{k-1}$ from $p^\bullet_k$.

\begin{figure}
	\centering
	\begin{tikzpicture}[scale=1]
		\draw[step=1.0,black] (-3,-3) grid (3,3);
		\node[bvert, fill=gray] at (-3,0) {};
		\node[bvert, fill=gray] at (-2,0) {};
		\node[bvert, fill=gray] at (-1,0) {};
		\node[bvert, fill=gray] at (0,0) {};
		\node[bvert, fill=gray] at (1,0) {};
		\node[bvert, fill=gray] at (2,0) {};
		\node[bvert, fill=gray] at (3,0) {};
		\node[bvert, fill=gray] at (-3,1) {};
		\node[bvert, fill=gray] at (-2,1) {};
		\node[bvert, fill=gray] at (-1,1) {};
		\node[bvert, fill=gray] at (0,1) {};
		\node[bvert, fill=gray] at (1,1) {};
		\node[bvert, fill=gray] at (2,1) {};
		\node[bvert, fill=gray] at (3,1) {};
		\draw[blue, -, line width=1.5pt, line cap=round] 
			(1,1) -- (0,1) -- (-1,1)
		;
		\draw[blue, densely dashed, line width=2pt, line cap=round] 
			(0,2) -- (0,1) -- (0,0)
		;		
	\end{tikzpicture}
	\hspace{2cm}
	\begin{tikzpicture}[scale=1]
		\fill[gray] (-3,3) -- (-1,3) -- (-1, 2) -- (1,2) -- (1,3) -- (3,3) -- (3,2) -- (2,2) -- (2,1) -- (1,1) -- (1,0) -- (2,0) -- (2,-1) -- (3,-1) -- (3,-3) -- (2,-3) -- (2,-2) -- (1,-2) -- (1,-1) -- (-1,-1) -- (-1,-2) -- (-2,-2) -- (-2,-3) -- (-3,-3) -- (-3,-1) -- (-2,-1) -- (-2,0) -- (-1,0) -- (-1,1) -- (-2,1) -- (-2,2) -- (-3,2) -- (-3,2);
		\draw[step=1.0,black] (-3,-3) grid (3,3);
		\node[bvert, blue] at (0,0) {};
		\node[bvert, blue] at (2,0) {};
		\node[bvert, blue] at (-2,0) {};
		\node[bvert, blue] at (0,-2) {};
		\node[bvert, blue] at (2,-2) {};
		\node[bvert, blue] at (-2,-2) {};
		\node[bvert, blue] at (0,2) {};
		\node[bvert, blue] at (2,2) {};
		\node[bvert, blue] at (-2,2) {};
		\node[bvert, blue] at (1,1) {};
		\node[bvert, blue] at (3,1) {};
		\node[bvert, blue] at (-1,1) {};
		\node[bvert, blue] at (-3,1) {};
		\node[bvert, blue] at (1,-1) {};
		\node[bvert, blue] at (3,-1) {};
		\node[bvert, blue] at (-1,-1) {};
		\node[bvert, blue] at (-3,-1) {};
		\node[bvert, blue] at (1,3) {};
		\node[bvert, blue] at (3,3) {};
		\node[bvert, blue] at (-1,3) {};
		\node[bvert, blue] at (-3,3) {};
		\node[bvert, blue] at (1,-3) {};
		\node[bvert, blue] at (3,-3) {};
		\node[bvert, blue] at (-1,-3) {};
		\node[bvert, blue] at (-3,-3) {};
		\node[bvert, red] at (0,1) {};
		\node[bvert, red] at (2,1) {};
		\node[bvert, red] at (-2,1) {};
		\node[bvert, red] at (0,-1) {};
		\node[bvert, red] at (2,-1) {};
		\node[bvert, red] at (-2,-1) {};
		\node[bvert, red] at (0,3) {};
		\node[bvert, red] at (2,3) {};
		\node[bvert, red] at (-2,3) {};
		\node[bvert, red] at (0,-3) {};
		\node[bvert, red] at (2,-3) {};
		\node[bvert, red] at (-2,-3) {};
		\node[bvert, red] at (1,0) {};
		\node[bvert, red] at (3,0) {};
		\node[bvert, red] at (-1,0) {};
		\node[bvert, red] at (-3,0) {};
		\node[bvert, red] at (1,2) {};
		\node[bvert, red] at (3,2) {};
		\node[bvert, red] at (-1,2) {};
		\node[bvert, red] at (-3,2) {};
		\node[bvert, red] at (1,-2) {};
		\node[bvert, red] at (3,-2) {};
		\node[bvert, red] at (-1,-2) {};
		\node[bvert, red] at (-3,-2) {};
		
		\draw[red, densely dashed, line width=1.5pt, line cap=round] 
			(-1.5,1.5) -- (-1.5,2.5)
			(-0.5,1.5) -- (-0.5,2.5)
		;
		\draw[red, -, line width=1.5pt, line cap=round] 
			(-1.5,1.5) -- (-0.5,1.5)
			(-1.5,2.5) -- (-0.5,2.5)
		;

		\draw[blue, -, line width=1.5pt, line cap=round] 
			(-0.5,0.5) -- (0.5,1.5)
			(-1.5,-0.5) -- (-0.5,-0.5)
			(-0.5,-1.5) -- (0.5,-0.5)
			(1.5,0.5) -- (0.5,0.5)
		;
		\draw[blue, densely dashed, line width=1.5pt, line cap=round] 
			(-0.5,0.5) -- (-1.5,-0.5)  
			(-0.5,-0.5) -- (-0.5,-1.5)
			(0.5,-0.5) -- (1.5,0.5) 
			(0.5,0.5) -- (0.5,1.5)
		;
	\end{tikzpicture}
	\caption{Left: the gray vertices represent the boundary data chosen in Theorem~\ref{th:ytotoda}. Each vertex comes with an instance of Equation~\eqref{eq:yvariable} or Equation~\eqref{eq:yy} (blue lines). Together the equations allow propagation to the whole lattice. Right: the gray faces contain the boundary data chosen in Theorem~\ref{th:xboundary}. Each red vertex comes with an instance of Equation~\eqref{eq:xbvars} (red lines); each blue vertex with an instance of Equation~\eqref{eq:xmreight} (blue lines), as well as a 90° rotated instance. Together the equations allow propagation to the whole lattice.}
	\label{fig:boundary}
\end{figure}

\begin{theorem} \label{th:xboundary}
	Let $(c,p)$ be a Miquel map. Then $(c,p)$ is uniquely determined by the $0$-th layer of the $\Xb$-variables and the boundary data
	\begin{align}
		\bigl\{  \ub(z) \ | \ z \in \lat, \ z_1+z_2 \in \{0,1\}, \ z_1 - z_2 = \{0,1\}, \ z_3 = \{0,1\} \bigr\}.
	\end{align}
	Generically, any real-valued Y-system corresponds to the $\Xb$-variables of some Miquel map. The analogous two statements hold for $\uw$-maps and $\Xw$-variables.
\end{theorem}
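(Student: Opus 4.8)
The plan is to mimic the proof of Theorem~\ref{th:ytotoda}, with the $\ub$-map playing the role of the $t$-map and Theorem~\ref{th:udskp} the role of Theorem~\ref{th:tdskp}. As explained just before the theorem, a Miquel map is already determined by its $\ub$-map: every vertex $z\in\lat$ is incident to exactly four black tetrahedra, whose $\ub$-values are precisely the four points occurring in the definition \eqref{eq:xbvars} of $\Xb(z)$, so these points determine the circle $c(z)$, and then all remaining intersection points are recovered as intersections of the circles $c$. Moreover, by Theorem~\ref{th:udskp} the $\ub$-map is a dSKP map, hence determined by its restriction to the $0$-th layer $\{z_3\in\{0,1\}\}$ (the initial data in the sense of Section~\ref{sec:dskp}). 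Thus the whole theorem reduces to showing that, generically, the $0$-th layer of $\Xb$ together with the stated boundary data is in bijection with the restriction of $\ub$ to the $0$-th layer.

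For the forward direction (determinacy) I would argue by a propagation exactly as depicted in Figure~\ref{fig:boundary} (right). The vertices of $\lat$ in the slab $z_3\in\{0,1\}$ split into two classes. At a vertex of the first class one uses the cross-ratio equation \eqref{eq:xbvars}: since the cross-ratio is a Möbius function of each of its arguments, knowing $\Xb(z)$ and three of the four $\ub$-values in \eqref{eq:xbvars} determines the fourth. At a vertex of the second class the relevant relation is the eight-point multi-ratio identity \eqref{eq:xmreight}, obtained by combining the cross-ratio equation \eqref{eq:xbvars} with the dSKP equation \eqref{eq:udskp} so as to express the Y-system variable through eight $\ub$-values lying entirely in the two levels $z_3$ and $z_3+1$; together with its $90^\circ$-rotated instance it again pins down one new $\ub$-value. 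One then checks that, starting from the boundary data, these local steps can be ordered so as to reach every vertex of the slab exactly once, which yields $\ub$ on the $0$-th layer, and finally one propagates to all of $\lat$ using the dSKP equation \eqref{eq:udskp}.

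For the converse, take an arbitrary real-valued Y-system $U$ and a generic choice of boundary points in $\hC$ on the prescribed strip, and run the same propagation with $U$ in place of $\Xb$. This produces a map $\ub$ on the $0$-th layer, which extends to a dSKP map on all of $\tetb$ via \eqref{eq:udskp}. Because $U$ is real, for each $z$ the four points $\ub(\shi{\sm 1,\sm 2}z)$, $\ub(\shi{\sm 2,\sm 3}z)$, $\ub(z)$, $\ub(\shi{\sm 1,\sm 3}z)$ have cross-ratio $-U(z)\in\R$ and hence lie on a common circle; declare this circle to be $c(z)$. For each white tetrahedron $O$ the four circles $c(\shi 1 O)$, $c(\shi 2 O)$, $c(\shi 3 O)$, $c(\shi{1,2,3}O)$ have their six pairwise intersection points equal to the six $\ub$-values entering the dSKP equation \eqref{eq:udskp} at $O$, so --- this being the converse of the Clifford four-circle configuration used in the proof of Theorem~\ref{th:udskp} --- they pass through a common point, which we take to be $p(O)=\uw(O)$. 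One checks that $(c,p)$ is then a Miquel map with the prescribed $\ub$-map, and that its $\Xb$-variables agree with $U$: they agree on the $0$-th layer by construction, and both are Y-systems (Theorem~\ref{th:xtoda}) determined by their $0$-th layer. The genericity assumption is what excludes the degenerations (tangencies, points at $\infty$, collinear quadruples, vanishing radii) for which the multi-ratios fail to be defined or the reconstruction selects the other intersection point. The statements for $\uw$ and $\Xw$ follow verbatim after swapping black and white tetrahedra.

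I expect the main obstacle to be the bookkeeping in the propagation step: checking that the cross-ratio equation \eqref{eq:xbvars}, the identity \eqref{eq:xmreight}, and its rotation assign exactly one new $\ub$-value at each vertex and do not over-determine the $0$-th layer. This global consistency is not visible from the individual local equations; it is forced by the fact that the $\Xb$-variables form a Y-system (Theorem~\ref{th:xtoda}), so that the Y-system relation \eqref{eq:ysystems} is precisely the closing condition around each $3$-cell of $\lat$. Deriving \eqref{eq:xmreight} itself, and verifying it has the claimed dependence on its eight arguments, is the other --- purely computational --- hurdle, but it is a direct manipulation of \eqref{eq:udskp} and \eqref{eq:xbvars}.
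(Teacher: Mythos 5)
Your proposal is correct and follows essentially the same route as the paper: reduce to reconstructing $\ub$ on the slab $z_3\in\{0,1\}$, propagate from the boundary strips using the cross-ratio relation \eqref{eq:xbvars} at half the vertices and the eight-point multi-ratio \eqref{eq:xmreight} (with its $90^\circ$-rotated copy) at the others, then recover the circles and the $\uw$-points from $\ub$, with realness of the Y-system guaranteeing concyclicity in the converse direction. The paper's proof is exactly this, including your proposed derivation of \eqref{eq:xmreight} as the product of the cross-ratio $\Xb(\shi{1,2,3}z)$ with two instances of the dSKP equation \eqref{eq:udskp}, simplified by edge cancellations.
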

\proof{
	Let us explain how the $0$-th layer of the $\Xb$-variables determines $p^\bullet_0$ up to boundary data.
	The boundary data in the claim corresponds to two pairs of consecutive diagonals of $p^\bullet_0$, as represented by the gray quads in Figure~\ref{fig:boundary}. 	
	Assume we know $\Xb(z)$ for $z\in \Z^3_-$ with $z_3 = 1$ and $\ub$ on $\shi{\sm 1}z$, $\shi{\sm 2} z$, $\shi{\sm1,\sm 2, \sm3} z$. Then we may solve Equation~\eqref{eq:xbvars} for  $\ub(\shi{\sm3}z)$. However, we need a second multi-ratio equation to propagate to all of $p^\bullet_0$.
	Let $z \in \Z^3_-$ and consider the three quantities
	\begin{align}
		\cro( \ub(\shi{3} z), \ub(\shi{1} z), \ub(\shi{1,2,3} z), \ub(\shi{2} z)),\\
		\mr ( \ub(\shi{\sm 1} z), \ub(\shi{\sm 3} z), \ub(\shi{\sm 2} z), \ub(\shi{1} z), \ub(\shi{3} z), \ub(\shi{2} z)),\\
		\mr ( \ub(\shi{1,2,3} z), \ub(\shi{1} z), \ub(\shi{1,1,2} z), \ub(\shi{1,2,\sm 3} z), \ub(\shi{1,2,2} z), \ub(\shi{2} z)).
	\end{align}
	The cross-ratio is equal to $\Xb( \shi{1,2,3} z)$ (Definition~\ref{def:xvariables}), the multi-ratios are both equal to $-1$, since they are instances of the dSKP equation (Theorem~\ref{th:udskp}). Therefore the product of the three quantities is equal to $\Xb( \shi{1,2,3} z)$, but also due to edge-cancellations to
	\begin{align}
		\mr(\ub(\shi{\sm 1} z), \ub(\shi{\sm 3} z), \ub(\shi{\sm 2} z), \ub(\shi{1} z), \ub(\shi{1,1,2} z), \ub(\shi{1,2, \sm 3} z), \ub(\shi{1,2,2} z), \ub(\shi{2} z)). \label{eq:xmreight}
	\end{align}
	Therefore, we know the values of these multi-ratios, which is $\Xb(\shi{1,2,3} z)$. Note that these multi-ratios involve only points of $p^\bullet_0$, which we means we can solve any one of them for one of the points involved, assuming we know the other seven points. It is now straight-forward to verify that we may propagate the initial data to all of $p^\bullet_0$.
	As explained before, $p^\bullet_0$ determines all of $\ub$, which in turn determines $(c,p)$ uniquely. Note that the fact that the $\Xb$-variables are real ensures that the appropriate points of $\ub$ are on circles.\qed
}

\section{Integrable circle patterns} \label{sec:icp}

In Section~\ref{sec:dskp} we showed that the $\ub$-map defined on $\tetb$ is a dSKP map. Then we introduced the $\Xb$-variables in Section~\ref{sec:toda} and showed that the $\Xb$-variables are a Y-system. As we explained, the $\Xb$-variables coincide with the general Y-system associated to a dSKP map in \cite{abs}. Note that if we identify the black tetrahedra $\tetb$ with a copy of $\lat$, let us say $\lat'$, then the $X$-variables are cross-ratios of one type of tetrahedra of $\lat'$. However, there are two types of tetrahedra in $\lat'$. Therefore, we may also consider cross-ratios of the other type of tetrahedra.

\begin{figure}
	\centering
	\includegraphics[scale=0.7]{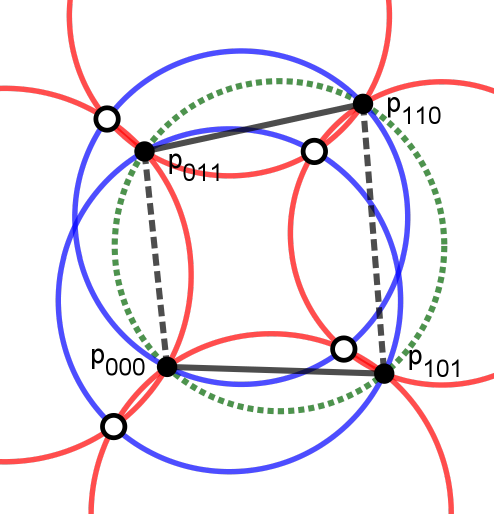}
	\hspace{8mm}
	\includegraphics[scale=0.7]{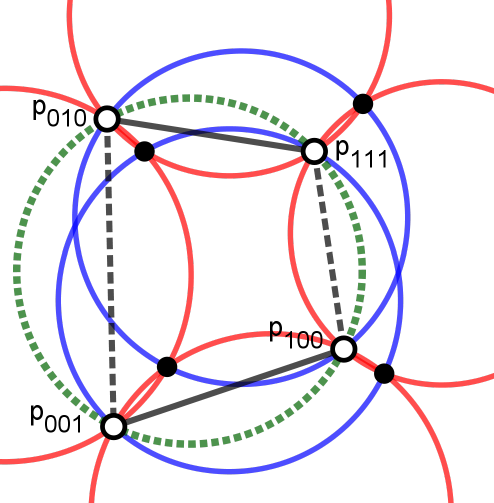}
	\caption{The points that define $\Wb({1,1,1})$ (left) and $\Ww({1,1,1})$ (right) in the Miquel configuration at the octahedron $(1,1,1)$. This is a special case of a Miquel configuration in an integrable circle pattern, in which two additional (green, dotted) circles exist.}
	\label{fig:intcp}
\end{figure}

\begin{definition}\label{def:wvariables}
	Consider a Miquel map and the associated $\ub, \uw$ maps.
	The \emph{$\Wb$-variables} $\Wb: \oct \rightarrow \C$, are defined by
	\begin{align}
		\Wb(z) \coloneqq -\cro(\ub(\shi{\sm 1,\sm 2,\sm 3}z),  \ub(\shi{\sm 2}z),  \ub(\shi{\sm 3}z),  \ub(\shi{\sm 1}z)).
	\end{align}
	Analogously, the \emph{$\Ww$-variables} $\Xw: \oct \rightarrow \C$, are defined by
	\begin{align}
		\Ww(z) \coloneqq -\cro( \uw(\shi{\sm 1,\sm 2}z),  \uw(\shi{\sm 2,\sm 3}z), \uw(z),  \uw(\shi{\sm 1,\sm 3}z) ).
	\end{align}	
\end{definition}

Note that the $W$-variables are generally complex-valued, not real-valued. However, they satisfy the same recursion law as the $X$-variables.

\begin{theorem}\label{th:wtoda}
	The $\Wb$-variables and the $\Ww$-variables of a Miquel map are a Y-system.
\end{theorem}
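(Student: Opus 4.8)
The plan is to mirror the strategy used for Theorem~\ref{th:xtoda} as closely as possible. The key structural observation is that the $W^\bullet$-variables, although attached to octahedra $\oct$ rather than to vertices $\lat$, are still cross-ratios of four points of the dSKP map $\ub$. In Theorem~\ref{th:udskp} we established that $\ub$ is a dSKP map on $\tetb \simeq \Z^3_+$; if we identify $\tetb$ with a fresh copy $\lat'$ of the octahedral lattice, then (as noted in the text just before Definition~\ref{def:wvariables}) the $X^\bullet$-variables are the cross-ratios associated to one family of tetrahedra of $\lat'$, and the $W^\bullet$-variables are precisely the cross-ratios associated to the \emph{other} family of tetrahedra of $\lat'$. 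So the statement to prove is: for any dSKP map $a: \Z^3_\pm \to \hC$, the ``white-tetrahedron'' cross-ratios (in addition to the ``black-tetrahedron'' ones treated in \cite[Section~7]{abs}) form a Y-system. First I would set up this reduction explicitly, writing down which octahedron of $\lat'$ the variable $\Wb(z)$ lives on and which four tetrahedra of $\lat'$ supply its four arguments under the identification $\phi$ of Section~\ref{sec:afour}.

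Next I would carry out the same algebraic verification that underlies \cite{abs}: namely, that Equation~\eqref{eq:ysystems} for the $W^\bullet$-variables is an algebraic consequence of four instances of the dSKP equation~\eqref{eq:udskp} sitting around the relevant octahedron and its neighbours in $\lat'$. Concretely, each factor $1 + \Wb(\shi{\pm i} z)$ appearing on the right-hand side of~\eqref{eq:ysystems} can be rewritten, using the definition of the cross-ratio, as a single ratio of differences of points of $\ub$; then the four dSKP relations~\eqref{eq:udskp} (one for each of the four octahedra of $\lat'$ adjacent along the $i=1,2,3$ directions to the octahedron carrying $\Wb(z)$) let one cancel and rearrange these ratios to reproduce the left-hand side $\Wb(\shi 3 z)\,\Wb(\shi{\sm 3} z)$. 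Since this is exactly the computation already done in \cite{abs} — only performed for the second family of tetrahedra rather than the first, which is an automorphism of the combinatorial picture of $\lat'$ — I would invoke it rather than redo it, noting that the dSKP equation has the full octahedral symmetry (invariance under all cyclic permutations and the three antipodal swaps), so nothing in the \cite{abs} argument distinguishes the two tetrahedron families. The identical argument applied to $\uw$ and the $\Ww$-variables finishes the proof.

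The one genuine subtlety — and the step I expect to be the real obstacle — is bookkeeping: making sure that under the identification of $\tetb$ with $\lat'$ the shift operators $\shi{\sm i}$ used in Definition~\ref{def:wvariables} land on the correct vertices of $\lat'$, so that $\Wb$ really is the ``opposite-type'' cross-ratio in $\lat'$ and not, say, a $X^\bullet$-variable in disguise or something living on the wrong cell. This is the place where a sign or an index shift can silently go wrong. I would resolve it by tracking a single explicit octahedron (say the one at $(1,1,1)$, as in Figure~\ref{fig:intcp}) through the identification, checking that the four points $\ub(\shi{\sm 1,\sm 2,\sm 3}z), \ub(\shi{\sm 2}z), \ub(\shi{\sm 3}z), \ub(\shi{\sm 1}z)$ are exactly the four vertices of a white tetrahedron of $\lat'$, and that the adjacent octahedra supplying the $1+W$ factors in~\eqref{eq:ysystems} are the correct neighbours. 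Once that dictionary is pinned down, the rest is a citation of \cite{abs} together with the octahedral symmetry of dSKP, and the $\uw$ case is verbatim the same. Note that reality is \emph{not} claimed here (and indeed fails in general), so no angle-condition argument is needed — the Y-system identity is a purely birational consequence of dSKP.
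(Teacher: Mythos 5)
Your proposal is correct and is essentially the paper's own argument: the paper proves Theorem~\ref{th:wtoda} by the same reduction, observing (just before Definition~\ref{def:wvariables}) that upon identifying $\tetb$ with a copy of $\lat$ the $W$-variables are the cross-ratios of the other family of tetrahedra of the dSKP map $\ub$, so the computation of \cite[Section~7]{abs} used for Theorem~\ref{th:xtoda} applies verbatim by the octahedral symmetry of the dSKP equation. Your extra care about the index bookkeeping under the identification is a reasonable elaboration of the same route, not a different one.
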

\proof{
	Same as proof of Theorem~\ref{th:udskp}.\qed
}

The $\Wb$- and $\Ww$-variables are actually directly related.

\begin{theorem}\label{th:wconjugate}
	The $\Wb$- and $\Ww$-variables are complex conjugates, that is for all $z \in \oct$ holds $\Ww(z) = \overline{\Wb}(z)$.
\end{theorem}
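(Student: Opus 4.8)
The plan is to reduce the statement to a local identity inside a single Miquel configuration. Fix an octahedron $O \in \oct$. By Definition~\ref{def:wvariables}, $\Wb(O)$ is the cross-ratio (with a sign) of four points of $\ub$ among the eight intersection points of the Miquel configuration at $O$, while $\Ww(O)$ is the cross-ratio (with a sign) of the four points of $\uw$ among those same eight intersection points. All eight of these intersection points lie on the six circles of the configuration, which are genuine circles in $\hC$ (recall the paper's standing assumption that no circle passes through $\infty$). Since complex conjugation $z \mapsto \bar z$ is not Möbius-invariant, the first step is to \emph{normalize}: apply a Möbius transformation $M$ that sends one of the six circle centers — or more conveniently, uses the fact that all eight points lie on circles — to a configuration where conjugation behaves well.

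Concretely, I would proceed as follows. First, observe that a cross-ratio $\cro(a_1,a_2,a_3,a_4)$ of four points lying on a common circle is real, but that is not what we need here; the four points of $\ub$ do \emph{not} all lie on one circle, nor do the four points of $\uw$. Instead, the key is the reflection symmetry of a Miquel six-circle configuration. After a Möbius normalization sending the "central" intersection point to $\infty$ (the one playing the role of $P$ in the proof of Theorem~\ref{th:udskp}, i.e. $\uw(O-e_1-e_2-e_3) = p(\shi{\sm1,\sm2,\sm3}O)$, say), the six circles through that point become six lines, and the remaining intersection points become the vertices of a planar configuration. Then I would show that in this normalized picture, the four $\uw$-points are obtained from the four $\ub$-points by a single anti-conformal map — a reflection in a line, or conjugation composed with an affine map — because each $p^\circ$ point is the second intersection of a pair of lines whose "first" intersection is the reflection partner, and the whole Miquel configuration carries an involutive symmetry swapping the two tetrahedral classes. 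Since an anti-conformal map sends a cross-ratio to its complex conjugate, and an orientation-reversing Möbius map does likewise, one concludes $-\cro$ of the $\uw$-points equals the conjugate of $-\cro$ of the $\ub$-points, which is exactly $\Ww(O) = \overline{\Wb(O)}$.

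An alternative and possibly cleaner route avoids describing the symmetry geometrically and instead computes directly, in the spirit of the proof of Lemma~\ref{lem:tumr}. Normalize the Miquel configuration by a Möbius transformation so that the common point becomes $\infty$; parametrize the six lines, or equivalently write the eight finite intersection points in terms of a minimal set of complex parameters (the configuration has the expected number of degrees of freedom once we fix that one point at $\infty$). Then express both $\Wb(O)$ and $\Ww(O)$ as explicit rational functions of these parameters and verify the conjugation identity by a computation analogous to the one reducing $\tfrac{M_3-P_1}{M_3-P_2}$ to $\tfrac{\bar P_1}{\bar P_2}$ in Lemma~\ref{lem:tumr}. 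In fact, the reflection formulas $M_j = P_i - \bar P_i \tfrac{P_k - P_i}{\bar P_k - \bar P_i}$ already show how the "other" intersection point of two lines through $\infty$ and through a common finite point is an anti-holomorphic function of the data, which is the algebraic heart of why conjugation appears.

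The main obstacle I anticipate is bookkeeping: matching the shift-operator labels $\shi{\sm1,\sm2,\sm3}z, \shi{\sm1}z, \shi{\sm2}z, \shi{\sm3}z$ of the $\uw$-points against the labels $\shi{\sm1,\sm2,\sm3}z', \shi{\sm1}z', \dots$ of the $\ub$-points correctly, and in particular making sure the \emph{order} of the four arguments in each cross-ratio is the one for which the symmetry acts as straightforward conjugation rather than conjugation composed with a cross-ratio rearrangement (which would produce $1-\bar W$ or $1/\bar W$ instead of $\bar W$). So the delicate part is not the analytic content — which is essentially Lemma~\ref{lem:tumr}-style — but pinning down the precise combinatorial correspondence between the two tetrahedral families inside one octahedron and checking that it respects cyclic order. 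I would handle this by drawing the normalized configuration (as in Figure~\ref{fig:intcp}) and tracking the four points of each family explicitly around their respective "circles" in the normalized picture.
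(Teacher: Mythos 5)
There is a genuine gap, and it sits exactly where you locate the ``analytic content.'' Your first route rests on the assertion that the Miquel configuration ``carries an involutive symmetry swapping the two tetrahedral classes,'' i.e.\ an anti-conformal map taking the four $\ub$-points to the four $\uw$-points. But two ordered quadruples in $\hC$ are related by an orientation-reversing Möbius transformation precisely when their cross-ratios are complex conjugate, so the existence of that symmetry \emph{is} the statement $\Ww(z)=\overline{\Wb}(z)$; asserting it is circular unless you actually construct the map, and you give no construction. (The remark following the theorem makes exactly this point: the anti-Möbius map is presented as a \emph{consequence} of the theorem, and the author states explicitly that no simpler proof was found.) Your second route has the right overall shape --- Möbius-normalize and compute, which is what the paper does --- but the setup is off and the crucial ingredient is missing. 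A Miquel configuration has no point common to all six circles (each of the eight intersection points lies on only three of them), so ``send the common point to $\infty$ and get six lines'' does not parse; you appear to be importing the Clifford-configuration picture from Theorem~\ref{th:udskp}, which lives on tetrahedra, whereas the $W$-variables live on octahedra. Also, $p(\shi{\sm 1,\sm 2,\sm 3}O)$ is a $\ub$-point, not a $\uw$-point, since $\shi{\sm 1,\sm 2,\sm 3}O\in\Z^3_+=\tetb$. Most importantly, ``parametrize \dots\ by a minimal set of complex parameters'' is where all the work is: a Miquel configuration is a constrained object (Miquel's theorem is itself one of the constraints), the claimed identity is false for a generic choice of eight points, and your sketch never identifies the equations those points satisfy.

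The paper's proof supplies precisely what your sketch omits. It normalizes $\ub(\shi{\sm 1,\sm 2,\sm 3}z)=0$ and $\uw(z)=\infty$, encodes the concyclicity of each quadruple $P_i,P,P_{i+1},P_{i,i+1}$ on a circle of the configuration by a \emph{real} parameter $\lambda_{i,i+1}$, solves for $P_{i,i+1}$ in terms of $P_i,P_{i+1},\lambda_{i,i+1}$, and then imposes the three colinearity conditions coming from the circles through $\infty$ (Equation~\eqref{eq:lincondi}); these determine $\lambda_{2,3}$ and $\lambda_{3,1}$ in terms of $\lambda_{1,2}$ and the $P_i,\bar P_i$, after which the conjugation identity becomes a rational identity in $P_1,P_2,P_3,\bar P_1,\bar P_2,\bar P_3,\lambda_{1,2}$ that is verified by computer algebra. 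To complete your proposal you would need to write down this (or an equivalent) parametrization of the constraint variety before any computation can begin; the bookkeeping of argument order that you flag as the delicate part is the easy part by comparison.
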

\proof{
	To simplify notation, define
	\begin{align}
		P &= \ub(\shi{\sm 1,\sm 2,\sm 3} z), & P_{3,1} &=  \ub(\shi{\sm 2} z), & P_{1,2} &=  \ub(\shi{\sm 3} z), & P_{2,3} &=  \ub(\shi{\sm 1} z),\\
		P_{3} &=  \uw(\shi{\sm 1,\sm 2} z), & P_1 &=  \uw(\shi{\sm 2,\sm 3} z), & P_{1,2,3} &= \uw(z), & P_2 &=  \uw(\shi{\sm 1,\sm 3} z).
	\end{align}		
	We normalize $P = 0$, $P_{1,2,3} = \infty$. Let $i \in [3]$ and consider the crossratios
	\begin{align}
		\lambda_{i,i+1} = \cro(P_i,P,P_{i+1}, P_{i,i+1}) \in \R.
	\end{align}
	Solving this equation for $P_{i,i+1}$ gives
	\begin{align}
		P_{i,i+1} = \frac{\lambda_{i,i+1} - 1}{\lambda_{i,i+1}P_i^{-1} - P_{i+1}^{-1}}.\label{eq:givencr}
	\end{align}
	However, the three cross-ratios $\lambda_{1,2}, \lambda_{2,3}, \lambda_{3,1}$ are not independent, since for each $i\in[3]$ the three points $P_i,P_{i-1,i},P_{i,i+1}$ need to be on a line. The colinearity condition is equivalent to
	\begin{align}
		\frac{P_{i-1,i}-P_i}{P_{i,i+1}-P_i} \in \R.
	\end{align}
	Inserting Expressions~\eqref{eq:givencr} yields
	\begin{align}
		\frac{(\bar P_i - \bar P_{i+1})(P_{i-1} - P_{i})}{(P_i - P_{i+1})(\bar P_{i-1} - \bar P_{i})} = \frac{(\bar P_i - \lambda_{i,i+1}\bar P_{i+1})(P_{i-1} - \lambda_{i-1,i} P_{i})}{(P_i - \lambda_{i,i+1} P_{i+1})(\bar P_{i-1} - \lambda_{i-1,i}\bar P_{i})}. \label{eq:lincondi}
	\end{align} 
	The product of the three instances of this equation yields the trivial equation $1=1$, which is a proof of Miquel's theorem. The claim in the theorem is equivalent to
	\begin{align}
		\cro(P_{1,2,3},P_1,P_2,P_3) = \overline{\cro(P,P_{2,3},P_{3,1}, P_{1,2})}. \label{eq:crcondi}
	\end{align}
	We may solve Equation~\eqref{eq:lincondi} for $\lambda_{2,3}$ and $\lambda_{3,1}$ and insert these into Equation~\eqref{eq:crcondi}, and also insert the expressions \eqref{eq:givencr}. In this way we obtain a rational equation in the formal variables
	\begin{align}
		P_1,P_2,P_3,\bar P_1,\bar P_2,\bar P_3, \lambda_{1,2}
	\end{align}
	A computer algebra software will then confirm that Equation~\eqref{eq:crcondi} holds.\qed
}

\begin{remark}
	Although Theorem~\ref{th:wconjugate} appears to be rather elementary, we did not find a simpler proof. We were also not able to find the theorem in the literature, although we are not sure that it is new. Note that in terms of Möbius geometry, the theorem states that there is an orientation-reversing Möbius transformation that maps $P,P_{1,2},P_{2,3},P_{3,1}$ to $P_{1,2,3},P_3,P_{1},P_2$.
\end{remark}

Due to Theorem~\ref{th:wtoda}, if the initial $\Wb$-variables are real, then all $\Wb$-variables are real. Moreover, due to Theorem~\ref{th:wconjugate}, if the $\Wb$-variables are real, so are the $\Ww$-variables. Therefore, let us briefly discuss the case of real $W$-variables.

Consider an edge $e=(v,v') = (f,f')^* \in E(\Z^2)$ with $v,f,v',v'$ in counterclockwise order, as well as a circle pattern $(c,p)$ with conical net $t$. We call the \emph{edge cross-ratio} $\gamma(e)$ the cross-ratio
\begin{align}
	\gamma(e) = \cro(t(v), p(f), t(v'), p(f')).
\end{align}
Note that the edge cross-ratio has modulus 1, and the argument is the intersection angle of $c(v)$ and $c(v')$.

\begin{definition}
	An \emph{integrable circle pattern} \cite{bmsanalytic} is a circle pattern such that for each vertex $v\in V$ holds
	\begin{align}
		\gamma(e_1)\gamma(e_2)\gamma(e_3)\gamma(e_4) = 1,
	\end{align}
	where $e_1,e_2,e_3,e_4$ are the four edges containing $v$.
\end{definition}

\begin{theorem}
	Let $(c,p)$ be a Miquel map. The following are equivalent:
	\begin{enumerate}
		\item $\Wb,\Ww$ are real-valued, \label{itm:intcpreal}
		\item $\Wb = \Ww$, \label{itm:intcpequal}
		\item $(c_k,p_k)$ is an integrable circle pattern for all $k\in \Z$.\label{itm:intcp}\qedhere
	\end{enumerate}
\end{theorem}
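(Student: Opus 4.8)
The plan is to prove a cycle of implications $(\ref{itm:intcpreal}) \Rightarrow (\ref{itm:intcp}) \Rightarrow (\ref{itm:intcpequal}) \Rightarrow (\ref{itm:intcpreal})$, using Theorem~\ref{th:wconjugate} to get the last step essentially for free. Since $\Ww(z) = \overline{\Wb(z)}$ always holds, $\Wb = \Ww$ is equivalent to $\Wb$ being real-valued, so $(\ref{itm:intcpequal}) \Leftrightarrow (\ref{itm:intcpreal})$ is immediate. Hence the real content is the equivalence with $(\ref{itm:intcp})$: being an integrable circle pattern.

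First I would unwind what $\Wb(z) \in \R$ means geometrically. Fix an octahedron $z \in \oct$, corresponding to a Miquel configuration; the eight intersection points are distributed among $\ub$ and $\uw$ as in the proof of Theorem~\ref{th:wconjugate}. The $\Wb$-variable is $-\cro$ of four points of $\ub$ that lie on a common circle if and only if they are concyclic, which is precisely when the cross-ratio is real. So $\Wb(z) \in \R$ says that the four $\ub$-points $\ub(\shi{\sm 1,\sm 2,\sm 3}z)$, $\ub(\shi{\sm 2}z)$, $\ub(\shi{\sm 3}z)$, $\ub(\shi{\sm 1}z)$ lie on a circle; by the remark after Theorem~\ref{th:wconjugate}, this is equivalent to the existence of an extra circle through those four intersection points (the green dotted circles in Figure~\ref{fig:intcp}), and dually $\Ww(z)\in\R$ gives a second extra circle through the four $\uw$-points. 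The key translation step is then: such an octahedron carries these two additional circles if and only if the corresponding local patch $(c_k, p_k)$ satisfies the integrability (product-of-edge-cross-ratios equal to $1$) condition around the relevant vertex. I would establish this by a direct Möbius-normalization computation, sending the common intersection point $P$ of the six circles to $\infty$ so all six circles become lines; then concyclicity of four of the second-intersection points becomes a concyclicity/colinearity statement about reflections, and I would check it is equivalent to the closure condition $\gamma(e_1)\gamma(e_2)\gamma(e_3)\gamma(e_4) = 1$ at the central vertex by comparing arguments of the edge cross-ratios with the line directions. The definition in \cite{bmsanalytic} and the characterization of integrable circle patterns in terms of existence of such auxiliary circles should be invoked here; this is the classical fact that integrable circle patterns are exactly those admitting a second family of circles.

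Having the local equivalence, I would globalize. For $(\ref{itm:intcpreal}) \Rightarrow (\ref{itm:intcp})$: if $\Wb$ (hence also $\Ww$) is real at every octahedron $z$, then the integrability condition holds around every vertex of every layer $(c_k, p_k)$, so each $(c_k, p_k)$ is an integrable circle pattern. Conversely, $(\ref{itm:intcp}) \Rightarrow$ real: if $(c_k,p_k)$ is integrable for each $k$, then around each vertex the auxiliary circles exist, forcing $\Wb(z), \Ww(z) \in \R$. The main obstacle I expect is the local normalization computation identifying "$\Wb(z)\in\R$" with the edge-cross-ratio closure relation: one has to match the Möbius-invariant quantity $\cro(t(v),p(f),t(v'),p(f'))$ — which mixes centers and intersection points — against a cross-ratio of four intersection points, and the bookkeeping of which second-intersection point sits on which line, together with the orientation conventions $v,f,v',f'$ counterclockwise, is where sign and conjugation errors creep in. I would handle this exactly as in the proof of Theorem~\ref{th:wconjugate}: normalize $P = 0$ (or $\infty$), write each circle center as a reflection of $0$ in a line, express everything in terms of the $P_i$ and $\bar P_i$, and either do the cross-ratio algebra by hand or defer the final rational identity to computer algebra. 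Once the dictionary between $\Wb(z)\in\R$ and the vertex-wise integrability relation is pinned down, the three-way equivalence follows by the simple cycle of implications above.
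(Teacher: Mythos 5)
Your proposal is correct and follows essentially the same route as the paper: the equivalence of (i) and (ii) is immediate from Theorem~\ref{th:wconjugate}, and the equivalence with (iii) reduces to the fact that $\Wb(z)\in\R$ means the four relevant points of $\ub$ are concyclic, which happens precisely when the Miquel configuration sits inside an integrable circle pattern. The only difference is that the paper simply cites \cite{amiquel} for this last concyclicity-versus-integrability equivalence, whereas you propose to reprove it by a Möbius normalization computation; that is the genuine content of the step and would need to be carried out, but the plan is sound.
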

\proof{
	The equivalence of \ref{itm:intcpreal}.~and \ref{itm:intcpequal}.~is clear due to Theorem~\ref{th:wconjugate}. Next, we explain the equivalence of \ref{itm:intcpreal} and \ref{itm:intcp}. In \cite{amiquel}, it was proven that a Miquel configuration is part of an integrable circle pattern if and only if the four intersection points $\ub(z), \shi{\sm 2} \ub(\shi{\sm 1,\sm 2,\sm 3} z),  \ub(\shi{\sm 3} z),  \ub(\shi{\sm 1} z)$ are on a circle, or equivalently the four intersection points $ \uw(\shi{\sm 1,\sm 2} z), \uw(\shi{\sm 2,\sm 3} z), \uw(z), \uw(\shi{\sm 1,\sm 3} z)$ are on a circle. The $\Wb$-variables are minus the cross-ratio of the first four points, which is real if and only if these points are on a circle, which proves the claim.\qed
}

Note that in \cite{klrr} the fixed points of Miquel dynamics were also considered. In particular, they showed that orthogonal circle patterns and certain circle patterns with constant intersection angles are fixed points of Miquel dynamics. It is not hard to see that these are special cases of integrable circle patterns. Another special case of integrable circle patterns are \emph{isoradial graphs}, see \cite{btisosurvey}. 

\begin{remark}\label{rem:miqintcpzigzag}
	There is an equivalent definition of integrable circle patterns, requiring that the edge cross-ratios $\gamma$ factorize onto zig-zag paths of $\Z^2$. More specifically, a zig-zag is a path that turns maximally right and then maximally left in alternating fashion. To each zig-zag we associate a complex number in $\mathbb S^1 \subset \C$. In an integrable circle pattern, each cross-ratio $\gamma(e)$ is the ratio of the two complex numbers associated to the zig-zags. The effect of Miquel dynamics on an integrable circle pattern is to interchange the complex numbers associated to zig-zags in every second adjacent pair.
\end{remark}

\begin{remark}
	Note that the space of integrable circle patterns is significantly larger than the space of isoradial graphs. Both come with complex numbers associated to the zig-zags as mentioned in Remark~\ref{rem:miqintcpzigzag}. With a fixed choice of these complex numbers, an isoradial graph is completely determined up to translation, rotation and scaling of the plane. On the other hand, with a fixed choice of the complex numbers, an integrable circle pattern is not yet determined even up to translation, rotation and scaling of the plane. Instead, one may still arbitrarily choose one-dimensional boundary data, for example on the coordinate axes.
\end{remark}

\begin{remark}
	One can also consider circle patterns $(c,p)$ on $\Z^3$, for which we switch the role of faces and vertices. That is we consider maps 
	\begin{align}
		p: \Z^3 \rightarrow \hC,\quad  c: F(\Z^3) \rightarrow \mbox{Circles}(\hC),
	\end{align}
	such that $p(v) \in c(f)$ whenever $v\in f$. By using Miquel's theorem it is possible to show that these circle patterns exist and are uniquely defined from 2-dimensional boundary data given on the coordinate planes of $\Z^3$. Theorem~\ref{th:wconjugate} tells us that propagation from boundary data is governed by
	\begin{align}
		\cro(p(z),  p(\shi{1,3} z),  p(\shi{1,2} z),  p(\shi{2,3} z)) =  \overline{\cro(p(\shi{1,2,3}z),  p(\shi{2} z),  p(\shi{3} z),  p(\shi{1} z))}.
	\end{align}
	The same equation without the complex conjugation on the right-hand-side is known as the \emph{double cross-ratio equation} \cite{nsdoublecr}. As a consequence, if we take a circle pattern on $\Z^3$ and apply complex conjugation to every second point, the result will be a solution of the double cross-ratio equation.
\end{remark}

\section{Harmonic maps, orthodiagonal quads, $h$-Miquel maps} \label{sec:harmonic}

\begin{definition} \label{def:hmiquelmap}
	An \emph{$h$-Miquel map} is a Miquel map $(c,p)$ such that for each $z \in \Z^3_-$ with $z_3 = 0$ the four circle centers $t(\shi3 z)$, $t(\shi{1} z)$, $t(\shi{2} z)$, $t(\shi{1,2,3} z)$ form a rectangle.
\end{definition}

Let $(c_0, p_0)$ be the circle pattern defined by an $h$-Miquel map at level 0, as defined in Equation~\eqref{eq:miquellevel}. Recall that due to Definition~\ref{def:cp} the circles $c_0$ are defined on $\Z^2$ and the points $p_0$ on $F(\Z^2)$. Definition~\ref{def:hmiquelmap} is equivalent to the requirement that around each second face the four circle centers form a rectangle, see also Figure~\ref{fig:orthodiag}. Recall that adjacent intersection points of a circle pattern are related by a reflection about the corresponding adjacent circle centers. As a result, for every $z \in \Z^3_-$ with $z_3=0$, the four points $\ub(\shi{\sm1}z)$, $\ub(\shi{\sm2}z)$, $\ub(\shi{1}z)$, $\ub(\shi{2}z)$ form an orthodiagonal quad, that is a quad with orthogonal diagonals. Moreover, $\uw(z)$ is the intersection point of the diagonals of that orthodiagonal quad. See \cite{josefssonorthodiagonal} for more details on orthodiagonal quads, they also appear in the theory of discrete linear complex analysis \cite{bglinear, bmsanalytic} and the study of random maps \cite{ggjnortho}. Conversely, every map consisting of orthodiagonal quads defines a circle pattern and an $h$-Miquel map.

\begin{figure}
	\centering
	\begin{tikzpicture}[scale=1.8,font=\sffamily]
		\node[bvert, label=right:$\ub(\shi{1} z)$] (o1) at (2,0) {};
		\node[bvert, label=left:$\ub(\shi{\sm1} z)$] (om1) at (-1,0) {};
		\node[bvert, label=right:$\ub(\shi{2} z)$] (o2) at (0,1.5) {};
		\node[bvert, label=right:$\ub(\shi{\sm2} z)$] (om2) at (0,-1.2) {};
		
		\draw[-] 
			(o1) -- (om1)
			(o2) -- (om2)
		;
		
		\node[wvert] (w) at (0,0) {};
		
		\node[bvert, blue, label=left:$t( \shi{3}z)$] (t) at ($0.5*(om1)+0.5*(om2)$) {};
		\node[bvert, blue, label=right:$t(\shi{1} z)$] (t1) at ($0.5*(o1)+0.5*(om2)$) {};
		\node[bvert, blue, label=left:$t(\shi{2} z)$] (t2) at ($0.5*(om1)+0.5*(o2)$) {};
		\node[bvert, blue, label=right:$t(\shi{1,2,3} z)$] (t12) at ($0.5*(o1)+0.5*(o2)$) {};

 		\node [draw, blue] at (t1) [circle through={(w)}] {};
 		\node [draw, blue] at (t2) [circle through={(w)}] {};
 		\node [draw, blue] at (t) [circle through={(w)}] {};
 		\node [draw, blue] at (t12) [circle through={(w)}] {};
 		
 		\coordinate[] (p1) at ($(t1)!(w)!(t2)$) ;
		\node[bvert, label=right:$\ub(\shi{\sm3} z)$] (b3) at ($(w)!2!(p1)$) {};
 		\coordinate[] (p2) at ($(t)!(w)!(t12)$) ;
		\node[bvert, label=right:$\ub(\shi{3} z)$] (bm3) at ($(w)!2!(p2)$) {};
		
		\draw[-, blue]
			(t) -- (t1) -- (t12) -- (t2) -- (t)
		;

		\node[wvert] (w) at (0,0) {};
	\end{tikzpicture}
	\caption{An orthodiagonal quad in an $h$-Miquel map with labels assuming $z\in \Z^3_-$ with $z_3=0$. The unlabeled white point in the center is $\uw(z)$.}
	\label{fig:orthodiag}
\end{figure}

\begin{remark}
	If we restrict $\ub$ to $\{ z \in \Z^3 \ | \  z_3 = 0, z_1 + z_2 \in 4\Z \}$, which is half of the points of $\ub_0$, then this map is called a \emph{harmonic map}. This is because this restriction satisfies a Laplace equation with real coefficients given by quotients of diagonal lengths. Moreover, if $t_0$ is an embedding, then these quotients are real positive, and the restriction of $\ub$ is called a \emph{harmonic embedding} or \emph{Tutte embedding} \cite{tutteembedding}. The restriction of $\ub$ to  $\{ z \in \Z^3 \ | \  z_3 = 0, z_1 + z_2 \in 4\Z + 2 \}$ is the dual harmonic map that satisfies a Laplace equation with inverse coefficients. See also \cite{clrtembeddings, klrr} for more details.
\end{remark}

\begin{lemma}\label{lem:focalformula}
	Let $(c,p)$ be an $h$-Miquel map. For every $z \in \Z^3_-$ with $z_3 = 0$ holds 
	\begin{align}
		\ub(\shi{3}z) &= \frac{\ub(\shi{1}z) \ub(\shi{2}z) - \ub(\shi{\sm1}z) \ub(\shi{\sm2}z)}{\ub(\shi{1}z) + \ub(\shi{2}z) - \ub(\shi{\sm1}z) - \ub(\shi{\sm2}z)}, \label{eq:focalformone} \\
		\ub(\shi{\sm 3}z) &= \frac{\ub(\shi{1}z) \ub(\shi{\sm 2}z) - \ub(\shi{\sm1}z) \ub(\shi{2}z)}{\ub(\shi{1}z) + \ub(\shi{\sm2}z) - \ub(\shi{\sm1}z) - \ub(\shi{2}z)}.  \label{eq:focalformtwo}
	\end{align}		
\end{lemma}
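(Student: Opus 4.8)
The plan is to reduce both identities to the computation of the second intersection point of two explicit circles, after putting the configuration into normal position; the geometric input is exactly the orthodiagonal-quad description of $h$-Miquel maps recalled just before the lemma.

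\emph{Geometric set-up.} Fix $z\in\Z^3_-$ with $z_3=0$. The point $\uw(z)=p(z)$ lies on each of the four circles $c(\shi{1}z),c(\shi{2}z),c(\shi{3}z),c(\shi{1,2,3}z)$, these being the circles at the vertices of the white tetrahedron $z$; and from the incidences of a Miquel map, among the four ``mixed'' points $\ub(\shi{1}z),\ub(\shi{2}z),\ub(\shi{\sm1}z),\ub(\shi{\sm2}z)$ each of these circles contains exactly two, the four such pairs being precisely the four sides of the quadrilateral with these vertices (in this cyclic order). Because $(c,p)$ is an $h$-Miquel map, the four centers $t(\shi{3}z),t(\shi{1}z),t(\shi{2}z),t(\shi{1,2,3}z)$ form a rectangle, i.e.\ they are the midpoints of the four sides of this quadrilateral (its Varignon rectangle); hence each of the four circles has the corresponding side of the quad as a \emph{diameter}. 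Tracking the incidences, $\ub(\shi{3}z)$ and $\uw(z)$ are the two common points of the circle on the diameter $\ub(\shi{1}z)\ub(\shi{2}z)$ and the circle on the diameter $\ub(\shi{\sm1}z)\ub(\shi{\sm2}z)$ (opposite sides of the quad), while $\ub(\shi{\sm3}z)$ and $\uw(z)$ are the two common points of the circle on $\ub(\shi{1}z)\ub(\shi{\sm2}z)$ and the circle on $\ub(\shi{\sm1}z)\ub(\shi{2}z)$. This is the reflection picture of Figure~\ref{fig:orthodiag}: $\ub(\shi{\pm3}z)$ is the mirror image of $\uw(z)$ in a diagonal of the Varignon rectangle.

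\emph{Normalization.} Both sides of \eqref{eq:focalformone} and \eqref{eq:focalformtwo} transform covariantly under a similarity $M(\zeta)=\lambda\zeta+\mu$: indeed $\frac{M(a)M(b)-M(c)M(d)}{M(a)+M(b)-M(c)-M(d)}=\lambda\frac{ab-cd}{a+b-c-d}+\mu=M\!\bigl(\frac{ab-cd}{a+b-c-d}\bigr)$, and the left-hand sides are points, so it is enough to prove the identities after a similarity. Since $h$-Miquel maps are preserved by rotations, translations and scalings, I would apply the one sending $\uw(z)$ to $0$ and the diagonal $\ub(\shi{1}z)\ub(\shi{\sm1}z)$ onto the real axis; the other diagonal, being perpendicular, then lands on the imaginary axis, so $\ub(\shi{1}z),\ub(\shi{\sm1}z)\in\R$ and $\ub(\shi{2}z),\ub(\shi{\sm2}z)\in i\R$.

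\emph{Computation.} A circle with diameter endpoints $p,q$ passing through $0$ (equivalently $p\bar q+\bar p q=0$) has equation $z\bar z=\tfrac12\bigl(z(\bar p+\bar q)+\bar z(p+q)\bigr)$, so its points $z\neq 0$ satisfy $\bar z=z(\bar p+\bar q)/(2z-(p+q))$. Imposing this for the two circles on the diameters $\ub(\shi{1}z)\ub(\shi{2}z)$ and $\ub(\shi{\sm1}z)\ub(\shi{\sm2}z)$ and eliminating $\bar z$ — equivalently, intersecting the two circles along their radical axis, a line through $0$ and through $\ub(\shi{3}z)$ — gives a linear equation in $z$; substituting back the coordinates and using that $\ub(\shi{\pm1}z)$ are real and $\ub(\shi{\pm2}z)$ are imaginary (so every conjugate is explicit), its nonzero root is the claimed rational expression for $\ub(\shi{3}z)$ in \eqref{eq:focalformone}. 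The same one-line elimination with the diameters $\ub(\shi{1}z)\ub(\shi{\sm2}z)$ and $\ub(\shi{\sm1}z)\ub(\shi{2}z)$ yields \eqref{eq:focalformtwo}.

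\emph{Main obstacle.} The algebra itself is routine; the step that needs care is the bookkeeping in the geometric set-up — reading off correctly, from the combinatorics of the Miquel map and the rectangle condition, which side of the quad is a diameter of which of the four circles, and hence which pair of circles meets in $\ub(\shi{3}z)$ and which in $\ub(\shi{\sm3}z)$. No degenerate sub-cases arise, since genericity is assumed throughout.
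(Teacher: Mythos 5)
Your route is genuinely different from the paper's: the paper adjoins the two diagonals of the orthodiagonal quad (lines through $\uw(z)$ whose second common point is $\infty$) to two of the four circles and invokes the Clifford four--circle multi-ratio identity, whereas you use the diameter property to write each circle as $\bar w = w(\bar p+\bar q)/(2w-(p+q))$ and intersect along the radical axis. Your geometric bookkeeping is also correct: $\ub(\shi{3}z)$ is the second common point of $c(\shi{1,2,3}z)$ and $c(\shi{3}z)$, i.e.\ of the circles on the diameters $\ub(\shi{1}z)\ub(\shi{2}z)$ and $\ub(\shi{\sm1}z)\ub(\shi{\sm2}z)$, and $\ub(\shi{\sm3}z)$ is the second common point of the circles on the other pair of opposite sides; this follows from the incidences of the black tetrahedra and is what Figure~\ref{fig:orthodiag} depicts.

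The gap is in the last sentence of your computation: the elimination you set up does \emph{not} return the expression in \eqref{eq:focalformone}. Write $A=\ub(\shi{1}z)$, $B=\ub(\shi{2}z)$, $C=\ub(\shi{\sm1}z)$, $D=\ub(\shi{\sm2}z)$, normalized as you propose with $A,C\in\R$, $B,D\in i\R$ and $\uw(z)=0$. Equating your two expressions for $\bar w$ on the circles with diameters $AB$ and $CD$ gives
\begin{equation*}
2w\bigl((\bar A+\bar B)-(\bar C+\bar D)\bigr)=(\bar A+\bar B)(C+D)-(\bar C+\bar D)(A+B),
\end{equation*}
and substituting $\bar A=A$, $\bar C=C$, $\bar B=-B$, $\bar D=-D$ yields
\begin{equation*}
w=\frac{AD-BC}{A+D-B-C},
\end{equation*}
which is the right-hand side of \eqref{eq:focalformtwo}, not of \eqref{eq:focalformone}. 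The two expressions are genuinely different: for $A=1$, $B=i$, $C=-2$, $D=-3i$ the circles on the diameters $AB$ and $CD$ re-intersect at $(4-3i)/25$, while $(AB-CD)/(A+B-C-D)=(-4-3i)/5$. So your argument, carried to the end, proves the two identities with their right-hand sides interchanged relative to the statement --- precisely the bookkeeping point you yourself flagged as the main obstacle, and it does not close as written. You must either locate an error in the circle pairing (I cannot find one) or conclude that the formulas come out swapped. For what it is worth, the swapped assignment is also the one under which the first display in the proof of Theorem~\ref{th:harmxy} checks out, so the mismatch appears to originate in the statement; but a proof has to confront this explicitly rather than assert, without doing the algebra, that the root equals the claimed expression.
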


\proof{
	A quick calculation shows that Equation~\eqref{eq:focalformone} is equivalent to
	\begin{align}
		\mr(\ub(\shi{3}z), \ub(\shi{1}z), \ub(\shi{\sm1}z) , \infty, \ub(\shi{2}z), \ub(\shi{\sm2}z)) = -1.
	\end{align}
	Additionally, we note that the six points involved in this multi-ratio form a Clifford configuration, see also the proof of Theorem~\ref{th:udskp} and Figure~\ref{fig:cpathree}. Therefore, as explained in the proof of Theorem~\ref{th:udskp}, the multi-ratio equation holds. The proof of Equation~\eqref{eq:focalformtwo} proceeds analogously.\qed
}

\begin{theorem} \label{th:harmxy}
	The $\Xb$-variables and $Y$-variables coincide, that is $Y(z) = \Xb(z)$ for all $z\in \lat$.
\end{theorem}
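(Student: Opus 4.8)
The plan is to prove the identity first on the slice $\{z\in\lat:z_3=0\}$ by a direct computation that trades circle centers for intersection points, and then to propagate it to all of $\lat$ using that $Y$ and $\Xb$ solve the same Y-system. The single geometric fact that does all the work is that, in an $h$-Miquel map, the $h$-condition turns the relevant circles near level $0$ into circles whose centers are midpoints of pairs of intersection points.

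First I would translate the $h$-Miquel hypothesis into midpoint formulas for circle centers. Fix $z\in\Z^3_-$ with $z_3=0$. As recalled before Lemma~\ref{lem:focalformula}, the points $\ub(\shi{1}z),\ub(\shi{2}z),\ub(\shi{\sm 1}z),\ub(\shi{\sm 2}z)$ form an orthodiagonal quad whose diagonals $\ub(\shi{1}z)\ub(\shi{\sm 1}z)$ and $\ub(\shi{2}z)\ub(\shi{\sm 2}z)$ cross orthogonally at $\uw(z)$, and $\uw(z)$ lies on each of $c(\shi{1}z),c(\shi{2}z),c(\shi{3}z),c(\shi{1,2,3}z)$. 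Since $\uw(z)$ then sees each pair of opposite quad-edges under a right angle, Thales' theorem identifies each of these four circles as the one having the corresponding quad-edge as a diameter; tracking the black-tetrahedron incidences in $\lat$ one obtains
\begin{align*}
	t(\shi{1}z)&=\tfrac{1}{2}\bigl(\ub(\shi{1}z)+\ub(\shi{\sm 2}z)\bigr), & t(\shi{2}z)&=\tfrac{1}{2}\bigl(\ub(\shi{2}z)+\ub(\shi{\sm 1}z)\bigr),\\
	t(\shi{3}z)&=\tfrac{1}{2}\bigl(\ub(\shi{\sm 1}z)+\ub(\shi{\sm 2}z)\bigr), & t(\shi{1,2,3}z)&=\tfrac{1}{2}\bigl(\ub(\shi{1}z)+\ub(\shi{2}z)\bigr).
\end{align*}
Every level-$0$ and level-$1$ circle of an $h$-Miquel map is a rectangle circle of some level-$0$ white tetrahedron, so these formulas express all circle centers of levels $0$ and $1$ through the level-$0$ values of $\ub$.

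Next I would evaluate both variables on the slice $z_3=0$. For $w\in\lat$ with $w_3=0$ the octahedron attached to $Y(w)$ is $w+e_3$, so by Definition~\ref{def:yvariables} the number $Y(w)$ is a ratio of differences of $t(w)$ and four level-$1$ centers; substituting the midpoint formulas above and cancelling the factors $\tfrac{1}{2}$ turns $Y(w)$ into an explicit rational function of level-$0$ values of $\ub$. On the other hand $\Xb(w)$, by \eqref{eq:xbvars}, is the cross-ratio of four intersection points on $c(w)$, two at level $0$ and two at level $-1$; Lemma~\ref{lem:focalformula}, applied to the two relevant level-$0$ white tetrahedra, rewrites the two level-$(-1)$ points as focal-point rational expressions in level-$0$ values of $\ub$. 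Thus $Y(w)$ and $\Xb(w)$ become rational functions of the same finite collection of level-$0$ intersection points, and $Y(w)=\Xb(w)$ reduces — after clearing denominators — to a polynomial identity in those points subject to the orthogonality of the quad diagonals; I expect it to collapse substantially, and it can be confirmed by a short symbolic computation, exactly as in the proof of Theorem~\ref{th:wconjugate}. This settles $Y=\Xb$ on $\{z_3=0\}$.

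Finally, by the theorem after Definition~\ref{def:yvariables} and by Theorem~\ref{th:xtoda}, $Y$ and $\Xb$ are both Y-systems, hence solve the same recursion \eqref{eq:ysystems}, which is of second order in the $z_3$-direction; two consecutive slices therefore pin the solution down, so it remains to match $Y$ and $\Xb$ on $\{z_3=1\}$. This follows by the same method: $\Xb$ on $\{z_3=1\}$ is again rational in the level-$0$ $\ub$-values, now pulling the level-$1$ points down via the focal formulas, and $Y$ on $\{z_3=1\}$ — which a priori uses level-$2$ circle centers — is likewise determined by the level-$0$ data, since the focal formulas fix $\ub$ on levels $-1,0,1$ and Theorem~\ref{th:udskp} then propagates $\ub$ to level $2$ and hence fixes the level-$2$ centers; equivalently, one may note that an $h$-Miquel map, and with it both Y-systems, is reconstructed from the level-$0$ orthodiagonal lattice together with one and the same one-dimensional boundary data, via Theorems~\ref{th:udskp} and~\ref{th:xboundary}. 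Either way $Y=\Xb$ on $\{z_3\in\{0,1\}\}$, hence on all of $\lat$. I expect the delicate point to be precisely this last step — arranging that the level-$0$ identity genuinely suffices, since the $h$-Miquel hypothesis is imposed only at $z_3=0$ whereas the conclusion ranges over every level — together with the purely mechanical but heavy rational identity of the previous paragraph, which is where orthodiagonality enters essentially.
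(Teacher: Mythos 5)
Your proposal follows essentially the same route as the paper: it uses the rectangle/midpoint structure of the level-$0$ centers together with Lemma~\ref{lem:focalformula} to reduce both $Y$ and $\Xb$ on the initial layers to rational functions of the level-$0$ intersection points, verifies equality there by direct computation, and then propagates with the common Y-system recursion. The paper merely organizes the slice computation differently, as four ratio identities whose pairwise products give $Y$ and $\Xb$ (and it reaches the $z_3=1$ layer via the downward form \eqref{eq:yy} of the $Y$-variable rather than through level-$2$ centers), so your plan is correct.
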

\proof{
	Using Lemma~\ref{lem:focalformula} and a quick calculation shows that
	\begin{align}
		\frac{t(\shi{123}z) - t(\shi{2}z)}{t(\shi{3}z) - t(\shi{2}z)} &= - \frac{\ub(\shi{\sm 1}z) - \ub(\shi{\sm 3}z)}{\ub(\shi{\sm 3}z) - \ub(\shi{2}z)}, &
		\frac{t(\shi{3}z) - t(\shi{1}z)}{t(\shi{123}z) - t(\shi{1}z)} &= - \frac{\ub(\shi{\sm 2}z) - \ub(\shi{\sm 3}z)}{\ub(\shi{\sm 3}z) - \ub(\shi{1}z)}, \\
		\frac{t(\shi{2}z) - t(\shi{123}z)}{t(\shi{1}z) - t(\shi{123}z)} &= - \frac{\ub(\shi{1}z) - \ub(\shi{ 3}z)}{\ub(\shi{ 3}z) - \ub(\shi{2}z)}, &
		\frac{t(\shi{1}z) - t(\shi{3}z)}{t(\shi{2}z) - t(\shi{3}z)} &= - \frac{\ub(\shi{\sm 1}z) - \ub(\shi{3}z)}{\ub(\shi{3}z) - \ub(\shi{\sm 2}z)}.
	\end{align}
	Each $Y$-variable is defined by two ratios of the kind on the left of the equalities, while each corresponding $\Xb$-variable is defined by two ratios of the kind on the right of the equalities, which proves the claim for initial data. Since $\Xb$-variables and $Y$-variables both constitute a Y-system, they have the same recurrence relation and therefore the claim holds in general.
	\qed
}

Theorem~\ref{th:harmxy} is in a sense surprising, since $\Xb$- and $Y$-variables have different transformation behaviour. In particular, applying a Möbius transformation to an $h$-Miquel map does not yield an $h$-Miquel map. Moreover, the theorem states that even though only the 0-th layer in an $h$-Miquel map needs to consist of rectangles, the equality of $\Xb$- and $Y$-variables holds for the whole $h$-Miquel map, which leads us to the following question.

\begin{question}
	Is there a local geometric characterization of circle patterns such that $\Xb$- and $Y$-variables coincide?
\end{question}

Note that, as explained in Section~\ref{sec:toda}, the $\Xb$- and the $Y$-variables determine the circle pattern up to boundary data. Hence, there is definitely a geometric constraint, the question is whether it may be recognized locally (away from the boundary data), and if so how. 

Let us also add that the $Y$-variables -- considered as cluster variables -- were shown in \cite{klrr} to be in the so called \emph{resistor subvariety}, which was defined in \cite{gkdimers} and is based on Temperley's bijection \cite{temperleytrick, kpwtrees} which represents spanning tree statistics as a special case of dimer statistics. Since $Y$- and $\Xb$-variables coincide, we get the following corollary for free

\begin{corollary}
	The 0-th layer of the $\Xb$ variables is in the resistor subvariety. More explicitly
	\begin{align}
		\Xb(\shi{3}(z)) \Xb(\shi{123}(z))  = \Xb(\shi{1}(z))  \Xb(\shi{2}(z)),
	\end{align}		
	for all $z\in \lat$ with $z_3 = 0$.
\end{corollary}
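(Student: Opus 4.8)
The plan is to derive the Corollary directly from Theorem~\ref{th:harmxy} together with the already-recorded fact that the $Y$-variables of an $h$-Miquel map lie in the resistor subvariety. By Theorem~\ref{th:harmxy} we have $\Xb(w) = Y(w)$ at every vertex $w \in \lat$, so, applied at the four vertices $\shi{1}(z), \shi{2}(z), \shi{3}(z), \shi{123}(z)$, the asserted identity
\begin{align*}
\Xb(\shi{3}(z))\,\Xb(\shi{123}(z)) = \Xb(\shi{1}(z))\,\Xb(\shi{2}(z))
\end{align*}
is, term by term, the identity $Y(\shi{3}(z))\,Y(\shi{123}(z)) = Y(\shi{1}(z))\,Y(\shi{2}(z))$. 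Hence it suffices to know that the $0$-th layer of the $Y$-variables of an $h$-Miquel map satisfies this multiplicative relation, and this is exactly the statement — proven in \cite{klrr} via Temperley's bijection (see also \cite{gkdimers, temperleytrick, kpwtrees}) — that the $Y$-variables of a harmonic embedding lie in the resistor subvariety.

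For a self-contained argument one need not invoke \cite{klrr}: the relation can be read off from the four ratio identities established in the proof of Theorem~\ref{th:harmxy}. There each of $Y(\shi{1}(z)), Y(\shi{2}(z)), Y(\shi{3}(z)), Y(\shi{123}(z))$ is written as a product of two ratios of differences of the four points $t(\shi{3}(z)), t(\shi{1}(z)), t(\shi{2}(z)), t(\shi{123}(z))$, which by Definition~\ref{def:hmiquelmap} are the vertices of a rectangle. Because opposite sides of a rectangle agree as complex numbers, forming the quotient $Y(\shi{3}(z))\,Y(\shi{123}(z)) \big/ \bigl(Y(\shi{1}(z))\,Y(\shi{2}(z))\bigr)$ cancels the edge-vectors in pairs and leaves $1$; alternatively one transports the computation to the $\ub$-side via Lemma~\ref{lem:focalformula} and checks the same cancellation there. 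Either route is a short direct calculation.

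Since there is essentially no analytic content beyond Theorem~\ref{th:harmxy} and the cited resistor-subvariety result, the only point requiring care is combinatorial bookkeeping: one must check that the four vertices $\shi{1}(z), \shi{2}(z), \shi{3}(z), \shi{123}(z)$ of the Corollary are precisely the vertices of the white tetrahedron whose circle centers form the rectangle of Definition~\ref{def:hmiquelmap}, and that the grouping $\{\shi{3}(z), \shi{123}(z)\}$ against $\{\shi{1}(z), \shi{2}(z)\}$ is the pairing of the two diagonals of that rectangle — so that it matches the resistor relation $Y_0(v_1)Y_0(v_3) = Y_0(v_2)Y_0(v_4)$ around a harmonic face as stated in \cite{klrr}. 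This is immediate from the cyclic order of the rectangle depicted in Figure~\ref{fig:orthodiag}, after which the Corollary follows at once.
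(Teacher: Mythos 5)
Your main argument is exactly the paper's: the corollary follows ``for free'' by combining Theorem~\ref{th:harmxy} ($\Xb = Y$) with the result of \cite{klrr} that the $Y$-variables of an $h$-Miquel map lie in the resistor subvariety. The additional self-contained cancellation sketch via the rectangle of Definition~\ref{def:hmiquelmap} is a reasonable bonus but not needed; the proposal is correct and matches the paper's proof.
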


\section{Circle packings, $s$-embeddings, $s$-Miquel maps} \label{sec:packings}

\begin{definition} \label{def:smiquelmap}
	An \emph{$s$-Miquel map} is a Miquel map $(c,p)$ such that $c(z)$ is tangent to $c(z')$ whenever $|z-z'| = \sqrt{2}$ and $z_3 = z'_3 = 0$.
\end{definition}

\begin{figure}
	\centering
	\includegraphics[scale=0.65]{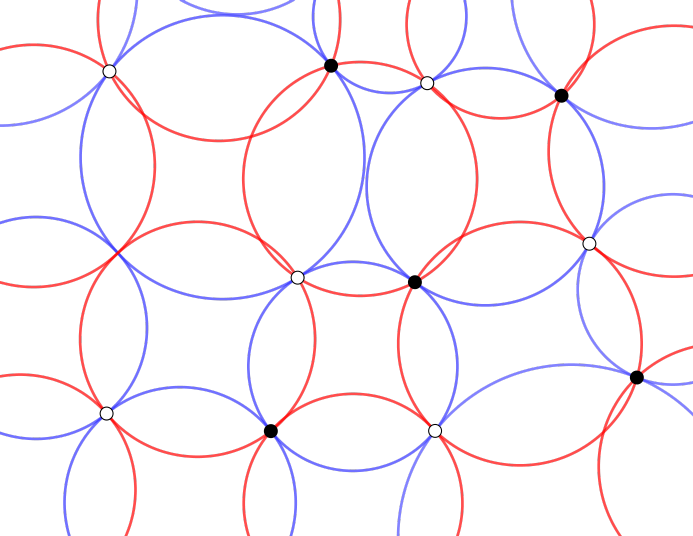}
	\hspace{2mm}
	\includegraphics[scale=0.65]{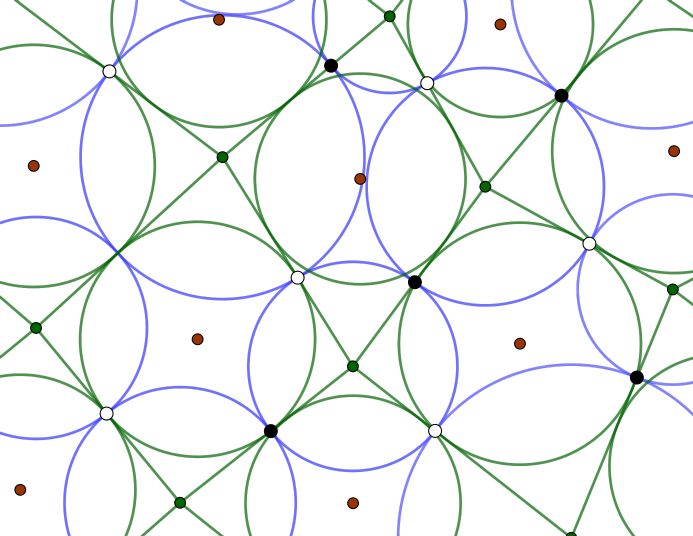}
	\caption{Left: the blue circles are a circle packing. Right: the blue circles are the same as on the left The green points are the centers of the blue circles and are called the $s$-embedding. The green quads have green incircles, the green circles are concentric with the red circles.}
	\label{fig:circlepacking}
\end{figure}

The restriction of an $s$-Miquel map to the circles $c(z)$ with $z_3 = 0$ is called a \emph{circle packing}. The $t$-map of $(c_0,p_0)$ is called an \emph{$s$-embedding} (with some additional embedding requirements, see \cite{chelkaksembeddings}).

\begin{remark}
	Note that due to the so called \emph{touching coins lemma}, every circle packing gives rise to an $s$-Miquel map.
\end{remark}

The definition of $s$-Miquel maps implies that $p(i,j,0) = p(i,j,-1)$ for all $i,j \in \Z$, since these are the tangency points. As a result $(c_0,p_0)$ coincides with $(c_{-1},p_{-1})$. Therefore forwards Miquel dynamics starting from $(c_0,p_0)$ generates the same circle patterns as backwards Miquel dynamics starting from $(c_{-1},p_{-1})$ does. Hence there is a global symmetry in $s$-Miquel maps, namely that
\begin{align}
	 p((\shi{3})^k z) &=  p((\shi{\sm 3})^{k+1} z), & c( (\shi{3})^k z) &=  c((\shi{\sm 3})^{k} z),
\end{align}
for all $z$ in $\Z^3$ resp.~$\lat$ with $z_3=0$. In terms of $Y$-variables, we obtain the following theorem.

\begin{theorem}\label{th:ssymmetryy}
	In an $s$-Miquel map the $Y$-variables satisfy 	
	\begin{align}
		Y((\shi{3})^k z) =  Y^{-1}((\shi{\sm3})^{k} z),
	\end{align}
	for all $z \in \lat$ with $z_3 = -1$ and $k\in \Z$, $k > 0$. 
\end{theorem}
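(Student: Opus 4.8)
The plan is to leverage the global symmetry of $s$-Miquel maps that was just established, namely that $p((\shi{3})^k z) = p((\shi{\sm 3})^{k+1} z)$ and $c((\shi{3})^k z) = c((\shi{\sm 3})^{k} z)$ for all lattice points with vanishing third coordinate. The key observation is that this symmetry is an involution on the Miquel map: reflecting the layer index $k \mapsto -k-1$ for the intersection points (and $k \mapsto -k$ for circles) maps the Miquel map to itself. Concretely, I would define a reflection map $\rho$ on $\lat$ (and compatibly on $\oct$ and $\tet$) that sends a point $z$ with third coordinate $z_3$ to the point with third coordinate $-z_3$ (for circles) or the appropriate shift thereof (for intersection points), keeping the first two coordinates fixed, and check that $t \circ \rho = t$ and $p \circ \rho = p$ on the relevant subsets — this is exactly the content of the displayed global symmetry equations.

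The next step is to track what this reflection does to the defining formula of the $Y$-variables. Recall from Definition~\ref{def:yvariables} that
\begin{align}
	Y(z) = - \frac{(t(\shi{1,3} z) - t(z))(t(\shi{\sm 1, 3} z) - t(z))}{(t(\shi{2,3} z) - t(z))(t(\shi{\sm 2, 3} z) - t(z))},
\end{align}
for $z \in \oct$. For a point $z$ with $z_3 = -1$, the octahedron center of $(\shi{3})^k z$ has third coordinate $k-1$, and applying the reflection symmetry identifies its six circle-center vertices with those of the octahedron at $(\shi{\sm 3})^k z$, but with the roles of the $+3$ and $-3$ directions swapped. Under the shift $z_3 \mapsto -z_3$, the vertex $\shi{1,3} z$ of the first octahedron is identified with the vertex $\shi{1, \sm 3}$ of the reflected octahedron, and so on. Comparing with Equation~\eqref{eq:yy}, which expresses $Y$ at a point with $z_3 = 1$ using $\shi{2, \sm 3}$-type shifts rather than $\shi{2,3}$-type shifts, one sees that the effect of the swap $3 \leftrightarrow \sm 3$ combined with the symmetry of the underlying $t$-map is precisely to interchange numerator and denominator in the $Y$-variable formula, i.e.\ to invert $Y$. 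This gives $Y((\shi{3})^k z) = Y^{-1}((\shi{\sm 3})^k z)$ as claimed.

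The main obstacle I anticipate is bookkeeping: getting the index shifts exactly right so that the reflection $\rho$ genuinely identifies the vertex set of the octahedron at $(\shi{3})^k z$ with that at $(\shi{\sm 3})^k z$ in the correct order, and confirming that the swap of directions $1 \leftrightarrow 1$, $2 \leftrightarrow 2$ is trivial while $3 \leftrightarrow \sm 3$ is the nontrivial transposition responsible for the inversion. One subtlety is that the reflection on intersection points is shifted relative to the reflection on circles (the $k+1$ versus $k$ discrepancy), so one must be careful that the octahedron-level reflection is consistent — but since the $Y$-variables depend only on $t$ (on circle centers), and the circle-level symmetry $c((\shi{3})^k z) = c((\shi{\sm 3})^{k} z)$ has no shift, this should go through cleanly. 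A cleaner alternative, if the direct index-chase becomes unwieldy, is to note that the $Y$-variables form a Y-system by Theorem~\ref{th:ytotoda}'s predecessor, so it suffices to verify $Y((\shi{3})^k z) = Y^{-1}((\shi{\sm 3})^k z)$ on one initial layer (say $k = 1$, using Equations~\eqref{eq:yvariable} and~\eqref{eq:yy} together with the $s$-symmetry $p(i,j,0) = p(i,j,-1)$) and then observe that the Y-system recurrence~\eqref{eq:ysystems} is invariant under the simultaneous substitution $z_3 \mapsto -z_3$, $U \mapsto U^{-1}$, so the relation propagates to all $k$.
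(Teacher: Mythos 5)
Your proposal is correct and follows essentially the same route as the paper's (one-line) proof: the paper likewise combines the global symmetry $c((\shi{3})^k z) = c((\shi{\sm 3})^{k} z)$ (hence $t((\shi{3})^k z) = t((\shi{\sm 3})^{k} z)$) with Definition~\ref{def:yvariables} and Equation~\eqref{eq:yy}, the latter being exactly the numerator--denominator interchange that you identify as the source of the inversion. Your remarks that only the unshifted circle-level symmetry is needed because $Y$ depends solely on $t$, and that one could alternatively verify a single layer and propagate using the invariance of the Y-system recurrence under $z_3 \mapsto -z_3$, $U \mapsto U^{-1}$, are both consistent with (and no less detailed than) the paper's argument.
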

\proof{
	Combining the symmetries in the $s$-Miquel map with Definition~\ref{def:yvariables} and Equation~\eqref{eq:yy} yields the claim. \qed
}

Setting $k=1$, Theorem~\ref{th:ssymmetryy} combined with the propagation equation for $Y$-variables (Equation~\eqref{eq:ysystems}) puts a constraint on the $Y$-variables $Y(z)$ with $z_3 = 0$. This constraint is known as the \emph{Ising subvariety}, see \cite{kenyonpemantle}.

In terms of $X$- and $W$-variables, we obtain the following theorem.

\begin{theorem}\label{th:ssymmetryxw}
	In an $s$-Miquel map the $X$ and $W$-variables satisfy 	
	\begin{align}
		\Xb((\shi{3})^k z) &=  \Xw((\shi{\sm3})^{k} z),\\
		\Wb((\shi{3})^k z) &=  \Ww((\shi{\sm3})^{k} z),
	\end{align}
	for all $z \in \lat$ with $z_3 = -1$ and $k\in \Z$. 
\end{theorem}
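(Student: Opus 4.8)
The plan is to exploit the global time-symmetry of $s$-Miquel maps that was already established in the paragraph preceding the theorem, namely the identities $p((\shi{3})^k z) = p((\shi{\sm 3})^{k+1} z)$ and $c((\shi{3})^k z) = c((\shi{\sm 3})^{k} z)$ for $z$ with $z_3 = 0$. The key observation is that this symmetry is a reflection of the lattice $\lat \cup \tet$ fixing the slab $z_3 \in \{-1, 0\}$ (more precisely, the map that reflects the third coordinate about $-\tfrac12$ for the vertices of $\lat$ and about $-\tfrac12$ for the tetrahedra as well, suitably interpreted). Concretely, I would first restate the symmetry as: there is an involution on $\hlat$ that sends a vertex $w \in \lat$ with $w_3$-coordinate $m$ to the vertex with the same first two coordinates and third coordinate $-m$, and sends a tetrahedron $T \in \tet$ with $T_3 = m$ to the tetrahedron with the same first two coordinates and $T_3 = -m-1$; and that under this involution the $c$-map is invariant and the $p$-map is invariant. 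This is exactly the content of the displayed equations for $p$ and $c$ above, extended by iteration to all layers.

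Next I would check how this involution acts on $\tetb$ versus $\tetw$. Since a tetrahedron $T \in \tetb$ has $T_1 + T_2 + T_3$ of a fixed parity class and the involution $T \mapsto (T_1, T_2, -T_3 - 1)$ changes $T_3$ to $-T_3-1$, the parity of the coordinate sum flips; hence the involution swaps $\tetb$ and $\tetw$. Therefore $\ub$ and $\uw$ are interchanged by the involution: $\ub(T) = \uw(T')$ where $T'$ is the image of $T$. Now the theorem is just a matter of tracking what the $\Xb$- and $\Wb$-variables become under this swap. Recall from Definition~\ref{def:xvariables} that $\Xb(z)$ is $-\cro$ of four $\ub$-values at the four black tetrahedra incident to $z$ from ``below'' (the shifts $\shi{\sm1,\sm2}z$, $\shi{\sm2,\sm3}z$, $z$, $\shi{\sm1,\sm3}z$), while $\Xw(z)$ is $-\cro$ of four $\uw$-values at the four white tetrahedra incident to $z$ (the shifts $\shi{\sm1,\sm2,\sm3}z$, $\shi{\sm2}z$, $\shi{\sm3}z$, $\shi{\sm1}z$). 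I would verify that applying the coordinate involution to the vertex $(\shi{3})^k z$ (which has third coordinate $k-1$, since $z_3 = -1$) produces the vertex $(\shi{\sm3})^{k-1+1}z = (\shi{\sm3})^{k}z$ — wait, one must be careful: a vertex of $\lat$ with $z_3 = -1$ maps under the involution $w_3 \mapsto -w_3$ to third coordinate $+1$; but the relevant reflection here is the one fixing the slab $\{-1,0\}$, i.e. $w_3 \mapsto -1 - w_3$, so $(\shi{3})^k z$ with third coordinate $k-1$ maps to third coordinate $-1-(k-1) = -k = $ third coordinate of $(\shi{\sm3})^{k-1}z$... The precise bookkeeping of which reflection and which shift is the step that needs care, so I would do it explicitly once and for all and then the four black tetrahedra around $(\shi{3})^k z$ map bijectively to the four white tetrahedra around the reflected vertex, with $\ub$-values equal to the corresponding $\uw$-values by the symmetry of $p$; the cross-ratio of the images equals the cross-ratio of the originals (cross-ratio being insensitive to the reindexing as long as the cyclic order is preserved, which it is because the reflection preserves the combinatorial incidence pattern), giving $\Xb((\shi{3})^k z) = \Xw((\shi{\sm3})^{k}z)$. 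The identical argument with $\oct$ in place of $\lat$ and Definition~\ref{def:wvariables} in place of Definition~\ref{def:xvariables} gives the $W$-statement, since the involution also maps octahedra with $O_3 = k-1$ to octahedra with $O_3 = -k$ and swaps the roles of $\ub,\uw$ in the two $W$-definitions.

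The main obstacle I anticipate is purely the index bookkeeping: matching up the four specific shift-operator expressions in the $\Xb$-definition with the four in the $\Xw$-definition under the correct reflection of the third coordinate, and confirming that the cyclic order of the four points on the circle $c$ is genuinely preserved (not reversed) so that no extra inversion of the cross-ratio sneaks in — a reversal would turn $\Xb$ into $1/\Xb$ rather than $\Xw$. I would resolve this by drawing the local picture (as in Figure~\ref{fig:ax}) for both $\Xb$ and $\Xw$ at a reflected pair of vertices and checking directly that the reflection of the plane induced on the circle $c(z)$ (orientation-reversing on the circle, but this is compensated because $\Xw$ is defined with the four tetrahedra in the ``opposite'' cyclic arrangement from $\Xb$) sends the one labeled four-tuple to the other. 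Once that single geometric verification is in place, the theorem follows by the uniqueness of propagation (Theorem~\ref{th:xboundary}) — or, even more simply, directly from the pointwise definitions, with no appeal to propagation needed at all.
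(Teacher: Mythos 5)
Your proposal is correct and takes essentially the same route as the paper, whose entire proof is the one-liner ``combine the symmetries of the $s$-Miquel map with Definitions~\ref{def:xvariables} and \ref{def:wvariables}''; you simply spell out the bookkeeping the paper leaves implicit. The one subtlety you flag (a possible cross-ratio inversion from orientation reversal) in fact dissolves: under the reflection $(a,b,m)\mapsto(a,b,-m-1)$ on $\tet$ (which swaps $\tetb$ and $\tetw$ and leaves $p$ invariant) the four tetrahedra appearing in the definition of $\Xb$ at one vertex map onto the four appearing in the definition of $\Xw$ at the reflected vertex \emph{in the same order}, so the two cross-ratios agree literally, with no compensation needed.
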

\proof{
	Combining the symmetries in the Miquel map with Definition~\ref{def:xvariables} or Definition~\ref{def:wvariables} respectively yields the claim. \qed
}

Unlike in the case of the $Y$-variables, it is not clear whether there is an analogue to the Ising subvariety for the $X$- and $W$-variables.

\begin{remark}
	Let us add a final remark for the reader familiar with cube-flips in cluster algebras or the Ising model.
	In the case of $s$-embeddings on general quad-graphs there is a cube-flip, which corresponds to a sequence of mutations in the cluster structure, see \cite{kenyonpemantle, mrtcube}. Since the $Y$-variables are in the Ising subvariety, the recurrence equation for the $Y$-variables is called the \emph{Kashaev recurrence}, a special case of the \emph{hexahedron recurrence} \cite{kenyonpemantle}. We claim here without proof that the corresponding $X$-variables satisfy the hexahedron recurrence as well.
\end{remark}

\bibliographystyle{plain}
\bibliography{references}

\end{document}